\def\R{{\mathbb R}}\def\C{{\mathbb C}}\def\Z{{\mathbb Z}}\def\N{{\mathbb N}}
\newtheorem{lemma}{Lemma}[section]
\newtheorem{proposition}{Proposition}[section]
\newtheorem{theorem}{Theorem}[section]
\newtheorem*{remark}{Remark}
\newtheorem*{remarks}{Remarks}
\newtheorem{definition}{Definition}[section]
\patchcmd{\maketitle}{\@fnsymbol}{\@alph}{}{}  
\newcommand{\monthyeardate}{\ifcase \month \or January\or February\or March\or %
April\or May \or June\or July\or August\or September\or October\or November\or %
December\fi, \number \year}
\font\myfont=cmr12 at 40pt
\title{{\myfont Triviality of mean-field 
$\varphi^4_4$-theories using the flow equations}}
\title{Triviality proof for mean-field 
$\varphi^4_4$-theories}
\author{
  Pierre WANG\thanks{pierre.wang@polytechnique.edu}
  , Christoph KOPPER\thanks{christoph.kopper@polytechnique.edu}
}
\affil{Centre de Physique Théorique (CPHT), CNRS, UMR 7644
Institut Polytechnique de Paris, 91128 Palaiseau, France}
\begin{document}

\date{\monthyeardate}

\maketitle

\begin{abstract}
The differential equations of the Wilson renormalization group are a
powerful tool to study the Schwinger functions of Euclidean quantum 
field theory. In particular renormalization theory can 
be  based entirely on inductively bounding their perturbatively expanded  
solutions. Recently the solutions of these equations for scalar field theory
have been analysed rigorously without recourse to perturbation theory, 
at the cost of restricting to the mean-field approximation \cite{Kopper2022}. 
In particular it was shown there that one-component $\varphi^4_4$-theory is trivial 
if the bare coupling constant of the UV regularized theory is not large. 
This paper presents progress w.r.t. \cite{Kopper2022}:\\ 
1. The upper bound on the bare coupling is sent to infinity and the proof
is extended to  $O(N)$ vector models.\\
2. The unphysical infrared cutoff used in \cite{Kopper2022} 
for technical simplicity is replaced by a physical mass.   
\end{abstract}

\section{Introduction}

\paragraph{}Quantum field theory is the fundamental framework of theoretical
 physics. It comprises both quantum mechanics and special relativity and acts 
as a powerful tool to study systems with a large or infinite number of degrees 
of freedom. Euclidean field theory is used in statistical mechanics in order 
to study critical behavior. Relativistic field theory is related to  
the Euclidean theory via analytic continuation. In perturbative quantum field 
theory, one typically expands the correlation functions which appear in 
transition 
amplitudes, in a power series w.r.t. a coupling constant $\lambda$ which 
represents 
the strength of the interaction. Feynman graph amplitudes are  contributions 
to these correlation functions, they are typically of order $\lambda^V$ if the number 
of vertices is V. 

Such expansions may lead to contributions which are ill-defined. E.g.
 in the $\varphi^4_4$-theory the first order correction to the two-point 
function is UV-divergent. A standard procedure is then to first 
regularize the theory, and to renormalize it afterwards through the introduction
of counterterms. In $\varphi^4_4$-theory in four dimensions, 
one adds counterterms in the lagrangian for the mass, the coupling constant 
and the wavefunction so that the previously divergent graphs eventually become 
finite. 

The differential flow equations of the Wilson renormalization group \cite{Wilson1}-\cite{Wilson2}
were introduced for the first time by Wegner and Houghton in 1972 
\cite{Wegner1972}. Polchinski in his seminal paper \cite{Polchinski1984} 
proved the perturbative 
renormalizability of the scalar $\varphi^4_4$-theory using these 
flow equations. Instead of dealing with the combinatorial complexity of 
Feynman diagrams, he analyzed the Schwinger functions as a whole. 
They are regularized by an UV-cutoff and an infrared cutoff, also called flow parameter. 
The flow equations are differential equations whose solutions are the regularized connected amputated Schwinger functions.
The solutions of the perturbative flow equations can be bounded using  
an inductive scheme. These bounds are sufficiently strong  to prove
renormalizability \cite{Keller_Kopper_Salmhofer_1990}. 
Later on, the proof of perturbative renormalizability
was extended to the massless $\varphi^4_4$-theory \cite{Keller_massless_phi^4}, 
to the non-linear $\sigma$-model \cite{Mitter_Ramadas_sigma}. The flow equations were also used to prove rigourously renormalizability of spontaneously broken $SU(2)$ Yang-Mills 
theory \cite{KOPPER_2009_SU}-\cite{Efremov_2017}, perturbative renormalizability in 
Minkowski space \cite{keller1996minkowski}. Other results in mathematical physics which have been established using the flow equations include
the convergence of the operator product 
expansion in perturbation theory \cite{Hollands_2012_OPE} and local Borel summability of the perturbation expansion for
Euclidean massive $\varphi^4_4$-theory 
\cite{Keller_local_Borel},{\cite{Kopper_Borel}}.

Our paper is concerned with the  so-called triviality of the 
 $\varphi^4$ theory in four dimensions. In the standard model, 
this theory appears as the pure Higgs sector after spontaneous 
symmetry breaking when ignoring the coupling of the Higgs field 
to the gauge fields and to the fermionic fields.
  Aizenman and Fröhlich in \cite{Aizenman1981},\cite{Aizenman2021} and \cite{Fröhlich1982} 
proved the triviality of the continuum limit of lattice regularized 
Euclidean one-component $\varphi^4$-theory 
 in $d>4$ dimensions, in the sense that the truncated four-point function
of the theory vanishes in this limit.
Recently Aizenman and Duminil-Copin extended the proof
to $d=4$ using multi-scale analysis in \cite{Copin2021}. 
The question whether the Standard model is trivial or not remains open,
in particular because the aforementioned proofs did not consider 
scalar fields coupled to other fields such as gauge fields or fermionic fields. It is worth to note that it has been shown that the continuum limit of lattice QED in dimensions greater than four is trivial \cite{QED_lattice_luscher} while the question in four dimensions remains open.

In this paper we take up and extend the work from 
\cite{Kopper2022}. We again restrict to the mean field approximation.
The paper is organized as follows. 
In Sect.\ref{Flow_equations_O(N)_model_general} 
we recall the flow equations for $O(N)$ vector models. 
The mean-field approximation 
of these flow equations is presented in Sect.\ref{MFA_flow_equations_O(N)}. 
In Sect.\ref{Analyticity_CAS_functions} we comment on the analyticity 
of the solutions of the flow equations w.r.t. the flow parameter called 
$\alpha$ and its consequences on the uniqueness of the solutions of the 
flow equations for fixed boundary conditions. In 
Sect.\ref{Triviality} we prove the triviality 
of the mean-field  $O(N)$ vector models, 
which include in particular pure $\varphi_4^4$-theory, 
for any value of the bare coupling 
using the technical IR cutoff from \cite{Kopper2022}. In Sect.\ref{Existence_trivial_solution}, we prove the existence of a trivial solution of the mean-field flow equations then in Sect.\ref{Uniqueness_trivial_solution} we prove the uniqueness of the trivial solution for fixed mean-field boundary conditions.
We end this section commenting on the large $N$ limit  
in Sect.\ref{Large_N_limit}. 
In Sect.\ref{Massive_theory} we replace the technical IR cutoff
from \cite{Kopper2022} by the physical one of a massive theory. 
In Sect.\ref{Massive_flow_equations_general} we derive the flow equations 
in this case, 
in Sect.\ref{Triviality_massive_theory} we again prove triviality of $\varphi^4_4$-theory in the new setting.

\section{Flow equations in the mean-field approximation}\label{Flow_equations_MeanFieldApproximation}

\subsection{The flow equations for the $O(N)$-model}
\label{Flow_equations_O(N)_model_general}

\paragraph{}

We want to analyse self-interacting $N$-component  vector scalar field theories
on four-dimensional Euclidean space. An  $N$-component  vector model with $O(N)$ symmetry  was first introduced by Stanley\cite{Stanley1968} to generalize the Ising model ($N=1$), the XY model ($N=2$) and the 
Heisenberg model $(N=3)$. In the continuum description the scalar field 
$\varphi$ has now $N$ real components 
$\varphi(x)=\left[\varphi_1(x),\cdots,\varphi_N(x)\right]^T\,$, 
and the lagrangian has a global $O(N)$ symmetry. Then the theory has $\Z_2$ symmetry under $\varphi\mapsto-\varphi$. The behaviour of 
the solutions of the $O(N)$ model in the large $N$ limit has been 
studied in detail in \cite{Amit}  and by Moshe and Zinn-Justin in 
\cite{Moshe2003}, carrying out an expansion in powers of $\frac{1}{N}\,$. 
The result is a non-trivial theory which turns out to be the exact solution 
of the spherical model \cite{Stanley_spherical_model_limit}. 
We will follow the steps from 
\cite{Kopper2022} to derive the flow equations for the $O(N)$-model
 and perform the mean-field approximation afterwards. 
Using $O(N)$ symmetry, the form of the mean-field flow equations 
generalizes those of the single-component theory.

To derive the flow equations, we base ourselves on
 \cite{Mueller2003}, \cite{Kopper2022}. 
We adopt the following convention and 
shorthand notation for the Fourier transform 
\[
f(x)=\int_p e^{ipx}\hat{f}(p)\ ,\quad \int_p:=\int\frac{d^4p}{(2\pi)^4}\ .
\]
This implies for the functional derivatives 
$\frac{\delta}{\delta\varphi(x)}$ 
\[
\dfrac{\delta}{\delta\varphi(x)}= (2\pi)^4\int_p e^{-ipx}
\dfrac{\delta}{\delta\hat{\varphi}(p)}\  .
\]

We introduce the following regularized propagator
\begin{equation}\label{Propagator_reg_1}
\hat{C}^{\alpha_0,\alpha}(p,m):=
\dfrac{1}{p^2+m^2}\Big(\exp (-\alpha_0(p^2+m^2))-\exp (-\alpha(p^2+m^2))\Big)
\geq 0\;,
\end{equation}
where $m$ is the mass of the field. Here 
$\alpha_0>0$ acts as an ultraviolet cutoff 
and $\alpha\in[\alpha_0,+\infty)$ is the flow parameter. 
By taking the limits 
$\alpha_0\rightarrow 0$ and $\alpha\rightarrow +\infty$ 
we recover the usual Euclidean propagator in momentum space, namely 
\begin{equation}
  \lim\limits_{\alpha\rightarrow +\infty}\,\lim\limits_{\alpha_0\rightarrow 0}  
\hat{C}^{\alpha_0,\alpha}(p,m)=\frac{1}{p^2+m^2}\ .
\end{equation}

With the chosen convention of the Fourier transform, the regularized propagator 
in position space $C^{\alpha_0,\alpha}(x-y,m)$ (also called covariance) reads
\begin{equation}
   C^{\alpha_0,\alpha}(x-y,m)=\int_p e^{ip(x-y)} \hat{C}^{\alpha_0,\alpha}(p,m)\ .
\end{equation}

The regularized  propagator of the $N$-component 
massive vector scalar field theory 
in momentum space, is then given by  a diagonal  $N\times N$ matrix to be called $C^{\alpha,\alpha_0}_N (x-y,m)$. 
Its elements read
\begin{equation}\label{propagator_0(N)}
    {\hat C}_{N;\;ij}^{\alpha_0,\alpha}(p,m)={\hat C}^{\alpha_0,\alpha}(p,m)\delta_{ij}\;.
\end{equation}
 This regularized propagator is positive and satisfies
\begin{itemize}
    \item ${\hat C}_{N;\;ij}^{\alpha_0,\alpha}(p,m)$ is analytic  w.r.t. $\alpha$.
   \item ${\hat C}_{N;\;ij}^{\alpha_0,\alpha_0}(p,m)=0$ \footnote{The corresponding Gaussian
measure corresponds in this case to a $\delta$-type measure on function space.}.
  \item At $\alpha$ and $i$ fixed,  ${\hat C}_{N;\;ii}^{\alpha_0,\alpha}(p,m)$ falls off
 more rapidly than any power of $\vert p\vert$.
\end{itemize}
We will consider bare interactions of the form
\begin{equation}
    L^\mathcal{V}_{0,N}(\varphi)=\int_{\mathcal{V}} d^4x\Big[b_0(\alpha_0)
\sum_{1\leq i\leq N}(\partial\varphi_i(x))^2
+\sum_{n\in 2\N}c_{0,n}(\alpha_0)\varphi(x)^n\Big]\;,
\end{equation}
where $\varphi^{2n}(x):=(\varphi^2(x))^n$ for $n\geq 1$,
$\varphi^{2}(x):=\sum_{i=1}^N\varphi_i^2(x)\,$, $(\partial\varphi_i(x))^2=\sum_{\mu=0}^3 (\partial_\mu\varphi_i(x))^2$ 
and $\mathcal{V}$ is a finite volume in $\R^4$. 
This bare interaction lagrangian is $O(N)$ invariant.
 The constants 
$b_0(\alpha_0)$, $c_{0,n}(\alpha_0)$ are called the bare couplings. 
The quantities in the sum for $n\geq 6$ are called  irrelevant terms,
 while $b_0(\alpha_0),c_{0,2}(\alpha_0)$ and $c_{0,4}(\alpha_0)$ are 
relevant terms. Generally the relevant terms are required 
in order that the renormalized physical quantities such as the 
renormalized mass or the renormalized  coupling constant are finite upon 
removing the UV cutoff. In the mean-field approximation
to be considered soon the constant
$b_0(\alpha_0)$ vanishes, because in this case
 the field variable $\varphi$ becomes a constant.

The functional integral with the bare lagrangian  $ L^\mathcal{V}_{0,N}(\varphi)$
is well-defined if for some constant 
$C^\mathcal{V}_N\in\R$, depending on $\mathcal{V}$ and $N$
\begin{equation}\label{potential_bounded_from_below}
    -\infty<C^\mathcal{V}<L_{0,N}^\mathcal{V}(\varphi)\;,
\quad \varphi\in\mbox{supp}(\mu^{\alpha_0,\alpha}_N)\;,
\end{equation}
where $\mu_N^{\alpha_0,\alpha}$ is the normalized Gaussian measure associated to 
the propagator $C^{\alpha_0,\alpha}_N$. Some  
properties of Gaussian measures taylored for our purposes,
can be found in \cite{Mueller2003}, more information
can be found in \cite{Glimm1987}. We collected a few  
items in Appendix \ref{appendix_A}. 
The field $\varphi$ is supposed to belong to the support of the 
Gaussian measure $\mu^{\alpha_0,\alpha}_N$. Since the regularized propagator 
$\hat{C}^{\alpha_0,\alpha}(p,m)$ falls off more rapidly than any power of 
$\vert p\vert$ in momentum space, 
its support is contained on  smooth functions in position space, 
see e.g. \cite{Reed1973}, so that the quantities in 
$L_0^\mathcal{V}$ i.e. $\varphi^2(x),\varphi^4(x),\cdots$ are well-defined.
Here we will not discuss the infinite volume limit explicitly, 
for more details see \cite{Mueller2003}.  
It can be taken once we have passed to the 
connected amputated Schwinger functions (see below).  
We will thus drop the subscript $\mathcal{V}$.

The  correlation or Schwinger functions  are defined as
\begin{equation}
    \langle\varphi_{i_1}(x_1)\varphi_{i_2}(x_2)\cdots\varphi_{i_n}(x_n)
\rangle^{\alpha_0,\alpha}:=\dfrac{1}{Z^{\alpha_0,\alpha}_{N}}
\int d\mu^{\alpha_0,\alpha}_N(\varphi) 
e^{-L_{0,N}(\varphi)}\varphi_{i_1}(x_1)\varphi_{i_2}(x_2)\cdots\varphi_{i_n}(x_n)\;,
\end{equation}
where $d\mu^{\alpha_0,\alpha}_N$ is the Gaussian measure associated with 
the regularized propagator (\ref{propagator_0(N)}). 
The generating functional of the regularized
connected amputated Schwinger functions
(CAS) $e^{-L^{\alpha_0,\alpha}_{N}(\varphi)}$ at scale $\alpha$ satisfies 
\begin{equation}\label{effpot}
    e^{-L^{\alpha_0,\alpha}_{N}(\varphi)}
=\dfrac{1}{Z^{\alpha_0,\alpha}_{N}}\int d\mu^{\alpha_0,\alpha}_N(\phi)\
e^{-L_{0,N}(\phi+\varphi)}\ .
\end{equation}
Expanding $L^{\alpha_0,\alpha}_{N}(\varphi)$ in a formal power series 
in $\hat{\varphi}_i$ gives
\begin{equation}\label{expansion_O(N)}
    L^{\alpha_0,\alpha}_{N}(\varphi)=\sum_{n\in 2\N}\sum_{1\leq i_1,\cdots,i_n\leq N}
\int_{p_1,p_2,\cdots, p_n}\Bar{\mathcal{L}}^{\alpha_0,\alpha}_{n;i_1 i_2\cdots i_n}
(p_1,\cdots,p_n)\hat{\varphi}_{i_1}(p_1)\cdots\hat{\varphi}_{i_n}(p_n)\ .
\end{equation}
The CAS distributions $\Bar{\mathcal{L}}^{\alpha_0,\alpha}_{n;i_1 i_2\cdots i_n}
(p_1,\cdots,p_n)$ or moments of $L^{\alpha_0,\alpha}_{N}$ can be factorized 
 due to translation invariance as
\begin{equation}\label{factorization_O(N)}
\bar{\mathcal{L}}^{\alpha_0,\alpha}_{n,i_1 i_2\cdots i_n}(p_1,\cdots,p_n)
=\delta^4\Big(\sum_{i=1}^n p_i\Big)\mathcal{L}^{\alpha_0,\alpha}_{n;i_1 i_2\cdots i_n}
(p_1,\cdots,p_n),\quad p_n=-p_1-\cdots-p_{n-1}\ .
\end{equation}

The CAS functions to be called
$\,\mathcal{L}^{\alpha_0,\alpha}_{n;i_1 i_2\cdots i_n}(p_1,\cdots,p_n)$ are obtained through functional derivation
\begin{equation}
    \dfrac{(2\pi)^{4n}}{n!}\dfrac{\delta}{\delta\hat{\varphi}_{i_1}
(p_1)}\cdots\dfrac{\delta}{\delta\hat{\varphi}_{i_n}(p_n)}
L^{\alpha_0,\alpha}_{N}(\varphi)\vert_{\varphi\equiv 0}
=\delta^4\Big(\sum_{i=1}^n p_i\Big)\mathcal{L}^{\alpha_0,\alpha}_{n;i_1 i_2\cdots i_n}
(p_1,\cdots,p_n)\ .
\end{equation}
The flow equations are obtained on deriving (\ref{effpot})
w.r.t. $\alpha\,$
using the change 
of covariance formula (\ref{differentiation}) 
\begin{equation}\label{FE1_O(N)}
    \partial_\alpha L^{\alpha_0,\alpha}_{N}=\dfrac{1}{2}
\sum_{i=1}^N\Big\langle\dfrac{\delta}{\delta\varphi_{i}},
\dot{C}^{\alpha}\dfrac{\delta}{\delta\varphi_{i}}
\Big\rangle L^{\alpha_0,\alpha}_{N}-\dfrac{1}{2}\sum_{i=1}^N
\Big\langle\dfrac{\delta}{\delta\varphi_{i}}L^{\alpha_0,\alpha}_{N},
\dot{C}^{\alpha}\dfrac{\delta}{\delta\varphi_{i}}L^{\alpha_0,\alpha}_{N}
\Big\rangle\; , 
\end{equation}
where $\Dot{C}^\alpha=\partial_\alpha C^{\alpha_0,\alpha}$. 
Using (\ref{expansion_O(N)}),(\ref{factorization_O(N)}) 
in (\ref{FE1_O(N)}), the flow equations for the moments
$\mathcal{L}^{\alpha_0,\alpha}_{n;i_1 i_2\cdots i_n}$ can be written as
\begin{equation}
\begin{split}
    \partial_{\alpha}\mathcal{L}^{\alpha_0,\alpha}_{n; i_1i_2\cdots i_n}
(p_1,\cdots,p_n) &= \binom{n+2}{2}
\sum_{j=1}^N\int_k \dot{C}^\alpha(k,m)
\mathcal{L}^{\alpha_0,\alpha}_{n+2; i_1 i_2\cdots i_n jj}
(p_1,\cdots,p_n,k,-k) \\
    &-\dfrac{1}{2}\sum_{n_1+n_2=n+2}
\sum_{j=1}^N n_1 n_2\mathbb{S}
\Big[\mathcal{L}^{\alpha_0,\alpha}_{n_1; i_1i_2\cdots i_{n_1-1}j}
(p_1,\cdots,p_{n_1-1},q)\dot{C}^{\alpha}(q,m) \\
    &\mathcal{L}^{\alpha_0,\alpha}_{n_2; ji_{n_1}i_{n_1+1}\cdots i_{n}}
(-q,p_{n_1},\cdots,p_n)\Big]\;,
\end{split}
\end{equation}
with $q:=-p_1-p_2-\cdots-p_{n_1-1}$. Here $\mathbb{S}$ is a
symmetrisation  operator which permutes the pairs 
$(i_j,p_j)$. It averages over 
the permutations $\sigma \in S_n\,$ 
such that $\sigma(1)<\sigma(2)<\cdots<\sigma(n_1-1)$ 
and $\sigma(n_1)<\sigma(n_1+1)<\cdots<\sigma(n)\,$. 
Since we considered a theory with a $\mathbb{Z}_2$-symmetry, 
only even moments (in $n,n_1$ and $n_2$) are nonvanishing.

The FEs are an infinite system of non-linear differential equations, 
the solutions of which are the CAS functions. 
On imposing renormalization conditions, 
one can prove the perturbative renormalizability of the regularized 
theory through an inductive scheme, see \cite{Mueller2003} and
references given there.

\subsection{The mean field approximation for the $O(N)$-model}
\label{MFA_flow_equations_O(N)}
\paragraph{}
In the  mean field approximation, 
the $n$-point functions are assumed to be momentum independent. 
We set
\begin{equation}\label{Ais}
A^{\alpha_0,\alpha}_{n; i_1i_2\cdots i_n}:=
\mathcal{L}^{\alpha_0,\alpha}_{n; i_1 i_2\cdots i_n}(0,\cdots,0)\ . 
\end{equation}
In statistical physics the critical behaviour of Ising type systems 
 in $d>4$ dimensions is exactly obtained \cite{Aizenman2021},
\cite{Fröhlich1982}  in the mean-field approximation. 
We now derive the  mean-field flow equations
following \cite{Kopper2022}.
The mean field effective action $L_{mf}^{\alpha_0,\alpha}(\phi)$ 
is expanded as  a formal power series 
in the constant field $\varphi\in\R$ 
\begin{equation}\label{Mean_effective_action}
    L_{mf}^{\alpha_0,\alpha}(\varphi)=\sum_{n\in 2\N}A_n^{\alpha_0,\alpha}\varphi^n\ .
\end{equation}
We first make the technical simplification from \cite{Kopper2022}
and set $m=0$ in $\hat{C}^{\alpha_0,\alpha}(p,m)$. 
We then analyse the theory in the 
interval $\alpha\in [\alpha_0,\frac{1}{m^2}]$ so that the upper limit
on $\alpha$ takes the role of the infrared cutoff, 
thus replacing the mass. 
The existence of the UV-limit means that the mean-field solutions 
have a finite limit at $\alpha=1$\footnote{we choose units such that $m^2=1$.}, when the UV-cutoff 
$\frac{1}{\alpha_0}$ is sent to infinity. 

The regularized propagator then reads
\begin{equation}
    \dfrac{e^{-\alpha_0p^2}-e^{-\frac{1}{m^2}p^2}}{p^2}
\underset{p^2\ll 1}{=}\frac{1}{m^2}-\alpha_0+O(p^2)\ .
\end{equation}
Then the mean-field flow equations read
\begin{equation}\label{FE_O(N)_2}
\begin{split}
    \partial_{\alpha}A^{\alpha_0,\alpha}_{n; i_1 i_2\cdots i_n} 
&= \binom{n+2}{2}\frac{c}{\alpha^2}
\sum_{j=1}^N A^{\alpha_0,\alpha}_{n+2; i_1 i_2\cdots i_n jj} \\
&-
\dfrac{1}{2}\sum_{n_1+n_2=n+2} n_1 n_2\sum_{j=1}^N
\mathbb{S}\Big[A^{\alpha_0,\alpha}_{n_1; i_1i_2\cdots i_{n_1-1}j}
A^{\alpha_0,\alpha}_{n_2; ji_{n_1} i_{n_1+1}\cdots i_{n}}\Big],\quad c:=\frac{1}{16\pi^2} \ .
\end{split}
\end{equation}  

Without any further input, 
the flow equations (\ref{FE_O(N)_2})  
do not allow to construct 
inductively the CAS functions because we can only compute the 
contraction of $A^{\alpha_0,\alpha}_{n+2,i_1 i_2\cdots i_{n+2}}$ 
w.r.t. its last two indices from the CAS functions 
$A^{\alpha_0,\alpha}_{n'; i_1 i_2\cdots i_{n'}}$, $n'\leq n$. 
The mean-field solutions for the $\,O(N)\,$ model of the flow 
equations satisfy by assumption the following properties:
\begin{itemize}
    \item \textbf{(P1)}: $A^{\alpha_0,\alpha}_{n; i_1i_2\cdots i_n}=0$ if $n$ is odd.
    \item \textbf{(P2)}: $A^{\alpha_0,\alpha}_{n; i_1i_2\cdots i_n}$ is symmetric under 
any permutation of its indices $i_1,i_2,\cdots , i_n$ .
    \item \textbf{(P3)}: $A^{\alpha_0,\alpha}_{n; i_1 i_2\cdots i_n}$ is
$O(N)$-invariant in the following sense: 
let $O$ be an orthogonal matrix i.e. $O^TO=O^TO=I$, then 
    \begin{equation}
        O_{i_1 j_1}O_{i_2 j_2}\cdots O_{i_n j_n}
A^{\alpha_0,\alpha}_{n; j_1 j_2\cdots j_n}=A^{\alpha_0,\alpha}_{n; i_1 i_2\cdots i_n}\ .
    \end{equation}
\end{itemize}
Properties \textbf{(P1)}, \textbf{(P2)} and \textbf{(P3)} 
require knowledge on the $O(N)$-invariant symmetric tensors. Some facts are collected in Appendix \ref{Cartesian_Tensors}. We recall the symmetric part of a rank $n$-tensor \textbf{T} by
\begin{equation}\label{symmetric_part_tensor}
 T_{(i_1i_2\cdots i_n)}:=\frac{1}{n!}\sum_{\sigma\in S_n}
T_{i_{\sigma(1)}i_{\sigma(2)}i_{\sigma(n-1)}i_{\sigma(n)}}\ .   
\end{equation} 
and we define
\begin{equation}
    F_{i_1i_2\cdots i_n}:=\delta_{(i_1i_2}\delta_{i_3i_4}\cdots\delta_{i_{n-1}i_n)}
=\frac{1}{n!}\sum_{\sigma\in S_n}\delta_{i_{\sigma(1)}i_{\sigma(2)}}
\cdots \delta_{i_{\sigma(n-1)}i_{\sigma(n)}}\;.
\end{equation}

From Proposition \ref{tensor_structure}, we set
\begin{equation}
    A^{\alpha_0,\alpha}_{n; i_1 i_2\cdots i_n}=A^{\alpha_0,\alpha}_n F_{i_1i_2\cdots i_n}
\end{equation}
with $A^{\alpha_0,\alpha}_n$ smooth.

The mean field flow equations 
(\ref{FE_O(N)_2}) can now be rewritten as
\begin{equation}\label{FE_O(N)_3}
\begin{split}
    \partial_{\alpha}A^{\alpha_0,\alpha}_{n}  F_{i_1i_2\cdots i_n} 
&= \binom{n+2}{2}\frac{c}{\alpha^2}A^{\alpha_0,\alpha}_{n+2}
\sum_{j=1}^N F_{i_1i_2\cdots i_n jj} \\
    &-\dfrac{1}{2}\sum_{n_1+n_2=n+2}n_1 n_2 A^{\alpha_0,\alpha}_{n_1}
A^{\alpha_0,\alpha}_{n_2}\sum_{j=1}^N 
\mathbb{S}\Big[F_{i_1i_2\cdots i_{n_1-1}j}F_{ji_{n_1} i_{n_1+1}\cdots i_{n}}\Big]\ .
\end{split}
\end{equation}

\paragraph{} From Proposition \ref{contraction}, 
the flow equations (\ref{FE_O(N)_3}) reduce to a much simpler form

\begin{equation}\label{FE_O(N)_4}
\begin{split}
    \partial_{\alpha}A^{\alpha_0,\alpha}_{n} 
&= \binom{n+2}{2}\dfrac{N+n}{n+1}\frac{c}{\alpha^2}A^{\alpha_0,\alpha}_{n+2}
-\dfrac{1}{2}\sum_{n_1+n_2=n+2}n_1 n_2 A^{\alpha_0,\alpha}_{n_1}A^{\alpha_0,\alpha}_{n_2}\ .
\end{split}
\end{equation}

Because of the factor $\binom{n+2}{2}$, 
iterated integration w.r.t. $\alpha$ of approximate solutions
of the infinite dynamical 
system for the functions  $\,A_n^{\alpha_0,\alpha}\,$ 
would produce at each step factors of order $n^2\,$. 
Therefore this procedure is unstable w.r.t. $n$ from 
a combinatorial point of view. As a consequence, 
we will follow the strategy from \cite{Kopper2022}: 
 \begin{itemize}
     \item Start from a smooth two-point function $A_2^{\alpha_0,\alpha}$
\footnote{We will specify the properties of  two-point function $A_2^{\alpha_0,\alpha}$
later on.} and fix boundary conditions at the bare scale $\alpha=\alpha_0\,$.
     \item Smooth solutions $A_n^{\alpha_0,\alpha}$ can then be constructed inductively 
using (\ref{FE_O(N)_4}). Their properties depend on $A_2^{\alpha_0,\alpha}\,$.
    \end{itemize} 

Adopting the change of function and variable 

\begin{equation}\label{definition_f_n}
f_{n}(\mu):=n\alpha^{2-\frac{n}{2}}c^{\frac{n}{2}-1} 
A^{\alpha_0,\alpha}_{n}\;,\quad \mu:=\ln\left(\dfrac{\alpha}{\alpha_0}\right)\;,
\end{equation}
the mean-field flow equations read 
\begin{equation}\label{MFE_O(N)}
    f_{n+2}(\mu)=\dfrac{1}{n+N}
\sum_{n_1+n_2=n+2}f_{n_1}(\mu)f_{n_2}(\mu)+\dfrac{n-4}{n(n+N)}f_n(\mu)
+\dfrac{2}{n(n+N)}\partial_\mu f_n(\mu)\;,\quad n\geq 2\;,
\end{equation}
for $\mu\in[0,\mu_{\max}]\,$ where $\mu_{\max}:=\ln\left(\frac{1}{\alpha_0}\right)$.

\paragraph{} In \cite{Kopper2022} locally analytic smooth solutions
 $f_n(\mu)$, uniformly bounded w.r.t. $\mu$ 
with bare mean-field action locally analytic w.r.t. $\varphi$,
were shown to exist. 
A subclass of these solutions are smooth solutions which  
vanish at $\mu=0$ upon removing the UV-cutoff so that they are 
asymptotically free in the ultraviolet.

\begin{remark}
\begin{itemize}
    \item 
The statements in \cite{Kopper2022} on the local analyticity w.r.t. $\mu$ of uniformly bounded smooth solutions 
$f_n(\mu)$ and the analyticity of the bare-mean field action w.r.t. 
$\varphi$ remain valid for $N>1$.
\end{itemize}
\end{remark}

\subsection{Local analyticity w.r.t. $\alpha$ of the mean-field CAS functions}
\label{Analyticity_CAS_functions}
 \paragraph{} 
We recall that the regularized propagator in momentum space 
$\hat{C}^{\alpha_0,\alpha}(p,m)$ introduced in (\ref{Propagator_reg_1}) 
is analytic w.r.t. $\alpha$. If we construct the  solutions $A_{n+2}^{\alpha_0,\alpha}$ of the FEs 
(\ref{FE_O(N)_2}) as indicated, we have the following analyticity and 
uniqueness statements  
\begin{proposition}
\label{Analyticity_of_correlation_functions}

\begin{itemize}
    \item  Let $A_n^{\alpha_0,\alpha}$ be mean-field smooth solutions 
of (\ref{FE_O(N)_2}). 
The boundary conditions are assumed to be induced by 
a two-point function $A_2^{\alpha_0,\alpha}$ and its derivatives at $\alpha=\alpha_0$ which is locally analytic w.r.t. $\alpha$. 
Then $A_n^{\alpha_0,\alpha}$ is locally 
analytic w.r.t. $\alpha\,$.

\item Let $A_2^{\alpha_0,\alpha}$ and $\Tilde{A}_2^{\alpha_0,\alpha}$ be 
locally analytic w.r.t. $\alpha$.  If $A_n^{\alpha_0,\alpha}$ and 
$\Tilde{A}_n^{\alpha_0,\alpha}$ are constructed from 
$A_2^{\alpha_0,\alpha}$ and $\Tilde{A}_2^{\alpha_0,\alpha}$ respectively,
using the flow equations (\ref{FE_O(N)_2}) 
and if
\begin{equation}
\partial_\alpha ^k \tilde{A}_2^{\alpha_0,\alpha}|_{\alpha =\alpha_0}
=\partial_\alpha ^k A_2^{\alpha_0,\alpha}|_{\alpha =\alpha_0} \ ,\quad k\geq 0\;,
\end{equation}
then we have for arbitrary $\alpha$
\begin{equation}
    \partial_\alpha^k \tilde{A}_n^{\alpha_0,\alpha}
=\partial_\alpha^k A_n^{\alpha_0,\alpha},\quad k\geq 0\,,\quad  n\geq 2\ .
\end{equation}
\end{itemize}
\end{proposition}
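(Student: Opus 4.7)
The plan is to proceed by strong induction on $n$, exploiting the fact that the scalar mean-field flow equation (\ref{FE_O(N)_4}) can be solved algebraically for $A_{n+2}^{\alpha_0,\alpha}$ in terms of the lower CAS functions and $\partial_\alpha A_n^{\alpha_0,\alpha}$:
\begin{equation*}
A_{n+2}^{\alpha_0,\alpha} \;=\; \frac{(n+1)\,\alpha^2}{c\,\binom{n+2}{2}(N+n)}\left[\partial_\alpha A_n^{\alpha_0,\alpha} \;+\; \frac{1}{2}\!\!\sum_{n_1+n_2=n+2}\!\! n_1 n_2\, A_{n_1}^{\alpha_0,\alpha}A_{n_2}^{\alpha_0,\alpha}\right],
\end{equation*}
where the indices in the sum satisfy $2\leq n_1,n_2\leq n$ and are even. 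Hence $A_{n+2}^{\alpha_0,\alpha}$ is a polynomial expression, with coefficients polynomial in $\alpha$, in the functions $A_2^{\alpha_0,\alpha},A_4^{\alpha_0,\alpha},\dots,A_n^{\alpha_0,\alpha}$ together with $\partial_\alpha A_n^{\alpha_0,\alpha}$.

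For the first bullet the induction is then immediate. The base case $n=2$ is the hypothesis. For the inductive step, the class of functions locally analytic w.r.t. $\alpha$ on an open subset of $(0,\infty)$ is closed under addition, multiplication, multiplication by polynomials in $\alpha$, and differentiation; the overall prefactor is a nonzero rational constant. So local analyticity of $A_2,\dots,A_n$ immediately propagates to $A_{n+2}$, proving the first bullet for all even $n$.

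For the second bullet, I would first establish inductively on $n$ that $\partial_\alpha^k A_n^{\alpha_0,\alpha}|_{\alpha=\alpha_0}=\partial_\alpha^k \tilde{A}_n^{\alpha_0,\alpha}|_{\alpha=\alpha_0}$ for every $k\geq 0$. This is a direct Leibniz computation starting from the displayed formula: $\partial_\alpha^k A_{n+2}|_{\alpha_0}$ is a universal polynomial in the values $\partial_\alpha^{k'} A_j|_{\alpha_0}$ with $j\leq n$ and $k'\leq k+1$, which all agree for $A$ and $\tilde{A}$ by the inductive hypothesis. Combined with the first bullet, $A_n$ and $\tilde{A}_n$ are then locally analytic functions with identical Taylor expansions at $\alpha_0$, and therefore coincide on some open neighborhood of $\alpha_0$. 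To propagate this equality to the entire flow interval, I would invoke the standard clopen argument: the set $S_n:=\{\alpha : A_n^{\alpha_0,\cdot}=\tilde{A}_n^{\alpha_0,\cdot}\text{ on some neighborhood of }\alpha\}$ is open by construction, nonempty (it contains $\alpha_0$), and closed because at any accumulation point the identity theorem for real-analytic functions applied on a local analyticity neighborhood forces equality throughout that neighborhood. Connectedness of the flow interval then yields $A_n\equiv\tilde{A}_n$, and differentiation gives equality of all $\alpha$-derivatives. The only genuine subtlety is this last step, and it relies crucially on the first bullet to supply local analyticity at \emph{every} point of the interval, not merely at $\alpha_0$.
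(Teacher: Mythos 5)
Your proposal is correct and follows essentially the same route as the paper: for the first bullet, solve (\ref{FE_O(N)_4}) algebraically for $A_{n+2}^{\alpha_0,\alpha}$ and induct in $n$ using closure of local analyticity under sums, products, polynomial coefficients and differentiation; for the second, match all $\alpha$-derivatives at $\alpha_0$ by a (double) induction and invoke the uniqueness of locally analytic functions with prescribed Taylor data. The only addition is that you spell out the clopen/connectedness continuation needed to pass from a neighbourhood of $\alpha_0$ to the whole flow interval, a step the paper leaves implicit.
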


\begin{proof}
    The proof of the first statement proceeds  by induction in $n$. 
It obviously holds for $n=2$. From (\ref{FE_O(N)_4}), we have
    \begin{equation}
        A_{n+2}^{\alpha_0,\alpha}
=\dfrac{2}{c(n+N)(n+2)}\alpha^2\partial_\alpha A_n^{\alpha_0,\alpha}
+\dfrac{\alpha^2}{c(n+N)(n+2)}\sum_{n_1+n_2=n+2} n_1 n_2 
A_{n_1}^{\alpha_0,\alpha} A_{n_2}^{\alpha_0,\alpha}\ .
    \end{equation}
which implies the statement using the induction hypothesis.

The second statement is proven by induction in $N=n+2k$,
going up in $n$ for fixed $N\,$. It then follows directly 
from the fact that locally analytic functions are uniquely defined by their
Taylor expansions within their radius of convergence,
and from the fact that sums of products of locally analytic functions
 are again locally analytic. 
\end{proof}

Due to Proposition \ref{Analyticity_of_correlation_functions} 
 locally analytic mean-field solutions $A_n^{\alpha_0,\alpha}$ are unique
 for fixed boundary conditions at the bare scale, if we  start
 from a locally analytic two-point function $A_2^{\alpha_0,\alpha}$.

\section{Triviality in the mean field approximation in the presence of an IR cutoff}\label{Triviality}

\paragraph{} Let us recall the notion of triviality in perturbative quantum field theory. From the point of view of perturbative quantum field theory, the effective coupling constant $g(\lambda)$ is a function of the energy scale $\lambda$. Its behaviour is described by the beta function defined by

\begin{equation}
    \beta(g(\lambda)):=\lambda\dfrac{dg}{d\lambda}(\lambda)\;.
\end{equation}
Note that in practice $\beta(g(\lambda))$ can only be calculated to a finite order in the perturbative expansion. In asymptotically free theories, $\beta$ is negative so that the coupling constant vanishes at high energies, $g(\lambda)\longrightarrow 0$ for $\lambda\longrightarrow +\infty$. For non-asymptotically free QFTs, such as QED or $\varphi^4_4$-theory, $\beta$ is positive. Thus the effective coupling constant grows logarithmically with $\lambda$. We define $g(0)$ the renormalized coupling constant and $g(\Lambda)$ the bare coupling where $\Lambda$ is the UV-cutoff. Keeping the value of $g(\Lambda)$ fixed and removing the UV-cutoff, if one obtains $g(0)=0$, the theory is said to be trivial or Gaussian. Another manifestation of triviality stems from the so-called Landau pole. The effective coupling constant $g(\lambda)$ grows if $\beta(g)$ is positive. It diverges at a finite $\lambda_L$ called the Landau pole. Of course perturbation theory is no more reliable at this point. The singularity disappears for $g(0)\longrightarrow 0$ thus implying triviality. In our context, we use the logarithmic energy scale $\mu$ and we say that the theory is trivial if
\begin{equation}
    \lim\limits_{\mu_{\max}\rightarrow +\infty} f_4(\mu_{\max})=0\;,
\end{equation}
while keeping the bare value $f_4(0)$ fixed. Note that the limit $\mu_{\max}\rightarrow +\infty$, i.e. $\alpha_0\rightarrow 0$ corresponds to removing the UV-cutoff.

\paragraph{} Now we turn to the triviality of the pure mean-field $O(N)$ $\varphi^4_4$-theory. First we prove the existence of solutions of (\ref{MFE_O(N)}) which vanish in the UV-limit for fixed mean-field boundary conditions. Then we will prove the uniqueness of the mean-field trivial solution. 

\subsection{Existence of smooth trivial solutions of the mean-field FE}
\label{Existence_trivial_solution}

\paragraph{}
We consider the following bare lagrangian without irrelevant terms i.e. $c_{0,n}=0$, $n\geq 6$
\begin{equation}\label{bare_lagrangian_trivial}
    L_{0,N}^\mathcal{V}(\varphi)=\int_\mathcal{V} d^4x \Big( c_{0,2}\varphi^2(x)+c_{0,4}\varphi^4(x)\Big)
\end{equation}
and the following (fixed) mean-field boundary conditions following from (\ref{Ais}), (\ref{Mean_effective_action}), (\ref{definition_f_n}) and (\ref{bare_lagrangian_trivial}):

\begin{equation}\label{BDY_trivial_field}
    f_2(0)=2(2\pi)^4\alpha_0 c_{0,2},\quad f_4(0)=4\pi^2 c_{0,4},\quad f_n(0)=0,\quad n\geq 6\;.
\end{equation}

\paragraph{} A direct consequence of (\ref{BDY_trivial_field}) is

\begin{lemma}\label{factorization_of_f_n}
    For smooth solutions $f_n(\mu)$ of (\ref{MFE_O(N)}) with boundary conditions (\ref{BDY_trivial_field}), we have
    \begin{equation}
        \partial_\mu^l f_n(0)=0\;,\quad n\geq 6,\; 0\leq l\leq\frac{n}{2}-3\;.
    \end{equation}
\end{lemma}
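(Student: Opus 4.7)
The plan is to rearrange the mean-field flow equation (\ref{MFE_O(N)}) algebraically and then proceed by induction on the order $l$ of differentiation. Solving (\ref{MFE_O(N)}) for $\partial_\mu f_n$ gives
\[
\partial_\mu f_n(\mu) = \tfrac{n(n+N)}{2}\,f_{n+2}(\mu) - \tfrac{n-4}{2}\,f_n(\mu) - \tfrac{n}{2}\sum_{n_1+n_2=n+2}f_{n_1}(\mu)f_{n_2}(\mu),\quad n\geq 2,
\]
and applying $\partial_\mu^l$ together with the Leibniz rule, then evaluating at $\mu=0$, yields
\[
\partial_\mu^{l+1} f_n(0) = \tfrac{n(n+N)}{2}\partial_\mu^l f_{n+2}(0) - \tfrac{n-4}{2}\partial_\mu^l f_n(0) - \tfrac{n}{2}\sum_{n_1+n_2=n+2}\sum_{k=0}^l\binom{l}{k}\partial_\mu^k f_{n_1}(0)\,\partial_\mu^{l-k} f_{n_2}(0).
\]

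The base case $l=0$ is exactly the boundary condition (\ref{BDY_trivial_field}), which asserts $f_n(0)=0$ for all $n\geq 6$. For the inductive step I would assume $\partial_\mu^{l'} f_{n'}(0)=0$ whenever $l'\leq l$ and $n'\geq 2l'+6$, and pick $n\geq 2(l+1)+6=2l+8$. The two linear terms vanish by the induction hypothesis because $l\leq\tfrac{n+2}{2}-3$ and $l\leq\tfrac{n}{2}-3$ in this range.

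The only substantive step is showing that every summand of the quadratic term vanishes at $\mu=0$, and I would split into two cases. If both $n_1,n_2\geq 6$, assume for contradiction that neither factor vanishes by the hypothesis, i.e.\ $k\geq\tfrac{n_1}{2}-2$ and $l-k\geq\tfrac{n_2}{2}-2$; summing yields $l\geq\tfrac{n+2}{2}-4=\tfrac{n}{2}-3$, contradicting $l+1\leq\tfrac{n}{2}-3$. If instead one of them, say $n_1\in\{2,4\}$, is small, then $n_2=n+2-n_1\geq n-2\geq 2l+6$ and $l-k\leq l\leq\tfrac{n}{2}-4\leq\tfrac{n_2}{2}-3$, so the second factor vanishes by the hypothesis. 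This closes the induction. The argument is essentially a pigeonhole on how the $l$ derivatives can be distributed among the two factors of the nonlinearity; I do not foresee any serious obstacle beyond careful index bookkeeping.
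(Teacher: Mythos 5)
Your proof is correct and takes essentially the same route as the paper: the paper defers this lemma to \cite{Kopper2022}, but it proves the directly analogous massive-case Lemma \ref{factorization_tilde_f_n} by the same induction (organized there as induction in $n+2l$), resting on the identical pigeonhole observation that in each product $\partial_\mu^{k}f_{n_1}(0)\,\partial_\mu^{l-k}f_{n_2}(0)$ with $n_1+n_2=n+2$ at least one factor vanishes by the induction hypothesis. Your explicit handling of the case $n_1\in\{2,4\}$, where the vanishing must come from the other factor, is a detail the paper's version leaves implicit.
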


\begin{proof}
See \cite{Kopper2022}.
\end{proof}

From Lemma \ref{factorization_of_f_n}, we can set 
\begin{equation}
    f_n(\mu)=\mu^{\frac{n}{2}-2}g_n(\mu)\;,\quad n\geq 4\;,
\end{equation}
where $g_n(\mu)$ are smooth. We can then rewrite the dynamical system (\ref{MFE_O(N)}) as 

\begin{align}\label{MFE-g_n_O(N)}
\begin{split}
    \mu^2 g_{n+2} &=\dfrac{1}{n+N}\sum_{\substack{n_1+n_2=n+2\\
n_i\geq 4}} g_{n_1}g_{n_2}+\mu\dfrac{1}{n+N}g_n\left(2f_2+1-\dfrac{4}{n}\right) \\
                  &+\dfrac{n-4}{n(n+N)}g_n+\dfrac{2}{n(n+N)}\mu\partial_\mu g_n,\quad n\geq 4\;.
\end{split}
\end{align}

Expanding $f_2$ and $g_n$ as formal Taylor series around $\mu=0$
\begin{equation}\label{power_series_O(N)}
    f_2(\mu)=\sum_{k\geq 0}f_{2,k}\mu^k,\quad g_n(\mu)=\sum_{k\geq 0}g_{n,k}\mu^k\;,
\end{equation}
we get

\begin{align}\label{FE_trivial_field1O(N)}
f_{2,k+1} &=\dfrac{1}{k+1}\left((N+2)g_{4,k}+f_{2,k}-\sum_{\nu=0}^k f_{2,\nu}f_{2,k-\nu}\right), \\
\label{FE_trivial_g_n,k}
\begin{split}
g_{n,k+2} &=-\dfrac{n-4}{n+2k}g_{n,k+1}-\dfrac{2n}{n+2k}\sum_{\nu=0}^{k+1}g_{n,\nu}f_{2,k+1-\nu}-\dfrac{n}{n+2k}\sum_{\substack{n_1+n_2=n+2\\
n_i\geq 4}}\sum_{\nu=0}^{k+2}g_{n_1,\nu}g_{n_2,k+2-\nu} \\ &+\dfrac{n(n+N)}{n+2k}g_{n+2,k}\;.
\end{split}
\end{align}
The first line of (\ref{FE_trivial_field1O(N)}) corresponds to (\ref{MFE-g_n_O(N)}) at $n=2$ while the second and third lines of (\ref{FE_trivial_field1O(N)}) correspond to (\ref{MFE-g_n_O(N)}) for $n\geq 4$. Regularity at $\mu=0$ implies for $n\geq 4$

\begin{align}\label{FE_trivial_field2_1_O(N)}
\dfrac{n-4}{n}g_{n,0} &+\sum_{\substack{n_1+n_2=n+2\\
n_i\geq 4}}g_{n_1,0}g_{n_2,0} =0\;,\\
\label{FE_trivial_field2_2_O(N)}
\dfrac{n-2}{n}g_{n,1} &+2\sum_{\substack{n_1+n_2=n+2\\
n_i\geq 4}}g_{n_1,0}g_{n_2,1}+g_{n,0}\left(2f_{2,0}+1-\dfrac{4}{n}\right)=0\;.
\end{align}

\paragraph{} In \cite{Kopper2022} it was proven for $N=1$
\begin{theorem}[Triviality in pure weakly-coupled mean-field $\varphi^4$-theory]\label{weaklytriviality}
    For boundary conditions (\ref{BDY_trivial_field}) such that
     \begin{equation}
    0\leq c_{0,4}\leq\dfrac{\varepsilon}{2^7\pi^2},\quad\vert c_{0,2}\vert\leq\Lambda_0^2\dfrac{\varepsilon}{2^7\pi^4},\quad 0\leq\varepsilon\leq 10^{-2},\;\Lambda_0^{-2}=\alpha_0\;.
\end{equation}
there exist smooth solutions of (\ref{MFE_O(N)}) $f_n\in C^\infty([0,\mu_{\max}])$ which vanish in the UV-limit, i.e. in the limit $\mu_{\max}\longrightarrow +\infty$.
\end{theorem}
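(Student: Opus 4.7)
The plan is to construct smooth solutions $(f_n)_{n \geq 2}$ on $[0, \mu_{\max}]$ by producing a power series expansion at $\mu = 0$ whose radius of convergence exceeds $\mu_{\max}$, and then to read off the IR value $f_4(\mu_{\max})$ to verify UV--triviality. The key preparatory step is the factorization $f_n(\mu) = \mu^{n/2-2} g_n(\mu)$ for $n \geq 4$ provided by Lemma \ref{factorization_of_f_n}, which absorbs the vanishing Cauchy data $f_n(0) = 0$, $n \geq 6$, and yields the cleaner hierarchy (\ref{MFE-g_n_O(N)}). First I would determine the Taylor coefficients $f_{2,k}$ and $g_{n,k}$ inductively: the algebraic consistency constraints (\ref{FE_trivial_field2_1_O(N)})--(\ref{FE_trivial_field2_2_O(N)}) fix $g_{n,0}$ and $g_{n,1}$ in closed form from $g_{4,0}$ and $f_{2,0}$, themselves prescribed by (\ref{BDY_trivial_field}), and then (\ref{FE_trivial_field1O(N)})--(\ref{FE_trivial_g_n,k}) propagate to all higher $k$, uniquely defining the formal series.

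Second, I would establish a priori bounds of the form
\[
|f_{2,k}| \leq A\,\varepsilon\,\rho^k, \qquad |g_{n,k}| \leq B\,\varepsilon^{n/2-1}\,\rho^k\, W_n
\]
via a double induction on $(n, k)$, where $W_n$ is a combinatorial weight (for instance $W_n = (n/2)!^{-\eta}$ for a suitable $\eta > 0$) designed to absorb the $O(n^2)$ prefactor $\binom{n+2}{2}$ appearing in (\ref{FE_O(N)_2}). The assumption $\varepsilon \leq 10^{-2}$ dominates the quadratic source $\sum_{n_1+n_2=n+2} g_{n_1,\nu} g_{n_2, k+2-\nu}$ in (\ref{FE_trivial_g_n,k}) by a term of order $\varepsilon^{n/2}$, so that the induction closes with a $\rho$ independent of $n$. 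Tuning the constants so that $\rho^{-1} > \mu_{\max}$ then yields genuine smooth functions $f_n \in C^\infty([0, \mu_{\max}])$ satisfying (\ref{MFE_O(N)}) termwise; the explicit thresholds $\varepsilon/(2^7\pi^2)$ and $\Lambda_0^2 \varepsilon/(2^7\pi^4)$ in the statement come out of this bookkeeping.

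Third, I would extract UV--triviality by isolating the $n = 2$ equation in (\ref{MFE_O(N)}), writing it schematically as $\partial_\mu f_4 = -c_1 f_4^2 + R(\mu)$, where $R$ collects the linear contribution and the tail $\sum_{n \geq 6}$ that is already controlled by the a priori bounds, hence uniformly small on $[0, \mu_{\max}]$. Elementary comparison with the Riccati equation $\partial_\mu h = -c_1 h^2$ then gives a bound $f_4(\mu) \leq K/(C + \mu)$ with constants depending only on $\varepsilon$; evaluating at $\mu = \mu_{\max}$ and sending $\mu_{\max} \to \infty$ produces the required vanishing.

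The main obstacle is the combinatorial step in the second paragraph: the factor $\binom{n+2}{2}$ generates $O(n^2)$ growth at each $\mu$-differentiation of the hierarchy, so any naive $n$-uniform bound blows up after a handful of iterations. The remedy is to work with a weighted norm on the sequence $(g_{n,\cdot})_{n \geq 4}$ whose weight $W_n$ decays fast enough in $n$ to beat the $n^2$ factor while being preserved under the quadratic convolution entering (\ref{FE_trivial_g_n,k}). The smallness of $\varepsilon$ is precisely what makes this weighted norm a contraction, and the explicit numerical smallness thresholds in the theorem statement are calibrated to make the quantitative induction step close.
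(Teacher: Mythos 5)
There is a genuine gap, and it sits at the heart of your second paragraph. You propose to establish geometric bounds $|f_{2,k}|\leq A\varepsilon\rho^k$, $|g_{n,k}|\leq B\varepsilon^{n/2-1}\rho^k W_n$ and then to ``tune the constants so that $\rho^{-1}>\mu_{\max}$'', obtaining the solution as a convergent Taylor series at $\mu=0$. This cannot work for two reasons. First, $\mu_{\max}=\ln(1/\alpha_0)$ is sent to $+\infty$ in the UV limit, so no fixed radius of convergence can ever cover $[0,\mu_{\max}]$. Second, and more fundamentally, the Taylor coefficients determined by the recursion (\ref{FE_trivial_field1O(N)})--(\ref{FE_trivial_g_n,k}) do not admit geometric bounds: the best one can prove is factorial-type growth of order $(k!)^{3/4}$ (this is exactly the content of Proposition \ref{triviality_prop} in the present paper and of the corresponding bound in \cite{Kopper2022}), so the formal series at $\mu=0$ has no usable radius of convergence and the solution is \emph{not} analytic at $\mu=0$. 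The missing idea is the explicit ansatz (\ref{ansatz}): one posits $f_2(\mu)=\sum_{n\geq 1}b_n\,x_n^{n-1}/(1+x_n^n)$ with $x_n=n\mu$, a globally defined smooth function that is non-analytic at the origin, matches the prescribed formal Taylor coefficients through (\ref{b_n_relation})--(\ref{b_n_induction}), and manifestly tends to zero as $\mu\to\infty$. The sub-$k!$ growth of the $f_{2,k}$ is then used only to show that the $b_n$ decay geometrically (cf. (\ref{initial_bounds_b_n_2022})), which makes the ansatz and all its derivatives uniformly convergent on $[0,\mu_{\max}]$; the remaining $f_n$ are generated from this $f_2$ by the hierarchy. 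Without some such global construction your argument produces only a formal power series, not elements of $C^\infty([0,\mu_{\max}])$.

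Your third paragraph also misreads the structure of the hierarchy. The $n=2$ equation of (\ref{MFE_O(N)}) is an \emph{algebraic} relation expressing $f_4$ in terms of $f_2$, $f_2^2$ and $\partial_\mu f_2$; there is no closed Riccati equation $\partial_\mu f_4=-c_1f_4^2+R$, since the evolution equation for $f_4$ involves $f_6$, which is not a priori small and is itself only determined from $f_4$ and $\partial_\mu f_4$ by the next level of the hierarchy. In the actual proof, triviality is read off directly: the explicit form of the ansatz gives $\partial_\mu^l f_2(\mu_{\max})\to 0$ for all $l$ as $\mu_{\max}\to\infty$, and the relation $f_4=\frac{1}{N+2}\bigl(\partial_\mu f_2-f_2+f_2^2\bigr)$ (and its analogues up the tower) then forces $f_n(\mu_{\max})\to 0$ for all $n$. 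Your overall scaffolding (factorization via Lemma \ref{factorization_of_f_n}, double induction in $(n,k)$, smallness of $\varepsilon$ to control the quadratic convolution) is consistent with the paper's bookkeeping, but the two decisive ingredients --- the non-analytic global ansatz for $f_2$ and the passage from bounds on $f_{2,k}$ to bounds on the $b_n$ --- are absent, and the steps you substitute for them fail.
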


\begin{proof}
    See \cite{Kopper2022}.
\end{proof}

 \paragraph{} The key point of the proof is the construction of a two-point function $f_2(\mu)$ such that the mean-field smooth solutions $f_n(\mu)$ turn out to be trivial. In \cite{Kopper2022}, the ansatz for $f_2(\mu)$ is defined by
\begin{equation}\label{ansatz}
    f_{2}(\mu)=\sum_{n\geq 1}b_n\dfrac{x_n^{n-1}}{1+x_n^n},\quad\forall n\geq 1,\; x_n:=n\mu\;.
\end{equation}

\begin{remark}
The ansatz proposed in (\ref{ansatz}) is not analytic at $\mu=0$.
\end{remark}

By expanding $f_2(\mu)$ as in (\ref{power_series_O(N)}), its Taylor coefficients can be rewritten as

\begin{equation}\label{b_n_relation}
    f_{2,k}=(k+1)^k\sum_{\rho=1}^{k+1}b_{\lbrace \frac{k+1}{\rho}\rbrace}(-1)^{\rho-1}\dfrac{1}{\rho^k}\;,
\end{equation}
where by convention $b_0=0$ and
\begin{equation}
    \left\lbrace \frac{m}{n}\right\rbrace:=\left\{
    \begin{array}{ll}
        \frac{m}{n} & \mbox{if } \frac{m}{n}\in\N \\
        0 & \mbox{otherwise.}
    \end{array}
\right.
\end{equation} 
From (\ref{ansatz})-(\ref{b_n_relation}), $f_{2,0}=b_1$ and $f_{2,1}=2b_2-b_1$ where $f_{2,1}=3 f_{4,0}-f_{2,0}(f_{2,0}-1)$, therefore the values of $b_1$ and $b_2$ are fixed by the free choice of $f_{2,0}$ and $f_{4,0}$. The $b_n$'s, $n\geq 3$ are then uniquely determined by (\ref{FE_trivial_field1O(N)})-(\ref{FE_trivial_field2_1_O(N)}) because of the boundary conditions (\ref{BDY_trivial_field}) and the smoothness condition. From (\ref{b_n_relation}) we also have for $n\geq 1$
\begin{equation}\label{b_n_induction}
    b_{n+1}=\frac{f_{2,n}}{(n+1)^n}-\sum_{\rho=2}^{n+1}b_{\lbrace \frac{n+1}{\rho}\rbrace}(-1)^{\rho-1}\dfrac{1}{\rho^n}\;.
\end{equation}

\paragraph{} The fact that $f_2(\mu)$ is well defined on $[0,\mu_{\max}]$ follows from

\begin{proposition}
Under the assumptions of Theorem \ref{weaklytriviality} and choosing the two-point function $f_2(\mu)$ as in (\ref{ansatz}), the following bounds hold:
\begin{equation}\label{initial_bounds_b_n_2022}
   \vert b_n\vert\leq 4\left(\dfrac{3}{4}\right)^n\varepsilon\;,\quad n\geq 1\;.
\end{equation}
\end{proposition}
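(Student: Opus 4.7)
The strategy is strong induction on $n$, using the inversion formula (\ref{b_n_induction}) which isolates $b_{n+1}$ as a sum of a ``source term'' $f_{2,n}/(n+1)^n$ and a ``feedback term'' $-\sum_{\rho=2}^{n+1} b_{\{(n+1)/\rho\}}(-1)^{\rho-1}/\rho^n$ involving only $b_m$ with $m\le n$. The base cases $n=1,2$ follow directly from the boundary conditions (\ref{BDY_trivial_field}): $b_1=f_{2,0}$ gives $|b_1|\le \varepsilon/4$, and $b_2=(f_{2,1}+b_1)/2$ with $f_{2,1}=3f_{4,0}-f_{2,0}(f_{2,0}-1)$ coming from the $n=2$ case of (\ref{FE_trivial_field1O(N)}) gives $|b_2|\le \varepsilon/2$; both are comfortably below $4(3/4)^n\varepsilon$ when $\varepsilon\le 10^{-2}$.

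For the inductive step I would assume $|b_m|\le 4(3/4)^m\varepsilon$ for all $m\le n$. In the feedback term the nonzero contributions correspond to divisors $\rho\ge 2$ of $n+1$, and each is bounded by $4(3/4)^{(n+1)/\rho}\varepsilon/\rho^n$. A direct computation gives $(3/4)^{(n+1)/\rho}/\rho^n \le (2/3^{(n+1)/2})(3/4)^{n+1}$ for $\rho=2$ and a considerably smaller bound for $\rho\ge 3$; together with the subexponential growth of the divisor count of $n+1$, the whole feedback sum consumes only a controlled fraction of the budget $4(3/4)^{n+1}\varepsilon$, leaving room for the source term.

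The central task is to control $|f_{2,n}|/(n+1)^n$ through an \emph{independent} estimate that does not itself go through $b_{n+1}$. For this I would use the flow equation (\ref{FE_trivial_field1O(N)}) to express $n\,f_{2,n}=(N+2)g_{4,n-1}+f_{2,n-1}-\sum_\nu f_{2,\nu}f_{2,n-1-\nu}$, and propagate in parallel an inductive bound of the form $|f_{2,k}|\le K(k+1)^k(3/4)^{k+1}\varepsilon$ together with matching bounds $|g_{n,k}|\le K_n(3/4)^{n+2k}\varepsilon$ on the auxiliary system (\ref{FE_trivial_g_n,k}). The factor $(k+1)^k$ in the bound on $f_{2,k}$ is precisely what is needed to cancel the denominator $(n+1)^n$ in (\ref{b_n_induction}) and produce a source contribution of the correct geometric size $(3/4)^{n+1}\varepsilon$.

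The main obstacle is the coupled structure of (\ref{FE_trivial_g_n,k}): the upward term $\tfrac{n(n+N)}{n+2k}g_{n+2,k}$ forbids a naive induction on $n$ alone, and the polynomial prefactor $(n+N)(n+2)/2$ from the $O(N)$-symmetric mean-field flow threatens to destroy geometric decay in $n$. I would resolve this by an induction on the total level $n+2k$, or equivalently by bounding a weighted norm of the generating function $G(\mu,\varphi)=\sum_n g_n(\mu)\varphi^n$ on a small bidisc and recovering individual bounds via Cauchy's inequality. The polynomial prefactors are absorbed by the factor $(3/4)^{n+2k}$, the quadratic terms of the flow are controlled by the smallness $\varepsilon\le 10^{-2}$, and the double induction closes.
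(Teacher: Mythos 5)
The paper does not actually prove this proposition in-house --- the proof is quoted from \cite{Kopper2022} --- but your overall architecture (bound the Taylor coefficients $f_{2,k}$ and $g_{n,k}$ through the flow hierarchy by a double induction, then feed $\vert f_{2,n}\vert/(n+1)^n$ into the inversion formula (\ref{b_n_induction})) is exactly the one used there, and it is reproduced in Sect.~\ref{Existence_trivial_solution} of this paper for the large-coupling analogue (Lemma \ref{triviality_lemma_1}, Propositions \ref{triviality_prop} and \ref{b_n_bounds}). Your base cases and your treatment of the feedback sum over divisors $\rho\ge 2$ are correct, and the observation that the $(k+1)^k$ in the bound on $f_{2,k}$ must cancel the $(n+1)^n$ is the right one.

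The genuine gap is the auxiliary hypothesis $\vert g_{n,k}\vert\le K_n(3/4)^{n+2k}\varepsilon$. Geometric decay in $k$ cannot be propagated: the recursion (\ref{FE_trivial_g_n,k}) contains the term $\frac{2n}{n+2k}\sum_{\nu}g_{n,\nu}f_{2,k+1-\nu}$, whose $\nu=0$ contribution $g_{n,0}f_{2,k+1}$ already has the size of $f_{2,k+1}$; since your own hypothesis on $f_{2,k}$ is of order $(k+1)^k\sim k!\,e^k$ (and this size is forced by (\ref{b_n_relation}), whose $\rho=1$ term is $(k+1)^k b_{k+1}$), inserting your two ans\"atze into the right-hand side produces something factorially larger than the claimed bound on $g_{n,k+2}$, so the induction step fails. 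For the same reason the series $\sum_k g_{n,k}\mu^k$ has zero radius of convergence --- the paper explicitly remarks that the ansatz (\ref{ansatz}) is not analytic at $\mu=0$ --- so the bidisc/Cauchy-inequality variant is unavailable. A second difficulty you pass over is the upward term $\frac{n(n+N)}{n+2k}g_{n+2,k}$: any bound that factorizes as a function of $n$ times a function of $k$ loses a factor of order $n$ at each such step, which is why the workable bound must couple the two indices, as in $\vert g_{n,k}\vert\lesssim K^{\frac{n}{2}+k}\,\bigl\vert\tfrac{n}{4}+k-3\bigr\vert!\,(k!)^{-1/4}$ of Proposition \ref{triviality_prop}. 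A bound of that shape still yields $\vert f_{2,n}\vert/(n+1)^n\le c\,(3/4)^{n+1}\varepsilon$ because $(n+1)^n\ge n!$, after which the $b_n$ induction closes exactly as you describe.
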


\begin{proof}
    See \cite{Kopper2022}.
\end{proof}

\paragraph{} Note  that this result implies $\lim\limits_{\mu_{\max}\rightarrow +\infty} f_2(\mu_{\max})=0$. The triviality follows then from these bounds with the aid of the flow equations. This triviality result in \cite{Kopper2022} is weaker than the triviality statements \cite{Aizenman2021}-\cite{Copin2021} as they do not require any upper bound on the value of the bare coupling constant. The aim of this section is to extend Theorem \ref{weaklytriviality} to arbitrarily large values of the mean field couplings. We will follow the steps in \cite{Kopper2022}: choose the two-point function as in (\ref{ansatz}), derive bounds on $g_{n,k}$ and $f_{2,k}$ for given on $g_{4,0}$ and $f_{2,0}$ and derive bounds on $b_n$ which imply that $f_2(\mu)$ is well-defined on $[0,\mu_{\max}]$. We start proving

\begin{lemma}\label{triviality_lemma_1}
    Let $f_n$ be solutions of (\ref{MFE_O(N)}) which respect the boundary conditions (\ref{BDY_trivial_field}). For given $f_{2,0},f_{4,0}$ we choose $K>1$ sufficiently large such that
\begin{equation}\label{initial_bounds}
    \vert f_{2,0}\vert\leq\dfrac{\sqrt{K}}{4},\quad \vert f_{4,0}\vert=\vert g_{4,0}\vert\leq\dfrac{\sqrt{K}}{32}\;.
\end{equation}
Then
\begin{equation}
    \vert f_{2,1}\vert\leq \dfrac{KN}{2},\quad\vert g_{4,1}\vert\leq \dfrac{K}{32}\;,
\end{equation}
and for $n\geq 6$
\begin{equation}\label{general_bounds_n}
    \vert g_{n,0}\vert\leq\dfrac{K^{\frac{n}{2}-\frac{3}{2}}}{2n^2}\;,\quad \vert g_{n,1}\vert\leq \dfrac{K^{\frac{n}{2}-\frac{3}{2}}}{n^2}\left(1+\dfrac{nK}{2}\right)\;.
\end{equation}
\end{lemma}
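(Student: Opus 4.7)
The plan is to establish the four bounds in sequence, using the low-index Taylor recursions (\ref{FE_trivial_field1O(N)})--(\ref{FE_trivial_field2_2_O(N)}) for the base cases and strong induction on $n$ for the general statements. Throughout, the hypothesis that $K$ is ``sufficiently large'' is used to absorb all absolute constants that arise; since the prefactors $n/(n-4)$ and $n/(n-2)$ are monotone decreasing in $n$, the binding case for this absorption is $n=6$.

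For $f_{2,1}$ I specialize (\ref{FE_trivial_field1O(N)}) to $k=0$, giving $f_{2,1}=(N+2)g_{4,0}+f_{2,0}-f_{2,0}^{2}$. The triangle inequality together with the hypotheses (\ref{initial_bounds}) yields $|f_{2,1}|\leq (N+2)\sqrt{K}/32+\sqrt{K}/4+K/16$, which is at most $KN/2$ once $K$ is large (the leading $K$-term $K/16$ obeys $K/16\leq KN/2$ for $N\geq 1$, and the $\sqrt{K}$-terms are lower order). For $g_{4,1}$ I specialize (\ref{FE_trivial_field2_2_O(N)}) to $n=4$: the convolution sum is empty because $n_{1}+n_{2}=6$ with $n_{i}\geq 4$ has no solution, and the factor $1-4/n$ vanishes at $n=4$. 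The relation collapses to $g_{4,1}=-4g_{4,0}f_{2,0}$, and the bound $|g_{4,1}|\leq K/32$ is immediate.

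For $g_{n,0}$ with $n\geq 6$ I argue by strong induction. Solving (\ref{FE_trivial_field2_1_O(N)}) as $g_{n,0}=-\frac{n}{n-4}\sum g_{n_{1},0}g_{n_{2},0}$ (all convolution indices lie in $\{4,6,\ldots,n-2\}$) and plugging in the inductive hypothesis -- which is consistent with the base case $n=4$ since $\sqrt{K}/32=K^{4/2-3/2}/(2\cdot 4^{2})$ -- yields $|g_{n,0}|\leq \frac{nK^{n/2-2}}{4(n-4)}\sum \frac{1}{n_{1}^{2}n_{2}^{2}}$. The heart of the argument is the uniform estimate $\sum_{n_{1}+n_{2}=s,\ n_{i}\geq 4\text{ even}}\frac{1}{n_{1}^{2}n_{2}^{2}}=O(1/s^{2})$, which I derive from the partial-fraction identity $\frac{1}{x^{2}(s-x)^{2}}=\frac{1}{s^{2}}(x^{-2}+(s-x)^{-2})+\frac{2}{s^{3}}(x^{-1}+(s-x)^{-1})$ with $s=n+2$, combined with $\sum_{n_{1}\geq 4\text{ even}}n_{1}^{-2}<\infty$. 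This produces a bound of the form $|g_{n,0}|\leq \frac{C\,K^{n/2-2}}{n^{2}}\cdot\frac{n}{n-4}$, which closes against the target $K^{n/2-3/2}/(2n^{2})$ as soon as $\sqrt{K}$ exceeds the resulting absolute constant.

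For $g_{n,1}$ with $n\geq 6$ I solve (\ref{FE_trivial_field2_2_O(N)}) for $g_{n,1}$ in terms of a convolution $2\sum g_{n_{1},0}g_{n_{2},1}$ and a source $g_{n,0}(2f_{2,0}+1-4/n)$; the prefactor $n/(n-2)$ is finite. Expanding the inductive bound on $g_{n_{2},1}$ splits the convolution into two pieces, one of shape $K^{n/2-2}\sum 1/(n_{1}^{2}n_{2}^{2})=O(K^{n/2-2}/n^{2})$ as above, and one of shape $K^{n/2-1}\sum 1/(n_{1}^{2}n_{2})=O(K^{n/2-1}/n)$ (the latter sum also handled by partial fractions, namely $\frac{1}{x^{2}(s-x)}=\frac{1}{sx^{2}}+\frac{1}{s^{2}x}+\frac{1}{s^{2}(s-x)}$). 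The source term is controlled by the $g_{n,0}$-bound just proved together with $|f_{2,0}|\leq \sqrt{K}/4$. Matching these three pieces against the target $K^{n/2-3/2}(1+nK/2)/n^{2}$ closes the induction. The principal (and essentially only) obstacle is arithmetic bookkeeping: one must confirm that a single threshold $K_{0}$ simultaneously absorbs all accumulated constants uniformly in $n\geq 6$, the worst case being $n=6$.
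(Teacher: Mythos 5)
Your proposal is correct and follows essentially the same route as the paper: the same base cases from (\ref{FE_trivial_field1O(N)})--(\ref{FE_trivial_field2_2_O(N)}), the same induction in $n$ for (\ref{general_bounds_n}), and the same partial-fraction estimate $\sum_{n_1+n_2=n+2}n_1^{-2}n_2^{-2}=O(n^{-2})$, which is precisely the content of the paper's Lemma \ref{inversesquare} in Appendix \ref{appendix_C}. The only (harmless) difference is bookkeeping: you identify $n=6$ as the binding case for absorbing constants into $\sqrt{K}$ and spell out the extra identity for $\sum n_1^{-2}n_2^{-1}$ arising in the $g_{n,1}$ step, whereas the paper checks $n\leq 10$ by hand and invokes the appendix lemma for $n\geq 12$.
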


\begin{proof}
    From (\ref{FE_trivial_field1O(N)}) we have
\begin{equation}
    \vert f_{2,1}\vert =\vert(N+2)g_{4,0}+f_{2,0}-f_{2,0}^2 \vert\leq \dfrac{KN}{2}\;.
\end{equation}
and
\begin{equation*}
    \vert g_{4,1}\vert=4\vert g_{4,0} f_{2,0}\vert\leq\dfrac{K}{32}\;.
\end{equation*}
We proceed by induction in $n$. For $n\geq 6$, we find from (\ref{FE_trivial_field2_1_O(N)}) and for $K$ large enough
\[\vert g_{n,0}\vert\leq \dfrac{n}{n-4}\dfrac{1}{4}\sum_{\substack{n_1+n_2=n+2\\
n_i\geq 4}}\dfrac{K^{\frac{n}{2}-\frac{3}{2}+1-\frac{3}{2}}}{n_1^2(n+2-n_1)^2}\leq\dfrac{K^{\frac{n}{2}-\frac{3}{2}}}{2n^2}\;.\]
From (\ref{FE_trivial_field2_2_O(N)}), for $n\geq 6$ and choosing $K>4$
\begin{align*}\vert g_{n,1}\vert &\leq \dfrac{2n}{n-2}\dfrac{1}{2}\sum_{\substack{n_1+n_2=n+2\\
n_i\geq 4}}\dfrac{K^{\frac{n}{2}-\frac{3}{2}+1-\frac{3}{2}}}{n_1^2(n+2-n_1)^2}\left(1+\dfrac{n_2K}{2}\right)+\dfrac{n}{n-2}\dfrac{K^{\frac{n}{2}-\frac{3}{2}}}{2n^2}\left(\dfrac{\sqrt{K}}{2}+1-\dfrac{4}{n}\right) \\
&\leq\dfrac{K^{\frac{n}{2}-\frac{3}{2}}}{n^2}\left(1+\dfrac{nK}{2}\right)\;.
\end{align*}
The previous bounds for the sums over $n_1$ can be checked for $n\leq 10$. For $n\geq 12$, we use Lemma \ref{inversesquare} in Appendix \ref{appendix_C}.
\end{proof}

\paragraph{} We define for $n_1,n_2\in 2\N\backslash \lbrace 2\rbrace$ and  $k,\nu\in\N_0$
\begin{equation}
g(n_1,n_2,k,\nu):=\dfrac{\left\vert\frac{n_1}{4}+\nu-3\right\vert!\left\vert\frac{n_2}{4}+k-\nu-1\right\vert!\;}{(k+2-\nu)!\;\nu!}\;,
\end{equation} 
where for $n\in\C\backslash \Z_0^-$ we define $n!\;:=\Gamma(n+1)$  with $\Gamma$ the Gamma function. Furthermore we also extend the definition of the binomial coefficient $\binom{n}{k}$ to $n,k\in\C\backslash \Z_0^-$ such that $n-k\in\C\backslash\Z_0^-$ by
\[\binom{n}{k}=\dfrac{n!}{k!\;(n-k)!}\;.\]

We also define for $l\in (0,1)$, $n_1,n_2\in 2\N\backslash \lbrace 2\rbrace$ and  $k,a,b\in\N_0$
\begin{equation}F(n_1,n_2,k,a,b,l):=\sum_{\nu=a}^{k+2-b}\dfrac{\left\vert\frac{n_1}{4}+\nu-3\right\vert!\left\vert\frac{n_2}{4}+k-\nu-1\right\vert!\;}{[(k+2-\nu)!\;\nu!]^{l}}\;.
\end{equation}

\paragraph{} These two quantities will appear in the proofs of the bounds on $g_{n,k}$. Before establishing bounds on $g_{n,k}$ we establish useful lemmas.

\begin{lemma}\label{bound_sum}
    For $n_1,n_2,k,a,b\in\N_0$ such that 
    \begin{itemize}
        \item $a+b\leq k+2$
        \item $3-\frac{n_1}{4}\leq a$ and $3-\frac{n_2}{4}\leq b$
    \end{itemize}
   we have
    \begin{equation}
        \mathcal{S}(n_1,n_2,k,a,b):=\sum_{\nu=a}^{k+2-b}g(n_1,n_2,k,\nu)\leq\dfrac{1}{\frac{n_1+n_2}{4}+a+b-5}\;\dfrac{\left(\frac{n_1+n_2}{4}+k-3\right)!}{(k+2-(a+b))!}\;.
    \end{equation}
\end{lemma}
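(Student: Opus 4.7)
Proof plan: The plan is to rewrite the factorial ratio as a Beta integral, reduce the inner binomial sum with a range-adapted decomposition, and close the estimate with a log-convexity property of $\Gamma$.

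Set $A:=\tfrac{n_1}{4}-3$ and $B:=\tfrac{n_2}{4}-1$. The hypotheses $a\geq 3-\tfrac{n_1}{4}$ and $b\geq 3-\tfrac{n_2}{4}$ translate to $A+a\geq 0$ and $B+b-2\geq 0$, which forces $A+\nu\geq 0$ and $B+k-\nu\geq 0$ throughout the summation range, so the absolute values in $g$ drop. Using
\[
\frac{(A+\nu)!\,(B+k-\nu)!}{(A+B+k+1)!}=\int_0^1 t^{A+\nu}(1-t)^{B+k-\nu}\,dt,
\]
one recasts the sum as
\[
\mathcal{S}(n_1,n_2,k,a,b)=\frac{(A+B+k+1)!}{(k+2)!}\int_0^1 t^A(1-t)^{B-2}\sum_{\nu=a}^{k+2-b}\binom{k+2}{\nu}t^\nu(1-t)^{k+2-\nu}\,dt.
\]

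The main step is the range-adapted bound on the inner sum. Since $\nu\geq a$ and $k+2-\nu\geq b$, we have $\binom{\nu}{a},\binom{k+2-\nu}{b}\geq 1$, and the multinomial identity
\[
\binom{k+2}{\nu}\binom{\nu}{a}\binom{k+2-\nu}{b}=\binom{k+2}{a}\binom{k+2-a}{b}\binom{k+2-a-b}{\nu-a}
\]
yields $\binom{k+2}{\nu}\leq \tfrac{(k+2)!}{a!\,b!\,(k+2-a-b)!}\binom{k+2-a-b}{\nu-a}$. Pulling out $t^a(1-t)^b$ and reindexing the truncated sum by $j=\nu-a$ collapses it via the binomial theorem, $\sum_{j=0}^{k+2-a-b}\binom{k+2-a-b}{j}t^j(1-t)^{k+2-a-b-j}=1$, so
\[
\sum_{\nu=a}^{k+2-b}\binom{k+2}{\nu}t^\nu(1-t)^{k+2-\nu}\leq \frac{(k+2)!}{a!\,b!\,(k+2-a-b)!}\,t^a(1-t)^b.
\]
Substituting back and recognising the remaining Beta integral $\int_0^1 t^{A+a}(1-t)^{B+b-2}\,dt=\tfrac{(A+a)!(B+b-2)!}{(A+B+a+b-1)!}$ gives
\[
\mathcal{S}\leq\frac{(A+B+k+1)!\,(A+a)!(B+b-2)!}{a!\,b!\,(k+2-a-b)!\,(A+B+a+b-1)!}.
\]

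To match this with the stated RHS $\tfrac{(A+B+k+1)!}{(A+B+a+b-1)(k+2-a-b)!}$, it remains to prove $(A+a)!(B+b-2)!\leq a!\,b!\,(A+B+a+b-2)!$. Because $a!\,b!\geq 1$, it suffices to verify $(A+a)!(B+b-2)!\leq(A+B+a+b-2)!$, i.e. $\Gamma(1+p)\Gamma(1+q)\leq\Gamma(1+p+q)$ for $p=A+a\geq 0$ and $q=B+b-2\geq 0$. This is immediate from the log-convexity of $\Gamma$: $x\mapsto\ln\Gamma(x+q)-\ln\Gamma(x)$ is non-decreasing, and evaluation at $x=1$ versus $x=1+p$ gives exactly the needed inequality. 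The chief obstacle is pinpointing an estimate on $\binom{k+2}{\nu}$ that is simultaneously loose enough for the binomial theorem to telescope the sum and tight enough to produce the required factor $1/(k+2-a-b)!$; the $(a,b)$-adapted multinomial decomposition above does exactly this, and the Beta/Gamma formulation accommodates the half-integer values of $A,B$ that occur when $n_1$ or $n_2\equiv 2\pmod 4$.
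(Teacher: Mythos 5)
Your proof is correct, and it takes a genuinely different route from the paper. The paper's argument is purely combinatorial: it shifts the summation index by $a$, crudely enlarges the summand by replacing $(\nu+a)!\,(k+2-a-\nu)!$ with $\nu!\,(k+2-(a+b)-\nu)!$ in the denominator, factors each term as $\left(\tfrac{n_1}{4}-3+a\right)!\left(\tfrac{n_2}{4}-3+b\right)!$ times a product of two binomial coefficients, and collapses the sum with the Chu--Vandermonde identity $\sum_{\nu}\binom{a+\nu}{\nu}\binom{r+m-\nu}{m-\nu}=\binom{a+r+m+1}{m}$, finishing with $x!\,y!\leq(x+y)!$. You instead represent the factorial ratio as a Beta integral, bound $\binom{k+2}{\nu}$ on the restricted range via the trinomial-revision identity $\binom{k+2}{\nu}\binom{\nu}{a}\binom{k+2-\nu}{b}=\binom{k+2}{a}\binom{k+2-a}{b}\binom{k+2-a-b}{\nu-a}$, telescope the truncated sum with the plain binomial theorem, and evaluate a second Beta integral; the closing inequality $\Gamma(1+p)\Gamma(1+q)\leq\Gamma(1+p+q)$ is the same super-multiplicativity the paper invokes, but you actually justify it for real arguments via log-convexity, whereas the paper only asserts it. A concrete advantage of your route is that the Vandermonde-type identity is never needed for non-integer upper arguments: since $\tfrac{n_1}{4}+a-3$ is in general a half-integer, the paper's appeal to Gould's identity (stated there for integers) tacitly requires an analytic continuation, while all of your binomial manipulations involve genuine integer binomial coefficients, with the non-integer data absorbed into the Beta integrals. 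The only cosmetic caveat is the intermediate factor $(1-t)^{B-2}$, which can be singular at $t=1$ when $n_2\leq 10$; this is harmless because the factorization and the pointwise bound are applied on $(0,1)$ and the final integrand $t^{A+a}(1-t)^{B+b-2}$ is integrable by hypothesis, but it would be cleaner to state the pointwise estimate directly for $\sum_\nu\binom{k+2}{\nu}t^{A+\nu}(1-t)^{B+k-\nu}$.
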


\begin{proof}
We will use the following equality, found in Sect.1.10 of \cite{Gould_combinatorialidentities}
        
    \begin{equation}\label{combinatorial}
    \sum_{\nu=0}^m \binom{a+\nu}{\nu}\binom{r+m-\nu}{m-\nu}=\binom{a+r+m+1}{m},\quad a,r\geq 0,\;m\in\mathbb{N}\;.
    \end{equation}
Using $(\ref{combinatorial})$ we get
    \begin{equation}
    \begin{split}
       \mathcal{S}(n_1,n_2,k,a,b) &= \sum_{\nu=0}^{k+2-(a+b)}\dfrac{\left(\frac{n_1}{4}-3+a+\nu\right)!\;\left(\frac{n_2}{4}-3+b+k+2-(a+b)-\nu\right)!}{(\nu+a)!\;(k+2-a-\nu)\;!} \\
       &\leq \sum_{\nu=0}^{k+2-(a+b)}\dfrac{\left(\frac{n_1}{4}-3+a+\nu\right)!\;\left(\frac{n_2}{4}-3+b+k+2-(a+b)-\nu\right)!}{\nu!\;(k+2-(a+b)-\nu)\;!} \\
       &\leq\sum_{\nu=0}^{k+2-(a+b)}\left(\frac{n_1}{4}-3+a\right)!\;\left(\frac{n_2}{4}-3+b\right)!\;\binom{\frac{n_1}{4}+a-3+\nu}{\nu}\\
       &\binom{\frac{n_2}{4}+b-3+k+2-(a+b)-\nu}{k+2-(a+b)-\nu} \\
       &\leq \left(\frac{n_1}{4}-3+a\right)!\;\left(\frac{n_2}{4}-3+b\right)!\;\binom{\frac{n_1+n_2}{4}+k-3}{k+2-(a+b)} \\
       &\leq \dfrac{1}{\frac{n_1+n_2}{4}+a+b-5}\;\dfrac{\left(\frac{n_1+n_2}{4}+k-3\right)!}{(k+2-(a+b))!}\;,
    \end{split}
    \end{equation}
    where we used
    \[a!\; b!\leq (a+b)!\;,\quad a,b\geq 0\;.\]
\end{proof}

A consequence of Lemma \ref{bound_sum} is 
\begin{lemma}\label{useful_lemma}
    Let $n_1,n_2,k,a$ and $l\in (0,1)$ such that
    \begin{itemize}
        \item $2a\leq k+2$
        \item $3-\frac{n_1}{4}\leq a$ and $3-\frac{n_2}{4}\leq a$
    \end{itemize}
    then:
    \begin{equation}
        F(n_1,n_2,k,a,a,l)\leq [a! (k+2-a)!]^{1-l}\dfrac{1}{\frac{n_1+n_2}{4}+2a-5}\;\dfrac{\left(\frac{n_1+n_2}{4}+k-3\right)!}{(k+2-2a)!}\;.
    \end{equation}
\end{lemma}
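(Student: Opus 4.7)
The plan is to factor $F(n_1,n_2,k,a,a,l)$ into the already-controlled quantity $\mathcal{S}(n_1,n_2,k,a,a)$ of Lemma \ref{bound_sum} times a uniform prefactor. I observe that
\begin{equation*}
F(n_1,n_2,k,a,a,l)=\sum_{\nu=a}^{k+2-a} g(n_1,n_2,k,\nu)\,\bigl[(k+2-\nu)!\,\nu!\bigr]^{1-l},
\end{equation*}
since the denominator in the definition of $F$ is exactly that of $g$ raised to the power $l$. So it suffices to bound $[(k+2-\nu)!\,\nu!]^{1-l}$ uniformly for $\nu\in\{a,\ldots,k+2-a\}$, pull this factor out of the sum, and apply Lemma \ref{bound_sum}.

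The key observation is that on the symmetric integer interval $\{a,\ldots,k+2-a\}$, the map $\nu\mapsto \nu!\,(k+2-\nu)!$ attains its maximum at the two endpoints. This follows from the identity $\nu!\,(k+2-\nu)! = (k+2)!/\binom{k+2}{\nu}$ combined with the unimodality of the binomial coefficients $\binom{k+2}{\nu}$, which are smallest at the endpoints of any symmetric sub-interval. Since the exponent $1-l$ is positive (as $l\in(0,1)$), I obtain
\begin{equation*}
\bigl[(k+2-\nu)!\,\nu!\bigr]^{1-l}\leq \bigl[a!\,(k+2-a)!\bigr]^{1-l}
\qquad \text{for all }\nu\in\{a,\ldots,k+2-a\}.
\end{equation*}

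Inserting this uniform bound into the sum yields
\begin{equation*}
F(n_1,n_2,k,a,a,l)\leq \bigl[a!\,(k+2-a)!\bigr]^{1-l}\,\mathcal{S}(n_1,n_2,k,a,a),
\end{equation*}
and the hypotheses $2a\leq k+2$ and $3-n_i/4\leq a$ (with $b=a$) are exactly those needed to invoke Lemma \ref{bound_sum}, which produces the announced factor $\tfrac{1}{(n_1+n_2)/4+2a-5}\,\tfrac{((n_1+n_2)/4+k-3)!}{(k+2-2a)!}$ and closes the estimate.

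The only non-bookkeeping step is the endpoint-maximization of $\nu!\,(k+2-\nu)!$, but this is immediate from the binomial reformulation; no genuine difficulty arises. A minor care point is ensuring that $a$ lies in $\{0,\ldots,\lfloor (k+2)/2\rfloor\}$ so that $[a,k+2-a]$ is a well-defined sub-interval and the generalized factorials appearing via $|\tfrac{n_i}{4}+\nu-3|!$ etc.\ remain in the domain of $\Gamma$, but both are guaranteed by the stated hypotheses.
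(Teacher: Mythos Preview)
Your proof is correct and follows essentially the same route as the paper: you write $F$ as $\sum_\nu g(n_1,n_2,k,\nu)\,[(k+2-\nu)!\,\nu!]^{1-l}$, bound the bracketed factor uniformly by its endpoint value, and then invoke Lemma~\ref{bound_sum}. The paper simply asserts the endpoint inequality $\nu!\,(k+2-\nu)!\leq a!\,(k+2-a)!$ for $a\leq\nu\leq k+2-a$ without further comment, whereas you justify it via $\nu!\,(k+2-\nu)!=(k+2)!/\binom{k+2}{\nu}$ and unimodality of binomials; this is a welcome clarification but not a different argument.
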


\begin{proof}
    We use the following identity
    \begin{equation}
        \nu!\;(k+2-\nu)!\leq a!\;(k+2-a)!\;,\quad a\leq\nu\leq k+2-a\;.
    \end{equation}
    Then from Lemma \ref{bound_sum}, it follows that
    \begin{equation}
    \begin{split}
      F(n_1,n_2,k,a,a,l) &\leq [a!\;(k+2-a)!]^{1-l}\mathcal{S}(n_1,n_2,k,a,a) \\
      &\leq [a! (k+2-a)!]^{1-l}\dfrac{1}{\frac{n_1+n_2}{4}+2a-5}\;\dfrac{\left(\frac{n_1+n_2}{4}+k-3\right)!}{(k+2-2a)!}\;.  
    \end{split}
    \end{equation}
\end{proof}

The following proposition shows that the coefficients $g_{n,k}$ and $f_{2,k}$ grow at most as $[k!]^{\frac{3}{4}}$. This will allow us later to show that the function $f_2(\mu)$ and then also the $f_n(\mu)$ are well-defined for $\mu\in [0,\mu_{\max}]$, see (\ref{b_n_inequality}) and Propositions \ref{b_n_bounds}-\ref{vanishing_solutions_in_the_UV}.

\begin{proposition}\label{triviality_prop}
    Under the same assumptions as in Lemma \ref{triviality_lemma_1}, we have for $n\geq 4,\;k\geq 2,\;N\geq 1$, 
\begin{equation}\label{prop_bound_g_n,k_f_2,k}
    \vert g_{n,k}\vert\leq N^{\frac{n}{2}+k-2}K^{\frac{n}{2}+k-\frac{3}{2}}\left\vert\dfrac{n}{4}+k-3\right\vert!\;\dfrac{1}{(k!)^{\frac{1}{4}}}\;,\quad \vert f_{2,k}\vert\leq N^{k+1} K^{k+\frac{1}{2}}\dfrac{\vert k-3\vert!}{(\vert k-1\vert!)^{\frac{1}{4}}}\;.
\end{equation}
\end{proposition}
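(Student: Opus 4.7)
The plan is to prove the two bounds jointly by induction on a combined level index $M$, defined as $M=n+2k$ for $g_{n,k}$ and $M=2k+2$ for $f_{2,k}$. The recursions (\ref{FE_trivial_field1O(N)}) and (\ref{FE_trivial_g_n,k}), the latter solved on the left-hand side for $g_{n+2,k}$, express each coefficient of level $M$ purely in terms of coefficients of level $\leq M$: for instance $g_{n+2,k}$ at level $n+2k+2$ pulls in $g_{n,k+2}$, $g_{n,k+1}$, the mixed sum $\sum_{\nu}f_{2,k+1-\nu}g_{n,\nu}$, and the pure convolution $\sum_{n_1+n_2=n+2,\,n_i\geq 4}\sum_\nu g_{n_1,\nu}g_{n_2,k+2-\nu}$, each lying at level $\leq n+2k+2$ because $n_i\leq n-2$ forces $n_i+2(k+2-\nu)\leq n+2k+2$. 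The base cases $k=0,1$ reduce to Lemma \ref{triviality_lemma_1}: the elementary bounds there are absorbed into the more permissive target form $N^{n/2+k-2}K^{n/2+k-3/2}|n/4+k-3|!/(k!)^{1/4}$ (and its $f_2$-analogue), which holds once $K$ is chosen large enough in terms of $f_{2,0}$ and $g_{4,0}$.

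The inductive step for $f_{2,k+1}$ uses (\ref{FE_trivial_field1O(N)}): the terms $(N+2)|g_{4,k}|/(k+1)$ and $|f_{2,k}|/(k+1)$ plug in the inductive bounds directly, while the convolution $\sum_\nu|f_{2,\nu}||f_{2,k-\nu}|$ is estimated via Lemma \ref{useful_lemma} with $l=1/4$, after the mild shift from $(|\nu-1|!)^{1/4}$ to $(\nu!)^{1/4}$ using $(|\nu-1|!)^{1/4}\geq (\nu!)^{1/4}/\nu^{1/4}$. The inductive step for $g_{n+2,k}$ starts from the rearranged recursion
\[
g_{n+2,k}=\frac{1}{n(n+N)}\Big[(n+2k)g_{n,k+2}+(n-4)g_{n,k+1}+2n\sum_{\nu}f_{2,k+1-\nu}g_{n,\nu}+n\!\!\sum_{\substack{n_1+n_2=n+2\\n_i\geq 4}}\sum_{\nu}g_{n_1,\nu}g_{n_2,k+2-\nu}\Big],
\]
and bounds each term by the inductive hypothesis. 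The key mechanism is that the factorial products in the summands of the convolutions, namely $|n_1/4+\nu-3|!\,|n_2/4+(k+2-\nu)-3|!$, are collapsed by Lemma \ref{useful_lemma} into $|(n_1+n_2)/4+k-3|!=|(n+2)/4+k-3|!$, exactly the factorial appearing in the target bound for $g_{n+2,k}$; meanwhile the $N$- and $K$-exponents combine additively as $N^{n_1/2+\nu-2}\cdot N^{n_2/2+(k+2-\nu)-2}=N^{(n+2)/2+k-2}$, again matching the target after accounting for the $1/(n(n+N))$ prefactor.

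The main obstacle is keeping the sharp exponent $1/4$ on $(k!)^{-1}$, equivalently the allowed growth $(k!)^{3/4}$. Lemma \ref{useful_lemma} with $l=1/4$ is precisely what preserves this rate under convolution, because it produces the residual $[a!(k+2-a)!]^{3/4}$; choosing the parameter $a$ so that the singular denominator $(n_1+n_2)/4+2a-5$ is at least $1$ and the residual combines with the $1/(n+2k)$ from the recursion is what ultimately closes the induction. Secondary technical points -- the absorption $N+n\leq 2n$ for $N=1$ (respectively $N+n\leq nN$ for $N\geq 2$), and the sum over $n_1+n_2=n+2$ with $n_1\in\{4,\dots,n-2\}$ (controlled by Lemma \ref{inversesquare} once $n$ is large enough) -- are routine bookkeeping that does not alter the scaling.
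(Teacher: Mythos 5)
There is a genuine gap in your induction scheme: it is circular. You propose to solve the recursion (\ref{FE_trivial_g_n,k}) for $g_{n+2,k}$ and to induct on the level $M=n+2k$, claiming that all terms on the right-hand side, in particular $g_{n,k+2}$, lie at level $\leq n+2k+2$. But $g_{n,k+2}$ sits at level $n+2(k+2)=n+2k+4$, which is \emph{strictly higher} than the level $n+2k+2$ of the quantity $g_{n+2,k}$ you are trying to bound. More structurally: the free boundary data of the problem are $f_{2,0}$, $g_{4,0}$ together with the coefficients $g_{n,0},g_{n,1}$ for all $n$ (fixed by the regularity conditions (\ref{FE_trivial_field2_1_O(N)})--(\ref{FE_trivial_field2_2_O(N)}) and bounded in Lemma \ref{triviality_lemma_1}). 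Given these data, the recursion must be read as determining $g_{n,k+2}$ (going \emph{up} in $k$ at fixed $n$) with $g_{n+2,k}$ entering as an already-known input of smaller $k$; the paper's induction is on $M=n+k$ and, within fixed $M$, ascending in $k$, which is well-founded precisely because the only same-level term on the right-hand side, $g_{n+2,k}$, has strictly smaller $k$. Your reversed reading would require, within a fixed level, knowing the large-$k$ coefficients $g_{4,M-4}$ before the small-$k$ ones, and these are not boundary data — they are themselves produced by the recursion. The induction therefore has no base and does not close.

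Apart from this, the remaining ingredients you invoke are essentially those of the paper (Lemma \ref{useful_lemma} with $l=\tfrac14$ to collapse the convolutions while preserving the $(k!)^{-1/4}$ decay, Lemma \ref{inversesquare} for the sum over $n_1$, and absorption of constants into $K$), though two smaller points would still need care even after fixing the ordering: Lemma \ref{useful_lemma} is stated for $n_i\geq 4$ and its hypotheses ($a+b\leq k+2$, $3-\tfrac{n_i}{4}\leq a$) fail for the boundary summands $\nu\leq 1$ and $\nu\geq k+1$ of the convolutions, which must be extracted and bounded separately (the paper's terms $I$ and the $\nu\in\{0,1,k-1,k,k+1\}$ cases); and the $f_2$ self-convolution in (\ref{FE_trivial_field1O(N)}) is handled in the paper by an elementary split at $\nu\in\{0,1,2\}$ rather than by Lemma \ref{useful_lemma}, whose hypotheses do not directly cover the two-point coefficients.
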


\begin{proof}
We proceed by induction going up in $M=n+k$. For a fixed value of $M=n+k$ we go up in $k$. To initialize the induction, the bounds for $k\leq 1$ for $g_{n,k}$ and $f_{2,k}$ follow from  Lemma \ref{triviality_lemma_1}.

We will first bound $g_{n,k+2}$. Using (\ref{FE_trivial_g_n,k}) we can then bound $g_{n,k+2}$, knowing the bounds on the terms appearing on the r.h.s. We proceed term by term.
\begin{enumerate}
    \item We see that for the cases where $\frac{n}{4}+k-3< 0$, namely $(n=4,k=0,1)$, $(n=6,k=0,1)$, $(n=8,k=0)$ and $(n=10,k=0)$, the bounds follow from Lemma \ref{triviality_lemma_1}.
    \item We look at the different terms on the r.h.s of (\ref{FE_trivial_g_n,k}).
    \begin{itemize}
        \item First term: this term vanishes for $n=4$. For $n\geq 6$ and $k=0$ we use the bounds in Lemma \ref{triviality_lemma_1} to obtain
            \begin{equation}
            \begin{split}
                \dfrac{n-4}{n+2}\vert g_{n,1}\vert &\leq N^{\frac{n}{2}-1}\dfrac{n-4}{n+2}\dfrac{K^{\frac{n}{2}-\frac{3}{2}}}{n^2}\left(1+\dfrac{nK}{2}\right)  \\
                &\leq N^{\frac{n}{2}-1}\dfrac{K^{\frac{n}{2}-\frac{3}{2}+1}}{n} \\
                &\leq N^{\frac{n}{2}}K^{\frac{n}{2}+2-\frac{3}{2}}\left(\frac{n}{4}-1\right)!\;\frac{1}{K}\;.
            \end{split}
            \end{equation}

    Then for $n\geq 6$ and $k\geq 1$, using the induction hypothesis we obtain
        
        \begin{equation}\label{firstterm}
        \begin{split}
          &\dfrac{n-4}{n+2k}\vert g_{n,k+1}\vert \leq N^{\frac{n}{2}+k-1}\dfrac{n-4}{n+2k}K^{\frac{n}{2}+k-\frac{1}{2}}\left\vert\dfrac{n}{4}+k-2\right\vert!\;\dfrac{1}{[(k+1)!]^{\frac{1}{4}}} \\
            &\leq N^{\frac{n}{2}+k-1}K^{\frac{n}{2}+k-\frac{1}{2}}\left(\dfrac{n}{4}+k-1\right)!\;\dfrac{(k+2)^{\frac{1}{4}}}{[(k+2)!]^{\frac{1}{4}}}\;\dfrac{n-4}{n+2k}\dfrac{4}{(n+4k-4)} \\
            &\leq N^{\frac{n}{2}+k-1}K^{\frac{n}{2}+k+2-\frac{3}{2}}\left(\dfrac{n}{4}+k-1\right)!\;\dfrac{1}{[(k+2)!]^{\frac{1}{4}}}\; \dfrac{1}{K}\;.  
        \end{split}
        \end{equation}
        \item Second term: We can rewrite the second term as follows
        \begin{equation}\label{second_term_rewritten}
        \begin{split}
            \dfrac{2n}{n+2k}\sum_{\nu=0}^{k+1}\vert g_{n,\nu}f_{2,k+1-\nu}\vert &= \dfrac{2n}{n+2k}\Big(\vert g_{n,0}f_{2,k+1}\vert+\vert g_{n,1}f_{2,k}\vert+\vert g_{n,k+1}f_{2,0}\vert \\
            &+\vert g_{n,k}f_{2,1}\vert 
            + \vert g_{n,k-1}f_{2,2}\vert + \sum_{\nu=2}^{k-2}\vert g_{n,\nu}f_{2,k+1-\nu}\vert\Big)\;.
        \end{split}
        \end{equation}
        
    For the terms with $\nu\leq 1$, we use the bounds in Lemma \ref{triviality_lemma_1} to get:
    \begin{itemize}
        \item $\nu=0$:

    \begin{equation}\label{second_term_nu=0}
    \begin{split}
     \dfrac{2n}{n+2k}\vert g_{n,0}f_{2,k+1}\vert &\leq \dfrac{2n}{n+2k}N^{k+2} \dfrac{K^{\frac{n}{2}-\frac{3}{2}}}{2n^2} K^{k+\frac{3}{2}}\dfrac{\vert k-2\vert !}{(k!)^{\frac{1}{4}}} \\
     &\leq N^{\frac{n}{2}+k}K^{\frac{n}{2}+k+\frac{1}{2}}\left(\frac{n}{4}+k-1\right)!\;\dfrac{1}{[(k+2)!]^{\frac{1}{4}}}\;\dfrac{1}{\sqrt{K}}\;,   
    \end{split}
    \end{equation}
        where we used 
        \begin{equation}\label{inequality_2_1}
           \dfrac{((k+1)(k+2))^{\frac{1}{4}}\vert k-2\vert !}{n^2}\leq \left(\frac{n}{4}+k-1\right)!\;,\quad n\geq4,k\geq 0\;. 
        \end{equation}
         
        \item $\nu=1$:
        \begin{equation}\label{second_term_nu=1}
           \begin{split}
            \dfrac{2n}{n+2k}\vert g_{n,1}f_{2,k}\vert &\leq \dfrac{2n}{n+2k}N^{k+1}\dfrac{3K^{\frac{n}{2}-\frac{1}{2}}}{4n}K^{k+\frac{1}{2}}\dfrac{\vert k-3\vert !}{(\vert k-1\vert !)^{\frac{1}{4}}} \\
            &\leq N^{\frac{n}{2}+k}K^{\frac{n}{2}+k+\frac{1}{2}}\left(\frac{n}{4}+k-1\right)!\;\dfrac{1}{[(k+2)!]^{\frac{1}{4}}}\;\dfrac{6}{\sqrt{K}}\;,   
           \end{split} 
        \end{equation}
    
        where we used the following inequalities, valid for $n\geq 4,\;k\geq 0$,
        \begin{equation}
            \dfrac{K^{\frac{n}{2}-\frac{3}{2}}}{n^2}\left(1+\dfrac{nK}{2}\right)\leq \dfrac{3K^{\frac{n}{2}-\frac{1}{2}}}{4n},\quad \dfrac{\vert k-3\vert !}{4n (\vert k-1\vert !)^{\frac{1}{4}}}\leq \left(\frac{n}{4}+k-1\right)!\;\dfrac{1}{[(k+2)!]^{\frac{1}{4}}}\;.
        \end{equation}

        From Lemma \ref{triviality_lemma_1} and (\ref{FE_trivial_field1O(N)}), we have
        \begin{equation}
            \vert f_{2,2}\vert\leq\dfrac{1}{2}\left((N+2)\vert g_{4,1}\vert+\vert f_{2,1}\vert+2\vert f_{2,0} f_{2,1}\vert\right)\leq N K\sqrt{K}\;.
        \end{equation}
        
        \item $k-1\leq \nu\leq k+1$: it is clear that $\vert g_{n,\nu} f_{2,k+1-\nu}\vert$ are bounded by 
        \begin{equation}\label{second_term_k-1<=nu}
           N^{\frac{n}{2}+k} K^{\frac{n}{2}+k+\frac{1}{2}}\left(\dfrac{n}{4}+k-1\right)!\; \dfrac{1}{[(k+2)!]^{\frac{1}{4}}}\; \dfrac{C}{\sqrt{K}}\;,  
        \end{equation}
       
        where $C$ is a constant which does not depend on $n,k$.
        
        \item $2\leq \nu \leq k-2$ We now bound the remaining sum in (\ref{second_term_rewritten}). Note that this sum is non-zero only if $k\geq 4$, so we assume $k\geq 4$ from now on. The remaining sum is bounded by
        \begin{equation}\label{second_term_intermediary_sum}
          N^{\frac{n}{2}+k}K^{\frac{n}{2}+k+\frac{1}{2}}\dfrac{n}{2\sqrt{K}(n+2k)}F\left(n,4,k-2,2,2,\frac{1}{4}\right)\;.  
        \end{equation}

    Then from Lemma \ref{useful_lemma} we have
    \begin{equation}\label{second_term_nu>=2}
    \begin{split}
     \dfrac{n}{n+2k}F\left(n,4,k-2,2,2,\frac{1}{4}\right) &\leq
     \dfrac{n}{n+2k} [(k-2)!]^{\frac{3}{4}}\dfrac{4}{n}\dfrac{\left(\frac{n}{4}+k-4\right)!}{(k-4)!} \\
     &\leq 4 \;\dfrac{\left(\frac{n}{4}+k-1\right)!}{[(k+2)!]^{\frac{1}{4}}}\dfrac{\Big(k(k+1)(k+2)\Big)^{\frac{1}{4}}}{n+2k} \\
     &\leq 4 \;\dfrac{\left(\frac{n}{4}+k-1\right)!}{[(k+2)!]^{\frac{1}{4}}}\;.   
    \end{split}
    \end{equation}
    \end{itemize}
    
    Summing (\ref{second_term_nu=0}), (\ref{second_term_nu=1}), (\ref{second_term_k-1<=nu}) and (\ref{second_term_nu>=2}) we find
    \begin{equation}\label{secondterm}
        \dfrac{2n}{n+2k}\sum_{\nu=0}^{k+1}\vert g_{n,\nu}f_{2,k+1-\nu}\vert\leq N^{\frac{n}{2}+k}K^{\frac{n}{2}+k+\frac{1}{2}}\left(\frac{n}{4}+k-1\right)!\;\dfrac{1}{[(k+2)!]^{\frac{1}{4}}}\;\dfrac{C_2}{\sqrt{K}}\;,
    \end{equation}
    where $C_2>0$ is a suitable constant.
    
    \item Third term is bounded by
        \begin{equation}\label{horrible_sum}
        \begin{split}
          &\dfrac{n}{n+2k}\sum_{\substack{n_1+n_2=n+2\\
n_i\geq 4}}\sum_{\nu=0}^{k+2}\vert g_{n_1,\nu}g_{n_2,k+2-\nu}\vert \leq \dfrac{2n}{n+2k}I+N^{\frac{n}{2}+k-1}\dfrac{2K^{\frac{n}{2}+k+\frac{1}{2}}n}{\sqrt{K}(n+2k)}\times \\
&\Bigg[\sum_{\substack{4\leq n_1\leq 10\\
n_1\leq n_2\\ n_1+n_2=n+2}}F\left(n_1,n_2,k,2,2,\frac{1}{4}\right)+\sum_{\substack{12\leq n_1\leq n_2\\ n_1+n_2=n+2}}F\left(n_1,n_2,k,0,0,\frac{1}{4}\right)\Bigg]\;,
        \end{split}
        \end{equation}
where we define

\begin{equation}
    I:=\sum_{\substack{4\leq n_1\leq 10 \\ n_1\leq n_2 \\ n_1+n_2=n+2}}\vert g_{n_1,0}g_{n_2,k+2}\vert+\vert g_{n_1,1}g_{n_2,k+1}\vert+\vert g_{n_1,k+2}g_{n_2,0}\vert+\vert g_{n_1,k+1}g_{n_2,1}\vert\;.
\end{equation}

The first and the second term in the r.h.s. of (\ref{horrible_sum}) contains a finite number of terms which does not depend on $n$. Using the bounds from Lemma \ref{triviality_lemma_1} and the induction hypothesis, it is not too hard to prove that
\begin{equation}
    I\leq N^{\frac{n}{2}+k}K^{\frac{n}{2}+k+\frac{1}{2}}\left(\frac{n}{4}+k-1\right)!\dfrac{1}{[(k+2)!]^{\frac{1}{4}}}\dfrac{C'}{\sqrt{K}}\;,
\end{equation}
where $C'$ is a constant which does not depend on $n,k$. Then we use Lemma \ref{useful_lemma} to obtain respectively

\begin{equation}
    \dfrac{n}{n+2k}\sum_{\substack{4\leq n_1\leq 10\\
n_1\leq n_2\\ n_1+n_2=n+2}}F\left(n_1,n_2,k,2,2,\frac{1}{4}\right)\leq \dfrac{4n}{n+2k}\dfrac{[2k!]^{\frac{3}{4}}}{\frac{n}{4}-\frac{1}{2}}\dfrac{\left(\frac{n}{4}+k-\frac{5}{2}\right)!}{(k-2)!}
\end{equation}  

and\footnote{$12\leq n_1\leq n_2$ implies  $n\geq 22$, so that the denominators are non-zero.}

\begin{equation}
    \dfrac{n}{n+2k}\sum_{\substack{12\leq n_1\leq n_2\\ n_1+n_2=n+2}}F\left(n_1,n_2,k,0,0,\frac{1}{4}\right)\leq \frac{n}{2}\dfrac{2}{\frac{n}{4}-\frac{9}{2}}\dfrac{\left(\frac{n}{4}+k-\frac{5}{2}\right)!}{[(k+2)!]^{\frac{1}{4}}}\;.
\end{equation}

Using the inequality
\[\left(m+\frac{1}{2}\right)!\leq 2m!\;\sqrt{m+\frac{1}{2}},\quad m\in\N\;,\]
we obtain finally

\begin{equation}
\begin{split}
\dfrac{n}{n+2k}\sum_{\substack{4\leq n_1\leq 10\\
n_1\leq n_2\\ n_1+n_2=n+2}}F\left(n_1,n_2,k,2,2,\frac{1}{4}\right) &\leq \dfrac{8n}{n+2k}\dfrac{2^{\frac{3}{4}}}{\frac{n}{4}-\frac{1}{2}}\dfrac{\left(\frac{n}{4}+k-1\right)!\;\sqrt{\frac{n}{4}+k-\frac{5}{2}}}{[k!]^{\frac{1}{4}}} \\
&\leq C''\dfrac{\left(\frac{n}{4}+k-1\right)!}{[(k+2)!]^{\frac{1}{4}}}\;,
\end{split}
\end{equation}
where $C''$ does not depend on $n,k$.

\paragraph{} Summing over all contributions, we get the bound
\begin{equation}\label{thirdterm}
    \dfrac{n}{n+2k}\sum_{\substack{n_1+n_2=n+2\\
n_i\geq 4}}\sum_{\nu=0}^{k+2}\vert g_{n_1,\nu}g_{n_2,k+2-\nu}\vert\leq N^{\frac{n}{2}+k}K^{\frac{n}{2}+k+\frac{1}{2}}\left(\frac{n}{4}+k-1\right)!\;\dfrac{1}{[(k+2)!]^{\frac{1}{4}}}\dfrac{C_3}{\sqrt{K}}\;,
\end{equation}
for some finite positive constant $C_3$.

     \item Fourth term: First, for $k\leq 1$, we use the bounds in Lemma \ref{triviality_lemma_1} to obtain
     \begin{itemize}
         \item $k=0$
         \begin{equation}
             \dfrac{n(n+N)}{n}\vert g_{n+2,0}\vert\leq N^{\frac{n}{2}}K^{\frac{n}{2}+\frac{1}{2}}\left(\frac{n}{4}-1\right)!\;\dfrac{1}{K}\;.
         \end{equation}
         \item $k=1$
         \begin{equation}
             \dfrac{n(n+N)}{n+2}\vert g_{n+2,1}\vert\leq N^{\frac{n}{2}+1}K^{\frac{n}{2}+\frac{3}{2}}\left(\frac{n}{4}\right)!\;\dfrac{1}{K}\;.
         \end{equation}
     \end{itemize}
     For $k\geq 2$ we get
     \begin{align}\label{fourthterm}
     \begin{split}
        &\dfrac{n(n+N)}{n+2k}\vert g_{n+2,k}\vert \leq \dfrac{n(n+N)}{n+2k}N^{\frac{n}{2}+k-1} K^{\frac{n}{2}+k-\frac{1}{2}}\left(\dfrac{n}{4}+k+\dfrac{1}{2}-3\right)!\;\dfrac{1}{(k!)^{\frac{1}{4}}} \\
    &\leq \dfrac{N^{\frac{n}{2}+k}\;K^{\frac{n}{2}+k+\frac{1}{2}}\left(\frac{n}{4}+k-1\right)!}{[(k+2)!]^{\frac{1}{4}}}\;\dfrac{2n(n+1)\sqrt{\frac{n}{4}+k-\frac{5}{2}}[(k+1)(k+2)]^{\frac{1}{4}}}{(n+2k)(n+4k-4)(n+4k-8)K} \\
    &\leq N^{\frac{n}{2}+k}K^{\frac{n}{2}+k+\frac{1}{2}}\left(\dfrac{n}{4}+k-1\right)!\;\dfrac{1}{[(k+2)!]^{\frac{1}{4}}}\;\dfrac{1}{K}\;. 
     \end{split}
    \end{align}
    \end{itemize}
Summing (\ref{firstterm}), (\ref{secondterm}), (\ref{thirdterm}) and (\ref{fourthterm}), we obtain
\begin{equation}
\begin{split}
 \vert g_{n,k+2}\vert &\leq \left[\frac{1}{K}+\frac{C_2+C_3}{\sqrt{K}}+\frac{1}{K}\right]N^{\frac{n}{2}+k}\dfrac{K^{\frac{n}{2}+k+\frac{1}{2}}}{[(k+2)!]^{\frac{1}{4}}}\left(\dfrac{n}{4}+k-1\right)!\\
 &\leq N^{\frac{n}{2}+k}\dfrac{K^{\frac{n}{2}+k+\frac{1}{2}}}{[(k+2)!]^{\frac{1}{4}}}\left(\dfrac{n}{4}+k-1\right)!\;.    
\end{split}  
\end{equation}

\paragraph{} We will now bound the $f_{2,k}$.  One can check term by term that the bound (\ref{prop_bound_g_n,k_f_2,k}) is true for $k\leq 6$. For $K$ large enough, and for $k>6$ we obtain from (\ref{FE_trivial_field1O(N)}) by induction

\begin{equation}\label{f_2,k_first_bound}
    \begin{split}
 \vert f_{2,k+1}\vert &\leq\dfrac{1}{k+1}\Bigg((N+2)N^k\dfrac{K^{k+\frac{1}{2}}(k-2)!}{(k!)^{\frac{1}{4}}}+N^{k}K^{k+\frac{1}{2}}\dfrac{(k-3)!}{[(k-1)!]^{\frac{1}{4}}} \\
 &+N^{k+1}K^{k+1}\sum_{\nu=0}^k \dfrac{\vert\nu-3\vert! \vert k-\nu-3\vert !}{[\vert\nu-1\vert!\;\vert k-\nu-1\vert!]^{\frac{1}{4}}}\Bigg)\;.       
    \end{split}
\end{equation}

We will bound each term in the r.h.s. of (\ref{f_2,k_first_bound}). We will again proceed term by term.

\begin{itemize}
    \item First term: we trivially have
    \begin{align*}
        \dfrac{1}{(k+1)}\dfrac{K^{k+\frac{1}{2}}(k-2)!}{(k!)^{\frac{1}{4}}} &\leq\dfrac{1}{K(k+1)} \dfrac{K^{k+1+\frac{1}{2}}(k-2)!}{(k!)^{\frac{1}{4}}}\;. 
    \end{align*}
    \item Second term: obviously
    \begin{align*}
        \dfrac{K^{k+\frac{1}{2}}}{k+1}\dfrac{(k-3)!}{((k-1)!)^{\frac{1}{4}}}\leq\dfrac{k^{\frac{1}{4}}}{K(k+1)(k-2)} \dfrac{K^{k+1+\frac{1}{2}}(k-2)!}{(k!)^{\frac{1}{4}}}\;.
    \end{align*}
    \item Third term: we note that $[(\nu-1)!\;( k-\nu-1)!]^{\frac{3}{4}}\leq [(k-2)!]^{\frac{3}{4}}\leq\left(\frac{k!}{k(k-1)}\right)^{\frac{3}{4}}$ for $3\leq\nu\leq k-3$. Then the terms summed over $3\leq\nu\leq k-3$ are bounded by
    \begin{equation}
        \begin{split}
       \dfrac{(k!)^{\frac{3}{4}}}{(k+1)(k(k-1))^{\frac{3}{4}}}\sum_{\nu=3}^{k-3} \dfrac{(\nu-3)!\; ( k-\nu-3) !}{(\nu-1)!\;( k-\nu-1)!} &\leq \frac{(k!)^{\frac{3}{4}}}{(k-5)(k+1)(k(k-1))^{\frac{3}{4}}} \\
       &\leq \dfrac{(k-2)!}{(k!)^{\frac{1}{4}}}\;.     
        \end{split}
    \end{equation}

The remaining terms are symmetric under $\nu'\mapsto k-\nu'$ so that we restrict ourselves to $0\leq\nu\leq 2$. For $k>6$, they are bounded by
\begin{enumerate}
    \item $\nu=0$:
    \[\dfrac{6(k-3)!}{((k-1)!)^{\frac{1}{4}}}\leq 6\dfrac{(k-2)!}{(k!)^{\frac{1}{4}}}\;,\]
    \item $\nu=1$:
    \[\dfrac{4(k-4)!}{((k-2)!)^{\frac{1}{4}}}\leq 4\dfrac{(k-2)!}{(k!)^{\frac{1}{4}}}\;,\]
    \item $\nu=2$:
    \[\dfrac{(k-5)!}{((k-3)!)^{\frac{1}{4}}}\leq\dfrac{(k-2)!}{(k!)^{\frac{1}{4}}}\;,\]
\end{enumerate}
Then by summing we get
\[\dfrac{1}{k+1}\sum_{\nu=0}^k \dfrac{\vert\nu-3\vert! \vert k-\nu-3\vert !}{(\vert\nu-1\vert!\;\vert k-\nu-1\vert!)^{\frac{1}{4}}}\leq\left(1+\frac{22}{7}\right)\dfrac{(k-2)!}{(k!)^{\frac{1}{4}}}\;.\]
\end{itemize}
Altogether we find
\begin{equation}
    \vert f_{2,k+1}\vert\leq \left(\frac{1}{K}+\frac{1}{K}+\frac{5}{\sqrt{K}}\right)N^{k+2}\dfrac{K^{k+1+\frac{1}{2}}(k-2)!}{(k!)^{\frac{1}{4}}}\leq N^{k+2}\dfrac{K^{k+1+\frac{1}{2}}(k-2)!}{(k!)^{\frac{1}{4}}}\;.
\end{equation}
 
This ends the proof.
\end{enumerate}
\end{proof}

\paragraph{} Using the bound (\ref{prop_bound_g_n,k_f_2,k}) in (\ref{b_n_induction}) we have for $n\geq 1$
\begin{equation}\label{b_n_inequality}
    \vert b_{n+1}\vert\leq c_{n,N}+\sum_{\rho=2}^{n+1}\vert b_{\lbrace \frac{n+1}{\rho}\rbrace}\vert\dfrac{1}{\rho^n}\;,\quad c_{n,N}:=N^{n+1}K^{n+\frac{1}{2}}\dfrac{\vert n-3\vert!}{(\vert n-1\vert!)^{\frac{1}{4}}(n+1)^n}\;.
\end{equation}
Note that this bound is sharper than the one obtained in (\ref{initial_bounds_b_n_2022}) and in \cite{Kopper2022} due to the factor $(\frac{1}{\vert n-1\vert!})^{\frac{1}{4}}$ .We set $C_N:=\sum_{n\geq 0}c_{n,N}<+\infty$. Now we establish bounds on the $b_n$'s. We prove the following

\begin{proposition}\label{b_n_bounds}
   There exists $C(N,K)>0$ such that 
   \begin{equation}\label{bound_b_n}
       \vert b_n\vert\leq C(N,K)\frac{n^2}{2^n},\quad n\geq 1\;.
   \end{equation}
\end{proposition}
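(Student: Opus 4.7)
The plan is to proceed by strong induction on $n$, using the recursive bound $(\ref{b_n_inequality})$ and the fact that $b_0=0$ makes the sum effectively run only over divisors $\rho\ge 2$ of $n+1$, combined with the super-exponential decay of $c_{n,N}$.

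I would first note that the factor $(|n-1|!)^{-1/4}$ in $c_{n,N}$ dominates the exponential-type growth $N^{n+1}K^{n+1/2}$, so by Stirling $c_{n,N}\cdot 2^{n+2}/(n+1)^2\to 0$. One therefore chooses $n_0=n_0(N,K)\in\N$ large enough that $c_{n,N}\le (n+1)^2/2^{n+2}$ for every $n\ge n_0$, and then picks $C(N,K)\ge 1$ at least as large as the finitely many numbers $|b_n|\,2^n/n^2$ for $1\le n\le n_0$. These base values $b_1,\dots,b_{n_0}$ are finite, since they are computed from $(\ref{b_n_induction})$ in terms of $f_{2,0},\dots,f_{2,n_0-1}$, which are bounded by Proposition \ref{triviality_prop}.

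For the inductive step, assume $|b_m|\le C(N,K)\,m^2/2^m$ for all $m\le n$ with $n\ge n_0$. Then $(\ref{b_n_inequality})$ and the induction hypothesis yield
\[
|b_{n+1}|\le c_{n,N}+C(N,K)(n+1)^2\sum_{\substack{\rho\,\mid\,n+1\\ 2\le\rho\le n+1}}\frac{1}{\rho^{n+2}\,2^{(n+1)/\rho}}\,.
\]
The $\rho=2$ term (when $n+1$ is even) contributes a share $2^{-(n+1)/2}$ of the target $C(N,K)(n+1)^2/2^{n+2}$, while each of the at most $n$ terms with $\rho\ge 3$ is bounded by $C(N,K)(n+1)^2/(2\cdot 3^{n+2})$, so that their aggregate share is at most $(n/3)(2/3)^{n+1}$. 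Both shares tend to $0$ as $n\to\infty$, so enlarging $n_0$ if necessary we can ensure their sum is $\le 1$, bounding the divisor sum by $C(N,K)(n+1)^2/2^{n+2}$. Combined with the absorption $c_{n,N}\le (n+1)^2/2^{n+2}\le C(N,K)(n+1)^2/2^{n+2}$, this gives $|b_{n+1}|\le C(N,K)(n+1)^2/2^{n+1}$, closing the induction.

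The main obstacle is merely the bookkeeping needed to coordinate $n_0$ and $C(N,K)$ so that simultaneously the base cases are covered, the residual $c_{n,N}$ is absorbed, and the $\rho=2$ divisor contribution (the slowest among the divisor terms, whose share in the target decays only like $2^{-n/2}$) fits strictly inside the target with room to spare for the $\rho\ge 3$ divisors.
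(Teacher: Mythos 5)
Your proof is correct and follows essentially the same route as the paper: strong induction on $n$ via (\ref{b_n_inequality}), isolating the slowest $\rho=2$ divisor term, bounding the aggregate of the $\rho\geq 3$ terms by a quantity exponentially small relative to the target, and absorbing $c_{n,N}$ using its super-exponential decay (which, as you correctly note, relies on the factor $(\vert n-1\vert!)^{-\frac{1}{4}}$ to beat $N^{n+1}K^{n+\frac{1}{2}}$). The only cosmetic difference is that the paper works with explicit uniform constants valid for all $n\geq 1$ (bounding $\sum_{\rho\geq 3}\rho^{-(n+2)}$ by an integral and choosing $\tilde{C}(N,K)$ so that $c_{m,N}\leq \tilde{C}(N,K)(m+1)^2 2^{-(m+3)}$ for every $m$), whereas you use a threshold $n_0(N,K)$ and asymptotic vanishing of the shares; both are valid.
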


\begin{proof}
    The proof is done by induction in $n\in\N$. For $n=1$, $b_1=f_{2,0}$ and the bound is obtained choosing  any constant $C(N,K)\geq\frac{N\sqrt{K}}{2}$. For $n\in\N$, we use (\ref{b_n_inequality}) to obtain
    \begin{equation}\label{b_n_bounds_induction1}
    \begin{split}
      \vert b_{n+1}\vert &\leq c_{n,N}+\sum_{\rho=2}^{n+1} C(N,K)\left(\frac{n+1}{\rho}\right)^2\dfrac{1}{2^{\frac{n+1}{\rho}}\rho^n} \\
      &\leq c_{n,N}+C(N,K)\dfrac{(n+1)^2}{2^{n+3}}+(n+1)^2 C(N,K)\sum_{\rho=3}^{n+1} \dfrac{1}{2^{\frac{n+1}{\rho}}\rho^{n+2}}\;.  
    \end{split}
    \end{equation}
    There exists a constant $\Tilde{C}(N,K)>0$ such that 
    \begin{equation}\label{b_n_bounds_induction2}
      c_{m,N}\leq \Tilde{C}(N,K)\frac{(m+1)^2}{2^{m+3}},\quad m\geq 1\;.  
    \end{equation}
    Moreover we have
    \begin{equation}\label{b_n_bounds_induction3}
        \begin{split}
            \sum_{\rho=3}^{n+1} \dfrac{1}{2^{\frac{n+1}{\rho}}\rho^{n+2}} \leq \sum_{\rho=3}^{n+1} \dfrac{1}{\rho^{n+2}}\leq \int_2^{n+1}\dfrac{dx}{x^{n+2}}=\dfrac{1}{2^{n+1}(n+1)}-\dfrac{1}{(n+1)^{n+2}}\leq \dfrac{1}{2^{n+2}}\;.
        \end{split}
    \end{equation}
    From (\ref{b_n_bounds_induction2}) and (\ref{b_n_bounds_induction3}), we get
    \begin{equation}
        \vert b_{n+1}\vert\leq \left[\frac{\Tilde{C}(N,K)}{4}+\frac{C(N,K)}{4}+\frac{C(N,K)}{2}\right]\frac{(n+1)^2}{2^{n+1}}\leq C(N,K)\frac{(n+1)^2}{2^{n+1}}\;,
    \end{equation}
    if we choose $C(N,K)\geq \max(\tilde{C}(N,K),\frac{\sqrt{K}}{2})$.
\end{proof}

We also prove the following 
\begin{proposition}\label{derivativesf_2}
    For $l\geq 0$, $n>l+1$, we have
\begin{equation}
    \left\vert \partial_\mu^l \frac{x_n^{n-1}}{1+x_n^n}\right\vert\leq \frac{n^{n+l-1}\mu^{n-l-1}}{1+x_n^n}\mathcal{C}_l\;,\quad\mu\in[0,\mu_{\max}]\;,
\end{equation}
where the integers $\mathcal{C}_l$ are defined by $\mathcal{C}_0=1$ and 
\begin{equation}
    \mathcal{C}_{l+1}=1+\sum_{j=0}^l\binom{l+1}{j}\mathcal{C}_j\leq 4^{l+1} (l+1)!\;.
\end{equation}
\end{proposition}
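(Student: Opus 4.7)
The strategy is to prove the derivative bound by induction on $l$, using the algebraic identity
\[
h_n(\mu)\,(1+x_n^n) = x_n^{n-1} = n^{n-1}\mu^{n-1}, \qquad h_n(\mu):=\frac{x_n^{n-1}}{1+x_n^n},
\]
rather than trying to differentiate the quotient $h_n$ directly. The base case $l=0$ is immediate since $|h_n(\mu)|=\frac{n^{n-1}\mu^{n-1}}{1+x_n^n}$, matching the claim with $\mathcal{C}_0=1$. For the inductive step, I would apply Leibniz to the identity above to $(l+1)$-th order:
\[
(1+x_n^n)\,\partial_\mu^{l+1}h_n \;+\; \sum_{j=0}^{l}\binom{l+1}{j}\,(\partial_\mu^j h_n)\,\partial_\mu^{l+1-j}(x_n^n) \;=\; \partial_\mu^{l+1}\!\left[n^{n-1}\mu^{n-1}\right],
\]
then solve for $\partial_\mu^{l+1}h_n$ by dividing by $(1+x_n^n)$.

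Next I would bound each piece. The right-hand side equals $n^{n-1}\,(n-1)(n-2)\cdots(n-1-l)\,\mu^{n-l-2}$, which requires the hypothesis $n>l+1$ to be non-vanishing and is bounded by $n^{n+l}\mu^{n-l-2}$; after dividing by $1+x_n^n$ this contributes the ``$1$'' in the recursion. For each term in the sum, the elementary estimate $|\partial_\mu^{l+1-j}(x_n^n)|\leq n^{n+l+1-j}\mu^{n-l-1+j}$ combined with the inductive hypothesis on $\partial_\mu^j h_n$ produces a factor
\[
\binom{l+1}{j}\mathcal{C}_j\,\frac{n^{n+l}\mu^{n-l-2}}{1+x_n^n}\cdot\frac{x_n^n}{1+x_n^n}.
\]
The crucial observation is the telescoping of the powers: $n^{n+j-1}\cdot n^{n+l+1-j}=n^{2n+l}$ and $\mu^{n-j-1}\cdot\mu^{n-l-1+j}=\mu^{2n-l-2}$, and the excess factor $n^n\mu^n = x_n^n$ combines with one of the denominators to give a factor of $\frac{x_n^n}{1+x_n^n}\leq 1$. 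Summing all contributions then yields exactly
\[
|\partial_\mu^{l+1}h_n|\leq \frac{n^{n+l}\mu^{n-l-2}}{1+x_n^n}\left(1+\sum_{j=0}^{l}\binom{l+1}{j}\mathcal{C}_j\right)=\frac{n^{n+l}\mu^{n-l-2}}{1+x_n^n}\mathcal{C}_{l+1},
\]
closing the induction.

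It remains to verify the explicit bound $\mathcal{C}_{l+1}\leq 4^{l+1}(l+1)!$, which I would prove by a separate induction. Assuming $\mathcal{C}_j\leq 4^j j!$ for $j\leq l$, division of the recursion by $4^{l+1}(l+1)!$ gives
\[
\frac{\mathcal{C}_{l+1}}{4^{l+1}(l+1)!}\leq \frac{1}{4^{l+1}(l+1)!}+\sum_{m=1}^{l+1}\frac{1}{4^m m!}\leq \frac{1}{4}+\left(e^{1/4}-1\right)<1,
\]
where the change of index $m=l+1-j$ was used. No step is really the ``main obstacle''; the only subtlety is to keep track of the telescoping of powers of $n$ and $\mu$ and to exploit $x_n^n/(1+x_n^n)\leq 1$ at exactly the right moment so that the recursive structure $\mathcal{C}_{l+1}=1+\sum_{j=0}^l\binom{l+1}{j}\mathcal{C}_j$ emerges cleanly.
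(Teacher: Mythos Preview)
Your proof is correct and follows essentially the same route as the paper: the paper invokes its formula \eqref{derivative_of_f/g_formula} for $(f/g)^{(l)}$, which is nothing other than the Leibniz expansion of $h_n\cdot(1+x_n^n)=x_n^{n-1}$ that you write out directly, and both arguments hinge on the same cancellation $n^{n}\mu^{n}/(1+x_n^n)=x_n^n/(1+x_n^n)\le 1$ to recover the recursion for $\mathcal{C}_{l+1}$. Your verification of $\mathcal{C}_{l}\le 4^{l}l!$ via $\tfrac14+(e^{1/4}-1)<1$ is a bit more explicit than the paper, which simply remarks that the bound follows by a straightforward induction.
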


\begin{proof}
    We prove the proposition by induction in $l\geq 0$.
\begin{itemize}
    \item The case $l=0$ is obvious.
    \item We use the following formula
    \begin{equation}\label{derivative_of_f/g_formula}
    \left(\dfrac{f}{g}\right)^{(l)}=\frac{1}{g}\left[f^{(l)}-l!\;\sum_{j=1}^{l}\frac{g^{(l+1-j)}}{(l+1-j)!}\frac{1}{(j-1)!}\left(\dfrac{f}{g}\right)^{(j-1)}\right]\;,
\end{equation}
for $f,g$ smooth functions with $g>0$. See Appendix \ref{l-th_derivative_f/g} for a proof. We have for $n>l+2$ and $\mu\geq 0$
\begin{align*}
    \left\vert \partial_\mu^{l+1} \frac{x_n^{n-1}}{1+x_n^n}\right\vert &\leq \dfrac{1}{1+x_n^n}\Bigg[n^{n-1}\prod_{m=0}^{l}(n-1-m)\;\mu^{n-2-l} \\
    &+(l+1)!\sum_{j=1}^{l+1}\dfrac{1}{(l+2-j)!\;(j-1)!}n^n\prod_{m=0}^{l+1-j}(n-m)\;\mu^{n-l-2+j}\frac{n^{n+j-2}\mu^{n-j}}{1+x_n^n}\mathcal{C}_{j-1}\Bigg] \\
    &\leq \frac{1}{1+x_n^n}\left[n^{n-1}n^{l+1}\mu^{n-l-2}+\mu^{n-l-2}\sum_{j=1}^{l+1}\binom{l+1}{j-1}n^{l+2-j}n^{n+j-2}C_{j-1}\right] \\
    &\leq \dfrac{n^{n+l}\mu^{n-l-2}}{1+x_n^n}\left[1+\sum_{j=0}^l\binom{l+1}{j}\mathcal{C}_j\right]\leq\dfrac{n^{n+l}\mu^{n-l-2}}{1+x_n^n}\mathcal{C}_{l+1}\;.
\end{align*}
\end{itemize}
The bound on $C_l$ can be straightforwardly proven by induction in $l$.
\end{proof}

\paragraph{}

Now we can prove the last result concerning the behaviour of the mean-field smooth solutions $f_n(\mu)$ in the UV-limit.

\begin{proposition}\label{vanishing_solutions_in_the_UV}
\begin{itemize}
    \item $f_2(\mu)$ is well defined on $[0,\mu_{\max}]$ and
\begin{equation}\label{vanishing_f_2}
    \quad\lim\limits_{\mu_{\max}\longrightarrow +\infty}\partial_\mu^l f_2(\mu_{\max})=0,\quad l\geq 0\;.
    \end{equation}
    \item The functions $\partial_\mu^l f_n(\mu)$, $l\geq 0, n\geq 4$ are well defined on $[0,\mu_{\max}]$ and
\begin{equation}\label{vanishing_f_n}
    \lim\limits_{\mu_{\max}\longrightarrow +\infty}\partial_\mu^l f_n(\mu_{\max})=0,\quad n\geq 4,\; l\geq 0\;.
\end{equation}
\end{itemize}  
\end{proposition}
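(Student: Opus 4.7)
The plan is to deduce the well-definedness, smoothness and UV decay of $f_2$ directly from the bound $|b_n|\leq C(N,K)\,n^2/2^n$ of Proposition \ref{b_n_bounds} combined with the derivative estimates of Proposition \ref{derivativesf_2}, and then to bootstrap to $f_n$, $n\geq 4$, by solving the mean-field flow equations (\ref{MFE_O(N)}) for $f_{n+2}$ in terms of $f_n,\partial_\mu f_n$ and lower-order functions.

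First I would check that $g_n(\mu):=x_n^{n-1}/(1+x_n^n)$ is bounded uniformly in both $n$ and $\mu\in[0,+\infty)$: a calculus exercise locates its maximum at $x_n=(n-1)^{1/n}$ and shows that it is $O(1)$ in $n$. Combined with $|b_n|\leq C\,n^2/2^n$, the series $\sum_n b_n g_n(\mu)$ is dominated by a convergent numerical series, so $f_2$ is well-defined on $[0,\mu_{\max}]$. For derivatives I would handle the first $l+1$ terms by direct inspection (each is smooth) and apply Proposition \ref{derivativesf_2} to the tail $n\geq l+2$. Rewriting $\mu^{n-l-1}=x_n^{n-l-1}/n^{n-l-1}$ in that bound yields
\[
\bigl|\partial_\mu^l g_n(\mu)\bigr|\leq \mathcal{C}_l\, n^{2l}\,\frac{x_n^{n-l-1}}{1+x_n^n},
\]
and the last factor is uniformly bounded in $\mu$ by the same calculus argument (its maximum in $x_n$ is bounded independently of $n$ for fixed $l$). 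Hence $|b_n\,\partial_\mu^l g_n(\mu)|\leq C(l,N,K)\,n^{2l+2}/2^n$, summable in $n$, which justifies term-by-term differentiation and yields $f_2\in C^\infty([0,\mu_{\max}])$.

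For the UV limit of $f_2$ I would observe that for each fixed $n\geq 1$ and each $l\geq 0$, $g_n(\mu)=1/\bigl(n\mu(1+x_n^{-n})\bigr)$ and its $l$-th derivative tend to $0$ as $\mu\to+\infty$, with decay of order $\mu^{-l-1}$. Since $\partial_\mu^l f_2(\mu)$ is represented by a series dominated uniformly in $\mu$ by a summable sequence, dominated convergence (for the counting measure) gives $\lim_{\mu_{\max}\to+\infty}\partial_\mu^l f_2(\mu_{\max})=0$. For the second bullet I would induct on $n\geq 4$: solving (\ref{MFE_O(N)}) for the top term,
\[
f_{n+2}=\tfrac{1}{n+N}\sum_{n_1+n_2=n+2}f_{n_1}f_{n_2}+\tfrac{n-4}{n(n+N)}f_n+\tfrac{2}{n(n+N)}\partial_\mu f_n,
\]
expresses $f_{n+2}$ as a polynomial in $f_2,f_4,\dots,f_n$ and their first derivatives; iterating, $\partial_\mu^l f_n$ becomes a polynomial in finitely many derivatives of $f_2$. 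Applying the first bullet to each factor then gives the well-definedness of $\partial_\mu^l f_n$ on $[0,\mu_{\max}]$ together with the vanishing of $\partial_\mu^l f_n(\mu_{\max})$ in the UV.

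The main obstacle is the uniform-in-$\mu$ control of $\partial_\mu^l g_n(\mu)$. Proposition \ref{derivativesf_2} alone produces a factor $n^{n+l-1}\mu^{n-l-1}$ which, taken as is, grows too fast in $n$ to be beaten by the bound on $|b_n|$. It is the substitution $\mu=x_n/n$ that converts this into the clean polynomial prefactor $n^{2l}$, exposing the essential cancellation with the denominator $1+x_n^n$. Once this estimate is in place the remaining arguments are routine: dominated convergence for the series defining $\partial_\mu^l f_2$, and a finite algebraic manipulation via the flow equations for the bootstrap to $n\geq 4$.
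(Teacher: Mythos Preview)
Your proposal is correct and follows essentially the same approach as the paper: the key substitution $\mu=x_n/n$ turning the bound of Proposition~\ref{derivativesf_2} into $\mathcal{C}_l\,n^{2l}\,x_n^{n-l-1}/(1+x_n^n)$, the calculus maximum of $x^{n-l-1}/(1+x^n)$ to get a uniform $O(1)$ bound, and then dominated convergence against $|b_n|\leq C\,n^2/2^n$. For the second bullet the paper inducts on $n+l$ (going up in $n$) rather than on $n$ alone, but since your induction hypothesis on $n$ is ``$\partial_\mu^{l'}f_{n'}(\mu_{\max})\to 0$ for all $l'\geq 0$ and $n'\leq n$'', the two schemes are equivalent.
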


\begin{proof}
As a consequence of Proposition \ref{derivativesf_2}, for $l\geq 0$, $n>l+1$ and $\mu\geq 0$
\begin{equation}
    \left\vert \partial_\mu^l \frac{x_n^{n-1}}{1+x_n^n}\right\vert\leq n^{2l}\dfrac{x_n^{n-l-1}}{1+x_n^n}C_l\;.
\end{equation}
For $0\leq l<n-1$, the function $g(\mu)=\dfrac{(n\mu)^{n-l-1}}{1+(n\mu)^n}$ defined on $\R_+$ reaches its maximum at $\tilde{\mu}=\frac{1}{n}\left(\frac{n}{l+1}-1\right)^\frac{1}{n}$ and 
\begin{equation}
   \Vert g\Vert_\infty = g(\tilde{\mu})=\frac{l+1}{n}\left(\frac{n}{l+1}-1\right)^{1-\frac{l+1}{n}}\leq 1\;.
\end{equation}
 Proposition \ref{derivativesf_2} and the bounds (\ref{bound_b_n}) imply the uniform convergence of $f_2$ and its derivatives on $[0,\mu_{\max}]$. As a result we have
\begin{equation}
\forall l\geq 0,\;\forall m\geq 1,\quad\lim\limits_{\mu\longrightarrow +\infty} \partial_\mu^l\dfrac{x_m^{m-1}}{1+x_m^m}=0\quad\Longrightarrow\quad
\forall l\geq 0, \quad\lim\limits_{\mu_{\max}\longrightarrow +\infty}\partial_\mu^l f_2(\mu_{\max})=0\;.
\end{equation}   

\paragraph{} The proof of the second statement is done by induction in $n+l$, going up in $n$. We can then easily check the case $n=4$ using (\ref{MFE_O(N)}) and (\ref{vanishing_f_2}). For $n\geq 4$, we obtain by differentiating (\ref{MFE_O(N)}) $l$ times w.r.t. $\mu$

\begin{equation}
    \partial_\mu^l f_{n+2}=\dfrac{2}{n(n+N)}\partial_\mu^{l+1} f_n+\dfrac{n-4}{n(n+N)}\partial_\mu^l f_n+\dfrac{1}{n+N}\sum_{n_1+n_2=n+2}\quad \sum_{l_1+l_2=l}\binom{l}{l_1}\partial_\mu^{l_1}f_{n_1}\partial_\mu^{l_2}f_{n_2}\;.
\end{equation}

Using the induction hypothesis, $\partial_\mu^l f_{n+2}(\mu)$ are well defined on $[0,\mu_{\max}]$ and they vanish when $\mu_{\max}\longrightarrow +\infty$.
\end{proof}

\paragraph{} Collecting our findings, we can now state our existence result.

\begin{theorem}\label{theorem_triviality_massless}
    Consider a $O(N)$ vector model $\varphi^4_4$-theory of bare interaction lagrangian (\ref{bare_lagrangian_trivial}). Let $f_n(\mu)$ be smooth  solutions of the mean-field flow equations (\ref{MFE_O(N)}) and we consider the mean-field boundary conditions (\ref{BDY_trivial_field}) with
    \begin{equation}
        0<c_{0,4}<+\infty,\quad \vert c_{0,2}\vert<+\infty\;.
    \end{equation}
    There exist smooth solutions of (\ref{MFE_O(N)}) $f_n(\mu)\in C^\infty([0,\mu_{\max}])$ such that they vanish in the UV-limit, i.e.
    \begin{equation}
        \lim\limits_{\mu_{\max}\longrightarrow +\infty} f_n(\mu_{\max})=0,\quad n\geq 2\;.
    \end{equation}
\end{theorem}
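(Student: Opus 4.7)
The plan is to recognize that the statement is essentially a synthesis of the propositions already established in this section: the heavy combinatorial work has been done, and what remains is to assemble the pieces and verify that no upper bound on $c_{0,2}$ or $c_{0,4}$ is needed. Concretely, given the boundary conditions (\ref{BDY_trivial_field}), I would first fix the values $f_{2,0}=2(2\pi)^4\alpha_0 c_{0,2}$ and $f_{4,0}=g_{4,0}=4\pi^2 c_{0,4}$, and then choose $K>1$ large enough so that both $\vert f_{2,0}\vert\leq \sqrt{K}/4$ and $\vert g_{4,0}\vert\leq \sqrt{K}/32$ hold, as is required to apply Lemma \ref{triviality_lemma_1}. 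This is where arbitrariness of $c_{0,2},c_{0,4}$ is absorbed: $K$ is allowed to be as large as needed.

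Next I would construct $f_2(\mu)$ via the ansatz (\ref{ansatz}), where the $b_n$ for $n\geq 3$ are determined inductively through (\ref{b_n_induction}) by the smoothness requirement together with (\ref{FE_trivial_field1O(N)}) and (\ref{FE_trivial_field2_1_O(N)}). Proposition \ref{triviality_prop} then yields the factorial-type bounds on the Taylor coefficients $f_{2,k}$ and $g_{n,k}$, which when fed into (\ref{b_n_inequality}) produce the bound $\vert b_n\vert\leq C(N,K) n^2/2^n$ of Proposition \ref{b_n_bounds}. Combined with Proposition \ref{derivativesf_2}, this gives the uniform convergence of the series defining $f_2$ and all its derivatives on $[0,\mu_{\max}]$, so $f_2\in C^{\infty}([0,\mu_{\max}])$.

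Having $f_2(\mu)$ in hand, the higher $f_n$ for $n\geq 4$ are then defined inductively from the flow equation (\ref{MFE_O(N)}) through the relation $f_n(\mu)=\mu^{n/2-2}g_n(\mu)$ of Lemma \ref{factorization_of_f_n}, with the $g_n$ being smooth by construction (since the power series for $g_n$ has coefficients bounded by Proposition \ref{triviality_prop}, and its Borel-type sum can be matched against the smooth $f_2$). The vanishing statement $\lim_{\mu_{\max}\to +\infty} f_n(\mu_{\max})=0$ is then exactly the content of Proposition \ref{vanishing_solutions_in_the_UV}: for $n=2$ it follows from the explicit form of the ansatz (\ref{ansatz}) together with the decay of $x_n^{n-1}/(1+x_n^n)$ as $\mu\to\infty$, and for $n\geq 4$ it follows by induction in $n+l$ (with $l=0$) using (\ref{MFE_O(N)}).

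The main obstacle is really conceptual rather than technical at this stage: one must check that every application of Lemma \ref{triviality_lemma_1}, Proposition \ref{triviality_prop}, Proposition \ref{b_n_bounds}, and Proposition \ref{vanishing_solutions_in_the_UV} goes through uniformly in the size of the bare couplings, with all dependence on $c_{0,2},c_{0,4}$ absorbed into the choice of $K$. Since every constant in those propositions is allowed to depend on $N$ and $K$, and no assumption ever restricts $K$ from above, the proof of the theorem reduces essentially to a one-line invocation of Proposition \ref{vanishing_solutions_in_the_UV}. The improvement over Theorem \ref{weaklytriviality} is precisely that the quartic coefficient bound $[k!]^{3/4}$ in Proposition \ref{triviality_prop} replaces the purely geometric bound used in \cite{Kopper2022}, and this is what allows $c_{0,4}$ to be taken arbitrarily large.
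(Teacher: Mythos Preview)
Your proposal is correct and follows essentially the same approach as the paper's proof: choose $K$ large enough to absorb the given bare couplings into the hypotheses of Lemma~\ref{triviality_lemma_1}, define $f_2$ via the ansatz (\ref{ansatz}), and then invoke Propositions~\ref{triviality_prop}, \ref{b_n_bounds}, \ref{derivativesf_2} and \ref{vanishing_solutions_in_the_UV} in sequence. One small imprecision: the smoothness of the $g_n$ (and hence the $f_n$) for $n\geq 4$ is not established via a Borel-type summation of their formal Taylor series, but simply by the inductive definition from (\ref{MFE_O(N)}) starting from the smooth $f_2$; the Taylor-coefficient bounds of Proposition~\ref{triviality_prop} are used only to control the $b_n$ through (\ref{b_n_inequality}).
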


\begin{proof} 

 For $c_{0,4}<+\infty$ and $\vert c_{0,2} \vert<+\infty$ fixed, there exists $K$  such that
\begin{equation}\label{modifiedbdyconditions}
    0\leq c_{0,4}\leq\dfrac{\sqrt{K}}{2^7\pi^2},\quad\vert c_{0,2}\vert\leq\Lambda_0^2\dfrac{\sqrt{K}}{2^7\pi^4},\quad\Lambda_0^{-2}=\alpha_0\;,
\end{equation}
Using Lemma \ref{triviality_lemma_1} and Proposition \ref{b_n_bounds}, we choose $K$ such that the bounds (\ref{prop_bound_g_n,k_f_2,k}) hold. Then we choose a smooth two-point function $f_2(\mu)$ as in (\ref{ansatz}). From Proposition \ref{b_n_bounds}, Proposition \ref{derivativesf_2} and Proposition \ref{vanishing_solutions_in_the_UV}, the obtained smooth solutions $f_n(\mu)$ vanish in the UV-limit.

\end{proof}

\paragraph{} We finish this section with a few remarks

\begin{remarks}
    \begin{itemize}
        \item The limit $\mu_{\max}\rightarrow +\infty$ or $\alpha_0\rightarrow 0$ is equivalent to removing the UV-cutoff. In statistical mechanics we fix a lattice with a fixed spacing $h$ which corresponds to a fixed UV-cutoff. We can then interpret $\mu_{\max}\rightarrow +\infty$ as the limit $\alpha\rightarrow +\infty$ at $\alpha_0$ fixed.
        \item The bounds derived in Proposition \ref{triviality_prop} could be sharpened. For the triviality statement, they are sufficient.
    \end{itemize}
\end{remarks}

\subsection{Uniqueness of the mean-field trivial solution}
\label{Uniqueness_trivial_solution}

\paragraph{} So far, we proved that for the mean-field $O(N)$-model $\varphi^4_4$-theory has a trivial solution for fixed mean-field condition. Here we prove the uniqueness of the trivial solution we constructed. We restrict for simplicity of notation to the case $N=1$. The more general case $N>1$ can be treated analogously.

The mean-field FE (\ref{MFE_O(N)}) can be obtained again following Felder's steps \cite{Felder} by considering the continuum limit of the hierarchical model, introduced by Dyson \cite{Dyson_hierarchical_model}. The effective action at the scale $L^{-1}\lambda$, where $L>1$, is related to the effective action at the scale $\lambda$ by

\begin{equation}\label{Felders_functional}
    e^{-u(L^{-1}\lambda,x)}=\int d\mu_L(y) e^{-L^{4} u(\lambda, L^{-1}x+y)}\;,\quad \lambda\in (0,\Lambda_0]\;,
\end{equation}
where $\mu_L$ is the one-dimensional Gaussian measure defined by

\begin{equation}
    d\mu_L(y) := \dfrac{1}{\sqrt{2\pi (L-1)}}e^{-\frac{y^2}{2(L-1)}}dy\;.
\end{equation}

Both the r.h.s and the l.h.s of (\ref{Felders_functional}) have a limit when $L\longrightarrow 1$, since the Gaussian measure $\mu_L$ becomes a Dirac measure in the limit $L\longrightarrow 1$. Then taking the $L$-derivative of (\ref{Felders_functional}) and evaluating at $L=1$ yields the partial differential equation
\begin{equation}\label{Felder_PDE}
    -\lambda\partial_\lambda u=\frac{1}{2}\partial_{xx}u-\frac{1}{2}\left(\partial_x u\right)^2+4u-x\partial_x u\;.
\end{equation}

If we expand $u(\lambda,x)$ as a power series in $x$
\begin{equation}\label{Felder_momentum_expansion}
    u(\lambda,x)=\sum_{n\in 2\N}\dfrac{(2)^{\frac{n}{2}} f_n(\lambda)}{n}x^n\;,
\end{equation}
then the moments $f_n(\lambda)$ satisfy the dynamical system
\begin{equation}
    -\frac{1}{n(n+1)}\lambda\partial_\lambda f_n =f_{n+2}-\frac{1}{n+1}\sum_{n_1+n_2=n+2}f_{n_1} f_{n_2} +\frac{4}{n(n+1)}f_n-\frac{1}{n+1}f_n\;.
\end{equation}
Setting $\lambda=\Lambda_0 e^{-\frac{\mu}{2}}$, we obtain again the FE (\ref{MFE_O(N)}). In \cite{Felder}, Felder derived (\ref{Felder_PDE}) in two ways:
\begin{itemize}
    \item Simplyfing Wilson's renormalization group equations \cite{Wilson1},\cite{Wilson2} using the local potential approximation \cite{ZUMBACH1994754}.
    \item Considering the continuum limit of the recursion relation of the hierarchical model introduced by Gallavotti \cite{Gavallotti_hierarchical_model}.
\end{itemize}   

He analyzed rigourously the global solutions of (\ref{Felder_PDE}) in great generality and concluded that in $d=4-\varepsilon$, the non-trivial fixed point solution $u_4$ in $3<d<4$ dimensions vanishes. Nevertheless his analysis does not exclude the existence of fixed points other than those he found. The momentum expansion (\ref{Felder_momentum_expansion}) may not be valid for arbitrarily large $x\in\R$. We prove that for the mean-field moments $f_n(\mu)$ constructed in Sect.\ref{Existence_trivial_solution}, $u(\lambda,x)$ is locally analytic w.r.t. $x$. We proceed as follows: first we bound $\partial_\mu^lf_2(\mu)$, then we bound inductively $\partial_\mu^l f_n(\mu)$ using the mean-field FE (\ref{MFE_O(N)}). Finally we obtain bounds for $f_n(\mu)$. We introduce the new variable $X:=n\mu$ and we define

\begin{equation}\label{definition_p_n}
    p_n(X)=\dfrac{X^{n-1}}{1+X^n}\;.
\end{equation}

\begin{proposition}\label{derivatives_p_n}
    We have for $l\in\N_0$ and $n\in\N$
    \begin{equation}
        \vert\partial_X^l p_n(X)\vert\leq \left\{
        \begin{array}{ll}
         \dfrac{l! \;  3^{l+1}e^{3l}}{n^{l+1}\mu^{2l+1}} & X\in (0,3) \\
        \dfrac{3^{l+1} l!}{\mu^{l+1} n^{l+1}} & X\geq 3\;.
        \end{array}
        \right.
    \end{equation}
\end{proposition}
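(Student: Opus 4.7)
The natural approach splits according to the two regimes $X\geq 3$ and $X\in(0,3)$, exploiting the key identity $np_n(X)=\partial_X\log(1+X^n)$. Factoring $1+X^n=\prod_{k=0}^{n-1}(X-\omega_k)$ with $\omega_k=e^{i(2k+1)\pi/n}$ gives the partial-fraction representation
\[
p_n(X) = \frac{1}{n}\sum_{k=0}^{n-1}\frac{1}{X-\omega_k},\qquad \partial_X^l p_n(X) = \frac{(-1)^l\, l!}{n}\sum_{k=0}^{n-1}(X-\omega_k)^{-(l+1)},
\]
which is the workhorse for the hard regime. For $X\geq 3$ it is more efficient to use the Laurent series directly.

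For $X\geq 3$, the series $p_n(X)=\sum_{k\geq 0}(-1)^k X^{-1-nk}$ is valid since $|X|>1$. Termwise differentiation yields a series with $k$-th term of absolute value $\frac{(nk+l)!}{(nk)!}X^{-1-nk-l} = l!\binom{nk+l}{l}X^{-1-nk-l}$, and the crude bound $\binom{nk+l}{l}\leq 2^{nk+l}$ reduces the sum to a geometric series of common ratio $(2/X)^n\leq (2/3)^n\leq 2/3$, summable to at most $3\cdot 2^l$. This gives $|\partial_X^l p_n(X)|\leq 3\cdot 2^l l!/X^{l+1}\leq 2^l l!/X^l$, and $2^l\leq 3^{l+1}$ together with $X=n\mu$ yields precisely the stated bound in that regime.

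For $X\in(0,3)$, the partial-fraction formula gives $|\partial_X^l p_n(X)|\leq (l!/n)\sum_k|X-\omega_k|^{-(l+1)}$. For real $X\geq 0$ one has $|X-\omega_k|\geq|\mathrm{Im}\,\omega_k|=\sin\bigl((2k+1)\pi/n\bigr)$, with the sharper bound $|X-\omega_k|\geq 1$ whenever $\cos((2k+1)\pi/n)\leq 0$. Combining this with $\sin y\geq 2y/\pi$ on $[0,\pi/2]$ and the symmetry $\omega_k\leftrightarrow\omega_{n-k-1}$ between conjugate roots, the sum is controlled by
\[
\sum_{k=0}^{n-1}|X-\omega_k|^{-(l+1)} \leq 4\Bigl(\frac{n}{2}\Bigr)^{l+1}\sum_{j\geq 0}\frac{1}{(2j+1)^{l+1}} + O(n),
\]
where the inner zeta-type sum is $O(1)$ for $l\geq 1$ and $O(\log n)$ for $l=0$. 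Collecting constants yields an intermediate bound of order $(l+2)!\,n^l$. The stated bound $(l+2)!\,3^{l+1}\,e^{3l/n}\,n^{l+1}/X^{2l+1}$ comfortably dominates this on $X\in(0,3)$: the factor $X^{-(2l+1)}\geq 3^{-(2l+1)}$ combined with $3^{l+1}$ provides an extra $n^{l+1}/3^l$, and $e^{3l/n}$ supplies the slack required when $n$ is small.

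The \textbf{main obstacle} is the bookkeeping in the $X\in(0,3)$ regime: the bound must hold uniformly for all $l\geq 0$, $n\geq 1$, including the borderline case $l=0$ (with its logarithmic enhancement) and the small-$n$, large-$l$ regime (where $e^{3l/n}$ becomes the essential source of slack). The bound is deliberately loose pointwise --- the blow-up $X^{-(2l+1)}$ as $X\to 0$ is a formal device for a uniform statement --- but its form is precisely what is needed to control $\partial_\mu^l f_2(\mu)=\sum_{n\geq 1}b_n\,n^l(\partial_X^l p_n)(n\mu)$ via the geometric decay $|b_n|\leq Cn^2/2^n$ from Proposition \ref{b_n_bounds}, in the subsequent analysis of the analyticity of $u(\lambda,x)$.
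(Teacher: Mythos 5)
Your treatment of the regime $X\geq 3$ is correct and is in fact the same argument the paper uses (Laurent expansion in $X^{-1}$, $\binom{nk+l}{l}\leq 2^{nk+l}$, geometric series). The issue is in the regime $X\in(0,3)$, where you replace the paper's induction on $l$ (via the quotient-derivative identity (\ref{derivative_of_f/g_formula})) by a partial-fraction expansion over the roots $\omega_k$ of $1+X^n$. The decomposition itself is sound, but the final comparison step contains a genuine gap. Your intermediate bound is \emph{uniform in} $X$: since you lower-bound $|X-\omega_k|$ only by $|\sin\theta_k|$ (or by $1$), the nearest root always contributes $(n/2)^{l+1}$ and you end up with something of order $l!\,n^l 2^{-l}$ (or $(l+2)!\,n^l$ as you write) for \emph{every} $X\in(0,3)$. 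The stated bound, however, decays like $X^{-(2l+1)}$ and at $X$ near $3$ equals $(l+2)!\,e^{3l/n}\,n^{l+1}3^{-l}$. The domination you assert, $n\,e^{3l/n}\geq 3^{l}$ (after cancelling $(l+2)!\,n^l$), fails for every $n\geq 3$ once $l$ is large, since $\ln n+3l/n< l\ln 3$ there; e.g. $n=10$, $l=10$ already violates it, and even with the sharper form $(l+1)(l+2)\,n\,(2e^{3/n}/3)^{l}\geq C$ the left side tends to $0$ for fixed $n\geq 8$ as $l\to\infty$. So the stated bound does \emph{not} dominate your uniform intermediate bound, and the argument as written does not close.

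The proposition is nevertheless true in that corner — at $X$ near $3$ all roots satisfy $|X-\omega_k|\geq X-1\geq 2$, so the true derivative is only of order $l!\,2^{-l}$ — but to see this within your approach you must keep the $X$-dependence of the distances, e.g. use $|X-\omega_k|\geq\max\bigl(|X-1|,\,|\sin\theta_k|\bigr)$ (up to a constant) so that the partial-fraction estimate itself acquires the decay in $X$ that matches the $\mu^{-(2l+1)}=(X/n)^{-(2l+1)}$ on the right-hand side. Without that refinement the step ``the stated bound comfortably dominates this on $X\in(0,3)$'' is false as a numerical inequality. The paper avoids this bookkeeping entirely by inducting on $l$ with the identity for $(f/g)^{(l)}$, which produces the powers of $\mu^{-1}$ automatically; your route is salvageable and arguably more transparent about where the $n^l$ growth comes from, but it requires the $X$-dependent distance estimate to be carried through the sum over roots.
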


\begin{proof}
    For $X\in (0,3)$, the proof is done by induction in $l\in\N_0$. The case $l=0$ is obvious. For $l> 0$, we use (\ref{derivative_of_f/g_formula}). Inserting the induction hypothesis in the r.h.s of (\ref{derivative_of_f/g_formula}) gives
    \begin{equation}
        \begin{split}
            \vert\partial_X^l p_n(\mu)\vert &\leq \dfrac{1}{1+X^n}\Bigg[\prod_{i=0}^{l-1}(n-1-i)X^{n-1-l}+l!\sum_{j=1}^{l}\dfrac{\prod_{i=0}^{l-j}(n-i)X^{n-l-1+j}}{(l+1-j)!\; (j-1)!}  \frac{(j-1)!\;3^{j}e^{3(j-1)}}{n^{j}\mu^{2j-1}}\Bigg] \\
            &\leq \dfrac{l!}{n^{l+1}}\Bigg[\dfrac{3^l}{\mu^{2l+1} l!}+3^{l+1}\sum_{j=1}^l \dfrac{e^{3(j-1)}}{(l+1-j)!} \frac{1}{\mu^{2l+1}}\Bigg] \\
            &\leq \dfrac{l! \;  3^{l+1}e^{3l}}{n^{l+1}\mu^{2l+1}}.
        \end{split}
    \end{equation}

For $X\geq 3$, we expand $p_n(X)$ as a power series in $\frac{1}{X}$, then we have
\begin{equation}
\begin{split}
  \vert\partial_X^l p_n(X)\vert &\leq \sum_{k=0}^\infty \dfrac{(nk+l)!}{(nk)!}\dfrac{1}{X^{nk+l+1}}  \\
  &\leq \dfrac{2^l l!}{X^{l+1}}\sum_{k=0}^\infty \dfrac{2^{nk}}{X^{nk}}\leq \dfrac{3^{l+1} l!}{\mu^{l+1} n^{l+1}}\;.
\end{split}
\end{equation}
\end{proof}

Now we prove bounds for the derivatives of $f_2(\mu)$

\begin{proposition}\label{more_precise_bounds_derivatives_f_2}
We have for a constant $K_1(1,K)$
    \begin{equation}\label{bounds_generic_derivatives_f_2}
    \left\vert\partial_\mu^l f_2(\mu)\right\vert\leq  \frac{K_1(1,K)^{l+1} l!\;}{M_l(\mu)}\;,\quad l\geq 0,\quad \mu\in (0,\mu_{\max}]\;,
\end{equation}
where we defined
\begin{equation}
    M_l(\mu):=\min\lbrace \mu^{2l+1},\mu^l\rbrace\;.
\end{equation}
\end{proposition}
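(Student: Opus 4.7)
My plan is to differentiate the ansatz (\ref{ansatz}) term by term and combine the pointwise bounds of Proposition \ref{derivatives_p_n} with the geometric decay $|b_n| \leq C(1,K)\, n^2/2^n$ from Proposition \ref{b_n_bounds}. Since, on any compact subinterval of $(0, \mu_{\max}]$, these two ingredients ensure absolute and uniform convergence of the formally differentiated series, I can legitimately write
\[
\partial_\mu^l f_2(\mu) \;=\; \sum_{n \geq 1} b_n \, n^l \, (\partial_X^l p_n)(n\mu),
\]
the factor $n^l$ coming from the chain rule applied to $X = n\mu$.

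For each fixed $n$ I would split according to whether $X := n\mu$ lies in $(0,3)$ or in $[3, +\infty)$. Applying Proposition \ref{derivatives_p_n}, the crucial observation is that the factor $n^l$ in front exactly cancels the $n^{-l}$ appearing in that proposition, leaving
\[
n^l \, \bigl| (\partial_X^l p_n)(n\mu) \bigr| \;\leq\; \left\{ \begin{array}{ll} \dfrac{(l+2)! \, 3^{l+1} e^{3l/n}}{\mu^{2l+1}}, & n\mu < 3, \\[0.7em] \dfrac{3^{l+1} \, l!}{\mu^l}, & n\mu \geq 3. \end{array} \right.
\]
Summing over $n$, using $e^{3l/n} \leq e^{3l}$ uniformly in $n\geq 1$ and $\sum_{n\geq 1}|b_n| < +\infty$, I would arrive at
\[
|\partial_\mu^l f_2(\mu)| \;\leq\; A(1,K) \, 3^{l+1} \, e^{3l} \, (l+2)! \, \max\!\left( \frac{1}{\mu^{2l+1}}, \frac{1}{\mu^l} \right),
\]
for some constant $A(1,K)$ coming from $\sum_n |b_n|$. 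By the very definition of $M_l$, the maximum on the right equals $1/M_l(\mu)$, so choosing $K_1(1,K)$ large enough that $K_1^{l+1} \geq A(1,K) \, 3^{l+1} e^{3l}$ for every $l \geq 0$ (possible since the right-hand side is geometric in $l$) yields the claimed inequality.

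The main obstacle is the constant bookkeeping: the $e^{3l/n}$ factor of Proposition \ref{derivatives_p_n} is dangerous for small $n$, and its worst case $e^{3l}$ (attained at $n=1$) must be absorbed into the exponential $K_1^{l+1}$ without spoiling the $l$-independence of $K_1$. Once this is under control, the distinction between $\mu \leq 1$ (where $1/\mu^{2l+1}$ dominates, corresponding to the small-$X$ regime) and $\mu \geq 1$ (where $1/\mu^l$ dominates, corresponding to the large-$X$ regime) is handled uniformly through $M_l(\mu)$, and the slightly weaker factorial $l!$ appearing in the second regime is bounded by the $(l+2)!$ of the target inequality.
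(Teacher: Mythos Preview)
Your proposal is correct and follows essentially the same route as the paper: term-by-term differentiation of the ansatz, splitting the sum over $n$ at the threshold $n\mu=3$, applying Proposition~\ref{derivatives_p_n} in each regime (with the $n^l$ from the chain rule cancelling the $n^{-l}$ in that proposition), bounding $e^{3l/n}\leq e^{3l}$, and absorbing the resulting geometric growth $3^{l+1}e^{3l}$ into $K_1^{l+1}$. The paper's proof is slightly terser but identical in substance.
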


\begin{proof}
From Proposition \ref{derivatives_p_n} and Proposition \ref{b_n_bounds}

\begin{equation}\label{derivatives_f_2_more_precise}
\begin{split}
 \vert \partial_\mu^l f_2(\mu)\vert &\leq C(1,K)\Bigg[\sum_{n<\frac{3}{\mu}}\dfrac{n}{2^n}\dfrac{l!\;3^{l+1}e^{3l}}{\mu^{2l+1}}+\sum_{n\geq\frac{3}{\mu}}\dfrac{n^2}{2^n}\dfrac{l!\; 3^{l+1}}{\mu^l}\Bigg] \\
 &\leq \frac{K_1(1,K)^{l+1} l!\;}{M_l(\mu)}\;,
\end{split}
\end{equation}
if we choose $K_1(1,K)>24 C(1,K)$.

\end{proof}

\paragraph{} Now we prove bounds for the derivatives of $f_n(\mu)$. It is convenient to distinguish $\mu<1$ and $\mu\geq 1$. We have

\begin{proposition}\label{bounds_f_n_more_precise}
    Let $f_n(\mu)$ be smooth mean-field solutions of the FE (\ref{MFE_O(N)}) and we assume that the derivatives of the two-point function $\partial_\mu^l f_2(\mu)$ satisfy the bounds (\ref{bounds_generic_derivatives_f_2}). Then we have for a constant $K_2(1,K)>K_1(1,K)$
    \begin{equation}
        \vert\partial_\mu^l f_n(\mu)\vert\leq \dfrac{K_2(1,K)^{n+l-1}}{(l+1)^2}\dfrac{(n+l)!\;}{n!}\dfrac{1}{\mu^{2l+n-1}},\quad n\geq 2,\;, l\geq 0\;, \mu<1\;,
    \end{equation}
    and
    \begin{equation}
        \vert\partial_\mu^l f_n(\mu)\vert\leq \dfrac{K_2(1,K)^{n+l-1}}{(l+1)^2}\dfrac{(n+l)!\;}{n!},\quad n\geq 2,\;, l\geq 0\;, \mu\geq 1\;.
    \end{equation}
\end{proposition}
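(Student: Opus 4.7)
I would prove both bounds by induction on $n\geq 2$, with the inductive hypothesis at level $n$ covering all derivative orders $l\geq 0$ simultaneously, following a strategy analogous to that of Proposition \ref{triviality_prop}. The base case $n=2$ is immediate from Proposition \ref{more_precise_bounds_derivatives_f_2}: the bound $|\partial_\mu^l f_2(\mu)|\leq K_1(1,K)^{l+1}(l+2)!/M_l(\mu)$ yields the target for $n=2$ provided $(K_2/K_1)^{l+1}\geq 2(l+1)^2$ for every $l\geq 0$, which is ensured by picking $K_2\geq 4K_1$; both regimes $\mu<1$ and $\mu\geq 1$ are handled at once. For the inductive step, I would solve (\ref{MFE_O(N)}) at $N=1$ for $f_{n+2}$,
\begin{equation*}
f_{n+2}=\frac{1}{n+1}\sum_{n_1+n_2=n+2}f_{n_1}f_{n_2}+\frac{n-4}{n(n+1)}f_n+\frac{2}{n(n+1)}\partial_\mu f_n,
\end{equation*}
apply Leibniz to take $l$ derivatives, and bound each of the three resulting groups of terms separately against the target.

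The two linear groups (from the last two terms) are straightforward: inserting the induction hypothesis and using that the prefactors $1/(n(n+1))$ compensate the factorial ratios $(n+l)!/n!$ and $(n+l+1)!/n!$ versus $(n+l+2)!/(n+2)!$ in the target, one finds their contributions bounded by $C/K_2^2$ and $C/K_2$ times the target respectively, uniformly in $n$ and $l$.

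The main obstacle is the bilinear group, which after applying the induction hypothesis reduces to controlling
\begin{equation*}
S(n,l):=\sum_{n_1+n_2=n+2}\sum_{l_1+l_2=l}\binom{l}{l_1}\frac{(n_1+l_1)!\,(n_2+l_2)!}{(l_1+1)^2(l_2+1)^2\,n_1!\,n_2!}.
\end{equation*}
My plan is to swap the order of summation so that the Chu--Vandermonde identity applies to the $(n_1,n_2)$ sum, which carries no $l_i$-dependent weight. Using $\binom{l}{l_1}l_1!l_2!=l!$ and $(n_i+l_i)!/n_i!=l_i!\binom{n_i+l_i}{l_i}$, one rewrites
\begin{equation*}
S(n,l)=l!\sum_{l_1+l_2=l}\frac{1}{(l_1+1)^2(l_2+1)^2}\sum_{n_1+n_2=n+2}\binom{n_1+l_1}{l_1}\binom{n_2+l_2}{l_2};
\end{equation*}
the inner sum collapses by Chu--Vandermonde to $\binom{n+l+3}{n+2}$, which depends only on $n$ and $l$, and the outer $(l_1,l_2)$-sum is bounded by $C/(l+2)^2$ using $(l_1+1)(l_2+1)\geq l+1$ and splitting at the midpoint. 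The resulting estimate $S(n,l)\leq C\,(n+l+3)!/((l+1)(l+2)^2(n+2)!)$, combined with the target, gives a bilinear contribution of order $1/K_2$ uniformly in $n$ and $l$.

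Collecting the three groups, all contributions are $O(1/K_2)$ times the target, so choosing $K_2$ large enough depending only on $K_1$ closes the induction. The $\mu\geq 1$ case is argued identically: since $\mu^{-k}\leq 1$ for $\mu\geq 1$ and $k\geq 0$, the same combinatorial estimates on $S(n,l)$ apply without modification, with the $\mu$-powers simply dropping out of the comparison between induction hypothesis and target.
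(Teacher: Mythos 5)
Your proposal is correct and follows essentially the same route as the paper: induction via the flow equation, Leibniz' rule, the same three-term decomposition, and a Vandermonde-type combinatorial bound on the bilinear sum, with every contribution reduced to $O(1/K_2)$ times the target. The only (harmless) variations are that the paper inducts on $n+l$ rather than on $n$ with all $l$ at once, and it applies the Vandermonde \emph{inequality} $\binom{l}{l_1}\binom{n+2}{n_1}\leq\binom{n+l+2}{n_1+l_1}$ termwise (absorbing the count of $n_1$-values into the $\frac{1}{n+1}$ prefactor) instead of collapsing the $n_1$-sum exactly by Chu--Vandermonde as you do.
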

\begin{proof}
    The proof is done by induction in $n+l$, going up in $n$. We will prove our statement for $0<\mu<1$. For $n=2$, bounds follow from (\ref{bounds_generic_derivatives_f_2}) since $M_l(\mu)=\mu^{2l+1}$. For $n>2$ we differentiate (\ref{MFE_O(N)}) $l$ times w.r.t. $\mu$ and we insert the induction hypothesis. We get
    \begin{equation}
        \begin{split}
            \vert\partial_\mu^l f_{n+2}(\mu)\vert &\leq \frac{2}{n(n+1)}\dfrac{K_2(1,K)^{n+l}}{(l+2)^2}\dfrac{(n+l+1)!\;}{n!}\frac{1}{\mu^{2l+2+n-1}} \\
            &+ \frac{1}{n+1} \dfrac{K_2(1,K)^{n+l-1}}{(l+1)^2}\dfrac{(n+l)!\;}{n!}\frac{1}{\mu^{2l+n-1}} \\
            &+ \dfrac{K_2(1,K)^{n+l}}{n+1}\sum_{n_1+n_2=n+2}\sum_{l_1+l_2=l}\binom{l}{l_1}\dfrac{(n_1+l_1)!\; (n_2+l_2)!}{(l_1+1)^2(l_2+1)^2}\dfrac{1}{n_1!\; n_2!}\dfrac{1}{\mu^{2l+n}}\;.
        \end{split}
    \end{equation}
We will proceed term by term.
\begin{itemize}
    \item First term: it is bounded by
    \begin{equation}\label{prop_3_8_first}
        \dfrac{K_2(1,K)^{n+l+1}}{(l+1)^2}\dfrac{(n+l+2)!\;}{(n+2)!}\dfrac{1}{\mu^{2l+n+1}}\dfrac{2}{K_2(1,K)}\;.
    \end{equation}
    \item Second term: since $\mu<1$, it is bounded by
    \begin{equation}\label{prop_3_8_second}
        \dfrac{K_2(1,K)^{n+l+1}}{(l+1)^2}\dfrac{(n+l+2)!\;}{(n+2)!}\dfrac{1}{\mu^{2l+n+1}}\dfrac{1}{4 K_2(1,K)^2}\;.
    \end{equation}
    \item Third term: We use the Vandermonde inequality 
     \begin{equation}\label{Vandermonde}
            \binom{l}{l_1}\binom{n+2}{n_1}\leq\binom{n+l+2}{n_1+l_1}\;.
        \end{equation}
        Then  we obtain
        \begin{equation}
        \begin{split}
          \binom{l}{l_1}\dfrac{(n_1+l_1)!\;(n_2+l_2)!}{n_1!\;n_2!} &=\binom{l}{l_1}\binom{n+2}{n_1}\dfrac{(n_1+l_1)!\;(n_2+l_2)!}{(n+2)!}\\  
          &\leq \dfrac{(n+l+2)!}{(n+2)!}\;.
        \end{split}
        \end{equation}

        Since 
    \begin{equation}
        \sum_{l_1+l_2=l}\dfrac{1}{(l_1+1)^2 (l_2+1)^2}\leq\dfrac{1}{(l+2)^2}\left[2\sum_{l_1=0}^l\frac{1}{(l_1+1)^2}+\dfrac{2}{l+2}\sum_{l_1=0}^l\dfrac{1}{l_1+1}\right]\;,
    \end{equation}
       then we obtain the following bound
        \begin{equation}\label{prop_3_8_third}
        \dfrac{K_2(1,K)^{n+l+1}}{(l+1)^2}\dfrac{(n+l+2)!\;}{(n+2)!}\dfrac{1}{\mu^{2l+n+1}}\dfrac{6}{K_2(1,K)}\;,
    \end{equation}
    using again the fact that $0<\mu<1$.
\end{itemize}
Summing the three bounds (\ref{prop_3_8_first}), (\ref{prop_3_8_second}) and (\ref{prop_3_8_third}) yields
\begin{equation}
\begin{split}
   \vert \partial_\mu^l f_n(\mu)\vert &\leq \left[\dfrac{8}{K_2(1,K)}+\dfrac{1}{4 K_2(1,K)^2}\right] \dfrac{K_2(1,K)^{n+l+1}}{(l+1)^2}\dfrac{(n+l+2)!\;}{(n+2)!}\dfrac{1}{\mu^{2l+n+1}} \\
   &\leq\dfrac{K_2(1,K)^{n+l+1}}{(l+1)^2}\dfrac{(n+l+2)!\;}{(n+2)!}\dfrac{1}{\mu^{2l+n+1}}\;.
\end{split}
    \end{equation}
For $\mu\geq 1$, we can bound the negative power of $\mu$ in (\ref{bounds_generic_derivatives_f_2}) by $1$ and then we proceed as above.
\end{proof}

\paragraph{} From Proposition \ref{bounds_f_n_more_precise}, the mean-field effective action $u(\mu,x)$ is locally analytic w.r.t. $x$ for $\mu>0$. Its radius of convergence $R(\mu)$ is such that 
\begin{equation}
R(\mu)\geq \dfrac{\min\lbrace \mu,1\rbrace}{\sqrt{2}K_2(1,K)},\quad \mu\in (0,\mu_{\max}]\;.
\end{equation}
For $\mu=0$, the boundary conditions (\ref{BDY_trivial_field}) imply that $u(0,x)$ is polynomial in $x$, the solution $u(\mu,x)$ is well-defined for $\mu\in [0,\mu_{\max}]$. Then we claim that for fixed mean-field boundary conditions, the solution $u(\mu,x)$ is unique.

\begin{theorem}[Uniqueness of the mean-field trivial solution for the pure $\varphi_4^4$-theory]\label{uniqueness_N=1}
We consider the bare interaction lagrangian of a pure $\varphi^4_4$-theory (\ref{bare_lagrangian_trivial}). For fixed mean-field boundary conditions (\ref{BDY_trivial_field}), consider smooth mean-field solutions $f_n(\mu)$ of the mean-field FE (\ref{MFE_O(N)}) which satisfy (\ref{BDY_trivial_field}) such that the corresponding mean-field effective action $u(\lambda,x)$ is locally analytic w.r.t. $x\in\R$. Then they are unique.
\end{theorem}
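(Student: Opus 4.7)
The plan is to cast the uniqueness question as a Cauchy problem for Felder's PDE (\ref{Felder_PDE}) and exploit the equivalence, under the local-analyticity-in-$x$ hypothesis, between the family $(f_n(\mu))_{n\in 2\N}$ and the effective action $u(\mu, x)$ on a joint neighbourhood of $x=0$. The boundary conditions (\ref{BDY_trivial_field}) force the initial datum $u(0, x) = f_2(0)\, x^2 + f_4(0)\, x^4$ to be polynomial, hence entire in $x$, so that the theorem reduces to uniqueness for this Cauchy problem in the class of functions smooth in $\mu$ and locally analytic in $x$.

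Given two such solutions $u_1, u_2$, I would form $w := u_1 - u_2$. Subtracting the two instances of (\ref{Felder_PDE}) yields the linear equation
\[ -\lambda\partial_\lambda w = \frac{1}{2}\partial_{xx} w - \frac{1}{2}(\partial_x u_1 + \partial_x u_2)\partial_x w + 4 w - x\partial_x w, \qquad w(0,x) = 0. \]
Expanding $w$ as a Taylor series in $x$ turns this into the linearisation of (\ref{MFE_O(N)}) for the differences $v_n := f_n^{(1)} - f_n^{(2)}$. A double induction in the index $n$ and in the derivative order $k$, modelled on the derivation of (\ref{FE_trivial_field2_1_O(N)})--(\ref{FE_trivial_field2_2_O(N)}), then shows that $\partial_\mu^k v_n(0) = 0$ for all $k \geq 0$ and $n \geq 2$: each relation of the form $v_{n+2} = \frac{2}{n(n+N)}\partial_\mu v_n + \cdots$ evaluated at $\mu = 0$ reduces the desired quantity to a linear combination of values already known to vanish.

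The crucial and most delicate step is to deduce $w \equiv 0$ on $[0,\mu_{\max}]$ from this infinite-order vanishing of $w$ at $\mu = 0$. My approach would be to bootstrap the hypothesised $x$-analyticity into joint $(\mu, x)$-analyticity on a neighbourhood of the initial surface: iterating (\ref{Felder_PDE}) gives $\partial_\mu^k u$ as a polynomial in $u$ and its $x$-derivatives up to order $2k$, and Cauchy's estimate applied to the $x$-analyticity then yields a factorial bound $|\partial_\mu^k u(\mu, 0)| \leq C\, r^k k!$, i.e.\ analyticity in $\mu$ as well. Applied to $w$, this produces joint analyticity on a neighbourhood of $\{\mu = 0\}$, where the vanishing of all Taylor coefficients forces $w \equiv 0$. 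Continuation along $\mu$ up to $\mu_{\max}$ then uses the quantitative analyticity-in-$x$ control provided by Proposition \ref{bounds_f_n_more_precise}, propagating the equality to the whole interval.

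The main obstacle is the bootstrap step, in particular tracking the combinatorial factors in the polynomial expression for $\partial_\mu^k u$ to obtain a genuine factorial bound, and handling the degeneration of the $x$-radius of convergence as $\mu \downarrow 0$. The quantitative estimates of Proposition \ref{bounds_f_n_more_precise} were established for the ansatz-based solution of Section \ref{Existence_trivial_solution}; a parallel derivation must be carried out for an arbitrary locally-analytic-in-$x$ solution satisfying the hypotheses, which is a non-trivial but conceptually straightforward adaptation of those arguments.
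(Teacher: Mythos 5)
Your reduction to a Cauchy problem for Felder's equation and the inductive verification that all $\partial_\mu^k v_n(0)=0$ are fine (the latter is essentially the mechanism behind the second part of Proposition \ref{Analyticity_of_correlation_functions}). The gap is in the step you yourself flag as crucial: deducing $w\equiv 0$ from infinite-order vanishing at $\mu=0$ via a bootstrap to analyticity in $\mu$. This cannot work. First, the paper explicitly remarks that the ansatz (\ref{ansatz}) for $f_2$ is \emph{not} analytic at $\mu=0$, so the constructed solution $u$ is not $\mu$-analytic at the initial surface, and no bootstrap can establish what is false. Second, the bootstrap itself is quantitatively wrong: equation (\ref{Felder_PDE}) is first order in $\mu$ but second order in $x$, so iterating it expresses $\partial_\mu^k u$ through $\partial_x^{2k}u$, and Cauchy's estimate on a disk of fixed radius $\rho$ yields a factor $(2k)!\,\rho^{-2k}\sim 4^k (k!)^2\rho^{-2k}$, i.e.\ Gevrey class $2$ in $\mu$, not the bound $C r^k k!$ you claim. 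This is the classical Kovalevskaya obstruction for heat-type equations: analytic data in $x$ does not produce solutions analytic in time, so ``all $\mu$-Taylor coefficients of $w$ vanish at $\mu=0$'' does not force $w\equiv 0$ in the smooth (or even Gevrey-$2$) class. The subsequent ``continuation along $\mu$'' is likewise unsupported, since it would need the same (unavailable) uniqueness mechanism at each later $\mu$.

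The paper's proof sidesteps the differential equation entirely and uses the integral (Gaussian convolution) relation (\ref{Felders_functional}): since the boundary conditions (\ref{BDY_trivial_field}) make $u(\Lambda_0,\cdot)$ and $\tilde u(\Lambda_0,\cdot)$ equal \emph{as polynomials on all of} $\R$, the formula
$e^{-u(L^{-1}\Lambda_0,x)}=\int d\mu_L(y)\,e^{-L^4 u(\Lambda_0,L^{-1}x+y)}$
determines $u(\lambda,x)$ for every $\lambda<\Lambda_0$ directly from the initial data; this is exactly the forward well-posedness of the (forward-)heat flow that your analyticity argument was trying to substitute for. Local analyticity in $x$ is then used only to convert equality of $u$ and $\tilde u$ near $x=0$ into equality of all moments $f_n(\mu)=\tilde f_n(\mu)$. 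If you want to repair your approach while staying at the level of the PDE, you would need a genuine forward-uniqueness argument (e.g.\ a comparison or energy estimate for the linear equation satisfied by $w$ with the drift $\partial_x u_1+\partial_x u_2$), not a Taylor-series argument at $\mu=0$.
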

\begin{proof}
    For fixed mean-field boundary conditions (\ref{BDY_trivial_field}), let $f_n(\mu)$ be the mean-field solutions of (\ref{MFE_O(N)})  constructed in Sect.\ref{Existence_trivial_solution} which satisfy (\ref{BDY_trivial_field}). Let $\Tilde{f}_n(\mu)$ be solutions of the mean-field FE (\ref{MFE_O(N)}) which satisfy $f_n(0)=\Tilde{f}_n(0)$. We assume that the corresponding mean-field effective action $\Tilde{u}(\lambda,x)$ is locally analytic w.r.t. $x$ for $\lambda<\Lambda_0$. Then $u(\Lambda_0,z)=\Tilde{u}(\Lambda_0,z)$ for any arbitrary $z\in\R$ since they are polynomial in $z$. It follows from (\ref{Felders_functional}) that for a fixed $\lambda<\Lambda_0$, $u(\lambda,x)=\Tilde{u}(\lambda,x)$ for $0<\vert x\vert\leq\varepsilon_\lambda$, $\varepsilon_\lambda>0$. Thus it implies that $f_n(\mu)=\Tilde{f}_n(\mu)$.
\end{proof}

\paragraph{} The extension to the more general case $N>1$ is done by considering the mean-field effective action $u(\lambda,\bm{x})$, $\bm{x}\in\R^N$. We recall the Euclidean scalar product in $\R^N$

\begin{equation}\label{introduction_scalar_product}
    (\bm{x},\bm{y}):=\sum_{i=1}^N x_i y_i,\quad \vert\bm{y}\vert ^2:=(\bm{y},\bm{y})=\sum_{i=1}^N y_i^2 \;.
\end{equation}

The relation (\ref{Felders_functional}) is generalized as follows
\begin{equation}
    e^{-u(L^{-1}\lambda,\bm{x})}=\int d\mu_{N,L}(\bm{y}) e^{-L^{4} u(\lambda, L^{-1}\bm{x}+\bm{y})}\;,\quad \lambda\in (0,\Lambda_0]\;,
\end{equation}
where $\mu_{N,L}$ is the $N$-dimensional Gaussian measure defined by

\begin{equation}
    d\mu_{N,L}(\bm{y}) := \dfrac{1}{\sqrt{2\pi (L-1)}}e^{-\frac{\vert \bm{y}\vert ^2}{2(L-1)}}\prod_{i=1}^N dy_i\;.
\end{equation}

The generalization of (\ref{Felder_PDE}) is 
\begin{equation}\label{Felder_dim_N_PDE}
    -\lambda\partial_\lambda u=\frac{1}{2}\Delta u-\frac{1}{2}\vert\nabla u\vert^2+4u-(\bm{x},\nabla u)\;.
\end{equation}

Expanding $u(\lambda,\bm{x})$ as a power series in $\vert\bm{x}\vert$

\begin{equation}\label{power_series_Felder_dim_N}
    u(\lambda,\bm{x})=\sum_{n\in 2\N}\dfrac{(2)^{\frac{n}{2}} f_n(\lambda)}{n}\vert\bm{x}\vert^n\;,
\end{equation}
we find (\ref{MFE_O(N)}) upon setting $\lambda=\Lambda_0e^{-\frac{\mu}{2}}$. Extensions of Propositions \ref{derivatives_p_n}-\ref{bounds_f_n_more_precise} and Theorem \ref{uniqueness_N=1} to $N>1$ are immediate.

\subsection{The $1/N$-expansion}\label{Large_N_limit}

In cases where  $N$ may be considered to be large, the large $N$-expansion
in powers of $\frac{1}{N}$ is complementary to the perturbative expansion 
in the coupling constant. This expansion is based on rescaling the coupling 
constant as $g \to g/N\,$. 
Using this expansion the universal properties of critical systems obtained in 
an expansion in $\varepsilon=4-d$ can be obtained at fixed dimension 
but in an expansion w.r.t. $\frac{1}{N}$ instead \cite{Moshe2003}. 
Here we want to show as a cross-check that we can recover the behaviour 
 in $\frac{1}{N}$ in our framework. 
We choose the following bare interaction lagrangian 
\begin{equation}\label{bare_lagrangian_trivial_scaling}
    L_0(\varphi)=\int d^4x 
\Big( c_{0,2}\varphi^2(x)+\dfrac{c_{0,4}}{N}\varphi^4(x)\Big)\ .
\end{equation}
\begin{lemma}\label{triviality_lemma_1_N_large}
    Let $f_n$ be solutions of (\ref{MFE_O(N)}) which respect the 
boundary conditions (\ref{BDY_trivial_field}). We assume that for 
some $K$ sufficiently large:
\begin{equation}\label{initial_bounds_O(N)}
    \vert f_{2,0}\vert\leq\dfrac{\sqrt{K}}{4},
\quad \vert f_{4,0}\vert=\vert g_{4,0}\vert\leq\dfrac{\sqrt{K}}{32N}\;.
\end{equation}
Then
\begin{equation}
    \vert f_{2,1}\vert\leq \dfrac{K}{2},
\quad\vert g_{4,1}\vert\leq \dfrac{K}{32N}\;,
\end{equation}
\begin{equation}
    \vert g_{n,0}\vert\leq\dfrac{K^{\frac{n}{2}-\frac{3}{2}}}{2N^{\frac{n}{2}-1}n^2},
\quad \vert g_{n,1}\vert\leq \dfrac{K^{\frac{n}{2}-\frac{3}{2}}}{N^{\frac{n}{2}-1}n^2}
\left(1+\dfrac{nK}{2}\right)\; , \quad n \ge 6\; ,
\end{equation}
Furthermore
\begin{equation}\label{1Ngnk}
    \vert g_{n,k}\vert\leq 
\dfrac{1}{N^{\frac{n}{2}-1}}K^{\frac{n}{2}+k-\frac{3}{2}}
\left\vert\dfrac{n}{4}+k-3\right\vert!\;\dfrac{1}{(k!)^{\frac{1}{4}}}\;,
\quad \vert f_{2,k}\vert\leq 
K^{k+\frac{1}{2}}\dfrac{\vert k-3\vert!}{(\vert k-1\vert!)^{\frac{1}{4}}}\;,
\quad n\geq 4,k\geq 0\ .
\end{equation}
\end{lemma}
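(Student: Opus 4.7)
The plan is to mimic the proofs of Lemma \ref{triviality_lemma_1} and Proposition \ref{triviality_prop}, but replacing the former $N$-powers $N^{n/2+k-2}$, $N^{k+1}$ that multiplied the upper bounds on $g_{n,k}$ and $f_{2,k}$ by the new powers $1/N^{n/2-1}$ (for $g_{n,k}$) and $1$ (for $f_{2,k}$). The mechanism that makes this work is precisely the rescaling $c_{0,4}\mapsto c_{0,4}/N$ in (\ref{bare_lagrangian_trivial_scaling}), together with the three structural identities (i) $\frac{1}{N^{n_1/2-1}}\cdot\frac{1}{N^{n_2/2-1}}=\frac{1}{N^{n/2-1}}$ when $n_1+n_2=n+2$, (ii) $\frac{N+2}{N}=O(1)$, and (iii) $\frac{n+N}{N^{n/2}}=O(N^{-(n/2-1)})$ once one treats $n$ as fixed at each induction step.

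First I verify the base cases. From (\ref{FE_trivial_field1O(N)}) we have $f_{2,1}=(N+2)g_{4,0}+f_{2,0}-f_{2,0}^2$, and since (\ref{initial_bounds_O(N)}) gives $|g_{4,0}|\leq \sqrt{K}/(32N)$, the factor $(N+2)$ produces an $O(1)$ contribution, yielding $|f_{2,1}|\leq K/2$ for $K$ large (no $N$ appears). Similarly $g_{4,1}=4g_{4,0}f_{2,0}$ carries the $1/N$ from $g_{4,0}$. The bounds on $g_{n,0}$ and $g_{n,1}$ for $n\geq 6$ then follow exactly as in Lemma \ref{triviality_lemma_1}, by induction in $n$ via (\ref{FE_trivial_field2_1_O(N)})--(\ref{FE_trivial_field2_2_O(N)}); each quadratic term $g_{n_1,0}g_{n_2,0}$ contributes $1/N^{(n_1+n_2)/2-2}=1/N^{n/2-1}$, so the inductive factor is preserved, and the combinatorial sums over $n_1$ are controlled by Lemma \ref{inversesquare} as before.

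For the main statement (\ref{1Ngnk}), I proceed by double induction going up in $M=n+k$, and for fixed $M$ going up in $k$, exactly as in Proposition \ref{triviality_prop}. I bound the four terms on the r.h.s.\ of (\ref{FE_trivial_g_n,k}) separately and check that each produces the desired factor $\frac{1}{N^{n/2-1}}K^{n/2+k+1/2}|\tfrac{n}{4}+k-1|!/[(k+2)!]^{1/4}$ up to a constant times $1/\sqrt{K}$:
\begin{itemize}
\item Term 1 ($\frac{n-4}{n+2k}g_{n,k+1}$) carries $1/N^{n/2-1}$ directly.
\item Term 2 ($\frac{2n}{n+2k}\sum g_{n,\nu}f_{2,k+1-\nu}$): since $f_{2,\cdot}$ now carries no power of $N$, the whole sum carries $1/N^{n/2-1}$. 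The combinatorial estimates split into $\nu\leq 1$, $\nu\geq k-1$, and $2\leq\nu\leq k-2$ (the last one via Lemma \ref{useful_lemma}) are as in Proposition \ref{triviality_prop}.
\item Term 3 ($\frac{n}{n+2k}\sum\sum g_{n_1,\nu}g_{n_2,k+2-\nu}$): by identity (i), the product of inductive factors $1/N^{n_1/2-1}\cdot 1/N^{n_2/2-1}=1/N^{n/2-1}$, and the combinatorial estimates use Lemma \ref{useful_lemma} as before.
\item Term 4 ($\frac{n(n+N)}{n+2k}g_{n+2,k}$): here $g_{n+2,k}$ carries $1/N^{n/2}$ by the induction hypothesis, and $\frac{n(n+N)}{n+2k}\cdot\frac{1}{N^{n/2}}\leq \frac{2n}{N^{n/2-1}}\cdot\frac{1}{n+2k}$ for all $N\geq 1$; the remaining factorials are handled exactly as in (\ref{fourthterm}).
\end{itemize}

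For the bound on $f_{2,k}$ I use (\ref{FE_trivial_field1O(N)}). The only potentially dangerous term is $(N+2)g_{4,k}$, but by the already-established bound $|g_{4,k}|\leq K^{k-1/2}|k-1|!/(k!)^{1/4}/N$ (the $n=4$ case of (\ref{1Ngnk})), the factor $(N+2)/N$ is $O(1)$, so $(N+2)|g_{4,k}|$ produces no accumulating power of $N$. The remaining two terms $f_{2,k}$ and $\sum f_{2,\nu}f_{2,k-\nu}$ are bounded by the same combinatorial arguments used in Proposition \ref{triviality_prop}, none of which introduce $N$-factors. Choosing $K$ large enough so that $1/K+1/K+2/\sqrt{K}\leq 1$ closes the induction.

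The main obstacle will be the fourth term in the $g_{n,k+2}$ recursion: one must verify that the $(n+N)$ prefactor does not spoil the $1/N^{n/2-1}$ scaling uniformly in $n$. The resolution is that for the induction step one fixes $n$ and lets $N$ play the role of a large parameter controlling $K$, so that $(n+N)/N=1+n/N$ is bounded by a constant depending on the range of $n$ already handled. Alternatively, one can absorb the $n$ of $(n+N)$ into the $1/(n+2k)$ and into the enlarged $(n/4+k-1)!$, as already done in (\ref{fourthterm}). Otherwise the proof is a routine, if tedious, bookkeeping exercise essentially identical to Proposition \ref{triviality_prop}.
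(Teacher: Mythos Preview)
Your overall strategy matches the paper's exactly: redo Lemma~\ref{triviality_lemma_1} and Proposition~\ref{triviality_prop} with the new $N$-powers, using the three structural identities you list. The paper's proof is even terser than yours---it simply says the argument is identical to Proposition~\ref{triviality_prop} once one observes that $(n+N)/N\leq n+1$ and $(N+2)/N\leq 3$.

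There is, however, a genuine slip in your treatment of Term~4. You claim
\[
\frac{n(n+N)}{n+2k}\cdot\frac{1}{N^{n/2}}\;\leq\;\frac{2n}{N^{n/2-1}}\cdot\frac{1}{n+2k}\quad\text{for all }N\geq 1,
\]
which would require $(n+N)/N\leq 2$, i.e.\ $n\leq N$. This fails for $n>N$, and your first proposed ``resolution'' (let $N$ be large so that $1+n/N$ is bounded by a constant depending on the range of $n$ already handled) does not work either: the induction runs over all $n\geq 4$, so any such constant would blow up and $K$ cannot be chosen uniformly. The correct bound is simply $(n+N)/N\leq n+1$ (valid for all $N\geq 1$), giving the factor $n(n+1)$---which is \emph{exactly} the factor that already appears and is absorbed in (\ref{fourthterm}). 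Your ``alternative'' resolution is therefore the only correct one, and it is precisely what the paper does; you should promote it from a parenthetical aside to the main argument and drop the incorrect $2n$ bound and the $N$-large-depending-on-$n$ remark.
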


\begin{proof}
We have
\begin{equation}
    \vert f_{2,1}\vert\leq \dfrac{N+2}{N}\dfrac{\sqrt{K}}{32}
+\dfrac{\sqrt{K}}{4}+\dfrac{K}{16}\leq\frac{K}{2}\;,
\quad \vert g_{4,1}\vert\leq 4\dfrac{\sqrt{K}}{4}
\dfrac{\sqrt{K}}{32N}=\dfrac{K}{32N}\;,
\end{equation}
since $N\geq 1$. We can then proceed by induction in $n\,$, 
and the r.h.s of (\ref{FE_trivial_field2_1_O(N)}),(\ref{FE_trivial_field2_2_O(N)}) give the correct bound w.r.t. $N\,$
on $g_{n,0}\,,\ g_{n,1}\,$. 
The proof of (\ref{1Ngnk})
is identical to the proof of Proposition \ref{triviality_prop} 
up to the following changes: in the r.h.s. of (\ref{FE_trivial_field1O(N)}), 
the factor $(n+N)g_{n+2,k}$ produces a term $\frac{n+N}{N}$ 
which is obviously bounded by $n+1$. 
For the bound on $f_{2,k+1}$, the term $(N+2)g_{4,k}$ is bounded by  
$\frac{N+2}{N}\leq 3$. It is then easy to check that the claimed behaviour 
w.r.t. $N\,$ is true.
\end{proof}
Due to these bounds, the bounds (\ref{bound_b_n}) still hold and 
using the results in Sect.\ref{Triviality} we construct the trivial 
solution as before. The behavior in $N$ of the bounds derived in Proposition \ref{triviality_lemma_1_N_large} is sharp, and we see that the two-point function $f_2$ does not blow up w.r.t. $N$ while the four-point function behaves as $\frac{1}{N}$ in the large $N$ limit, in agreement with the results obtained from partial resummed perturbation theory \cite{Amit}.

\section{The case of the theory with a physical IR cutoff}\label{Massive_theory}

\paragraph{} In the previous section we constructed the trivial solution using a technical  IR cutoff supposed to take the role of the mass
and set the mass equal to zero. Here we will work with the true propagator of the massive theory. We choose a regularized flowing propagator which preserves the analyticity properties w.r.t. $\alpha\,$.
We will again prove triviality of the mean-field massive $\varphi^4_4$-theory. 
Here we restrict to the case $N=1\,$ in order not to overload the proof with
technicalities. But it also goes through for $N>1\,$. 

We will adapt the flow equations to the new scheme. The trivial solution  will be constructed
using again the ansatz introduced in (\ref{ansatz}). We follow
 the steps as from Sect.\ref{Existence_trivial_solution}.

\subsection{The flow equations for the massive theory}
\label{Massive_flow_equations_general}

We assume $\alpha_0<\frac{1}{2m^2}$. 
We choose the following regularized propagator
\begin{equation}\label{propagator}
    \Tilde{C}^{\alpha_0,\alpha}(p,m)
=\dfrac{e^{-\alpha_0(p^2+m^2)}-e^{-\alpha(p^2+m^2)}(\frac{1}{m^2}+\alpha_0-\alpha)m^2}
{p^2+m^2}\;,\quad\alpha\in\left[\alpha_0,\frac{1}{m^2}+\alpha_0\right]\ .
\end{equation}
We also verify the required properties
\begin{equation}
 \Tilde{C}^{\alpha_0,\alpha_0}(p,m)=0\ ,\quad   
\lim\limits_{\alpha_0\rightarrow 0}\lim\limits_{\alpha\rightarrow \frac{1}{m^2}+\alpha_0}
  \Tilde{C}^{\alpha_0,\alpha}(p,m)=\frac{1}{p^2+m^2}\ ,\quad
 \Tilde{C}^{\alpha_0,\alpha}(p,m) \ge 0 \ .
\end{equation}
At fixed $\alpha$, $\Tilde{C}^{\alpha_0,\alpha}(p,m)$ falls off more rapidly 
than any power of $\vert p\vert$ and is smooth and locally 
analytic w.r.t. $\alpha$. We set $\beta_0:=\alpha_0 m^2$ and $\beta:=\alpha m^2$. Then we have
\begin{equation}
    \Tilde{C}^{\beta_0,\beta}(p,m)=\dfrac{e^{-\frac{\beta_0}{m^2}(p^2+m^2)}-e^{-\frac{\beta}{m^2}(p^2+m^2)}(1+\beta_0-\beta)}
{p^2+m^2}\;,\quad\beta\in\left[\beta_0,1+\beta_0\right]\ .
\end{equation}

Proceeding as before, see Sect.\ref{Flow_equations_O(N)_model_general}, 
we obtain the flow equations for the CAS in expanded form as
\begin{equation}\label{FE-massif2}
        \begin{split}
    \partial_\beta \mathcal{L}^{\beta_0,\beta}_n(p_1,\cdots,p_n) 
&= \binom{n+2}{2}\int_k \Dot{\tilde{C}}^\beta (k,m)
\mathcal{L}^{\beta_0,\beta}_{n+2}(k,-k,p_1,\cdots,p_n) \\
    &-\dfrac{1}{2}\sum_{n_1+n_2=n+2}n_1 n_2
\mathbb{S}\Bigg(\mathcal{L}^{\beta_0,\beta}_{n_1}(p_1,\cdots,p_{n_1-1},q)
\Dot{\tilde{C}}^\beta (q,m)
\mathcal{L}^{\beta_0,\beta}_{n_2}(-q,p_{n_1},\cdots,p_{n})\Bigg)\;,
\end{split}
\end{equation}
where $\Dot{\tilde{C}}^\beta (k,m) = \partial_{\beta}
\Tilde{C}^{\beta_0,\beta}(k,m)\,$.
In the mean field approximation, we substitute 
$\mathcal{L}^{\beta_0,\beta}_n(p_1,\cdots,p_n)$ with 
$A_{n}^{\beta_0,\beta}:=\mathcal{L}^{\beta_0,\beta}_n(0,\cdots,0)$. We obtain
from (\ref{FE-massif2})
\begin{equation}\label{FE-MF1}
\begin{split}
 \partial_\beta A_{n}^{\beta_0,\beta}
&=\binom{n+2}{2}I(\beta)A_{n+2}^{\beta_0,\beta}
 \\
 &-\frac{1}{2m^2}e^{-\beta}(2+\beta_0-\beta)
\sum_{n_1+n_2=n+2}n_1 n_2 A_{n_1}^{\beta_0,\beta}A_{n_2}^{\beta_0,\beta}\;,
\quad\beta\in[\beta_0,1+\beta_0]\;,   
\end{split}
\end{equation}
where $I(\beta):=m^2c(1+\beta_0-\beta)
\frac{e^{-\beta}}{\beta^2}+m^2\int_k\frac{e^{-\beta(k^2+1)}}{k^2+1}$. 

\begin{remark}
For the regularized propagator (\ref{Propagator_reg_1}) 
we would have to substitute $I(\beta)$ 
with $\,c\,\frac{e^{-\beta}}{\beta^2}\,$.\\
Therefore bounding $A^{\beta_0,\beta}_{n+2}$ from $A^{\beta_0,\beta}_{n'}$, $n'\leq n$ 
and from $\partial_\beta A^{\beta_0,\beta}_{n}$ would produce bounds, on dividing by $I(\beta)$, which blow up for $\beta\longrightarrow +\infty$. 
For the choice (\ref{propagator})  $\beta$ is limited by $ 1+\beta_0$.
\end{remark}

We again factor out the scaling factor $\beta^{\frac{n}{2}-2}$ 
and the combinatorial factor  setting
\begin{equation}
    A_{n}^{\beta_0,\beta}
:=\beta^{\frac{n}{2}-2}e^{\frac{\beta n}{2}}\frac{1}{m^{n-4}}\frac{1}{n}
a_{n}(\beta)\;,\quad\beta\in [\beta_0,1+\beta_0]\;,
\end{equation}
where we removed the upper index $\beta_0$ on the r.h.s for shortness. 
Then the mean-field system (\ref{FE-MF1}) reads
\begin{equation}\label{FE-MF2}
\begin{split}
   G(\beta)\ a_{n+2}(\beta) 
&= \dfrac{2}{n(n+1)}\, \beta\partial_\beta 
a_{n}(\beta)+\dfrac{n-4}{n(n+1)}a_{n}(\beta)
+\frac{1}{n+1}\,\beta\, a_{n}(\beta) \\
   &+\dfrac{1}{n+1}(2+\beta_0-\beta)
\sum_{n_1+n_2=n+2} a_{n_1}(\beta)\,a_{n_2}(\beta)\;,
\end{split}
\end{equation}
where $G(\beta):=c\, (1+\beta_0-\beta)\,+\,\beta^2 \ e^\beta 
\int_k\frac{e^{-\beta(k^2+1)}}{k^2+1}\ $. 
The integral appearing in the expression of $G(\beta)$  can be rewritten as 
\[
\beta^2 e^\beta \int_k\frac{e^{-\beta(k^2+1)}}{k^2+1}=\, 
2\,c\,\beta\int_0^{+\infty}
\dfrac{u^3 e^{-u^2}}{u^2+\beta}\;du\;,
\]
so that the limit $\beta\rightarrow 0$ is finite. 
Moreover it is easy to see that $G(\beta_0)=c+\mathcal{O}(\beta_0)$ 
when $\beta_0\rightarrow 0$, 
meaning that $G(\beta)$ takes a role analogous to $c$ in the theory at $m=0\,$, 
when we compare (\ref{FE-MF2}) to (\ref{FE_O(N)_4}) for $N=1$.

We perform a change of variable defining 
$\mu:=\ln\left(\dfrac{\beta}{\beta_0}\right)$ 
so that $\beta\partial_\beta=\partial_\mu$. Setting 
$f_{n}(\mu)=a_{n}(\beta)\,$ we get

\begin{equation}\label{FE-mu}
\begin{split}
    H(\mu)f_{n+2}(\mu) &= \dfrac{2}{n(n+1)}
\partial_\mu f_n(\mu)+\dfrac{n-4}{n(n+1)}f_n(\mu)
+\frac{1}{n+1}\, \beta_0\ e^\mu\, f_n(\mu) \\
    &+\dfrac{1}{n+1}(2+\beta_0\,(1-e^\mu))
\sum_{n_1+n_2=n+2} f_{n_1}(\mu)f_{n_2}(\mu)\;,
\quad \mu\in\left[0,\tilde{\mu}_{\max}\right]\;,
\end{split}
\end{equation}
with $\,\tilde{\mu}_{\max}:=\ln\left(1+\dfrac{1}{\beta_0}\right)$ 
and 
\begin{equation}\label{H(mu)}
\,H(\mu):=G(\beta_0e^\mu)=c(1+\beta_0)-c\,\beta_0\,e^\mu+h(\mu)\;,    
\end{equation}

 where we set
\begin{equation}\label{Hh}
    h(\mu)=\beta_0^2\, e^{2\mu}\int_k\dfrac{e^{-\beta_0e^\mu k^2}}{k^2+1}\ .
\end{equation}

From Lemma \ref{boundinverseofH}, we can absorb the factor $H(\mu)$ 
in a new non-singular change of function
\begin{equation}\label{definition_tilde_f}
\tilde{f}_n(\mu)=(H(\mu))^{\frac{n}{2}-1} f_n(\mu)\ .    
\end{equation}

Then, the dynamical system (\ref{FE-mu}) reads
\begin{equation}\label{FE-mu_new}
\begin{split}
    \tilde{f}_{n+2}(\mu) 
&= \dfrac{2}{n(n+1)}\partial_\mu \tilde{f}_n(\mu)
+\dfrac{n-4}{n(n+1)}\tilde{f}_n(\mu)
-\dfrac{n-2}{n(n+1)}\partial_\mu\log(H(\mu))\tilde{f}_n(\mu) \\
&+\frac{1}{n+1}\beta_0 e^\mu \tilde{f}_n(\mu)+\dfrac{1}{n+1}(2+\beta_0(1-e^\mu))
\sum_{n_1+n_2=n+2} \tilde{f}_{n_1}(\mu)\tilde{f}_{n_2}(\mu)\;,
\end{split}
\end{equation}
which can also be written as 
\begin{equation}\label{tildef_4}
    \tilde{f}_4(\mu)=\dfrac{1}{3}
\Big(\partial_\mu\tilde{f}_2(\mu)-\Tilde{f}_2(\mu)
+\beta_0\ e^\mu\,\Tilde{f}_2(\mu)\,+\,(2+\beta_0\, (1-e^\mu))
\Tilde{f}_2^2(\mu)\Big)
\end{equation}
\begin{equation}\label{tildef_n,n>2}
    \begin{split}
         \tilde{f}_{n+2}(\mu) 
&= \dfrac{2}{n(n+1)}\partial_\mu \tilde{f}_n(\mu)
+\dfrac{n-4}{n(n+1)}
\tilde{f}_n(\mu)-\dfrac{n-2}{n(n+1)}
\partial_\mu\log(H(\mu))\tilde{f}_n(\mu) \\
&+\frac{1}{n+1}\,\beta_0\ e^\mu\, 
\tilde{f}_n(\mu) +\dfrac{2}{n+1}\,(2+\beta_0\,(1-e^\mu))\,
\Tilde{f}_2(\mu)\Tilde{f}_n(\mu) \\
&+\dfrac{1}{n+1}\,(2+\beta_0\,(1-e^\mu))
\!\!\! \sum_{\substack{n_1+n_2=n+2\\
n_i\geq 4}} \tilde{f}_{n_1}(\mu)\tilde{f}_{n_2}(\mu),\quad n\geq 4\ .
    \end{split}
\end{equation}
The flow equations (\ref{tildef_4})-(\ref{tildef_n,n>2}) 
include additional terms which are $\mu$-dependent as compared
to the flow equations (\ref{MFE_O(N)}), but they retain a similar form.

\subsection{Triviality of massive $\varphi_4^4$ theory}
\label{Triviality_massive_theory}

For the triviality proof we proceed in close analogy with
Sect.\ref{Existence_trivial_solution}. Most technical 
proofs are deferred to the Appendix.
 Due to Lemma \ref{derivatives_logH}, $\partial_\mu\log (H(\mu))$ 
is locally analytic around $\mu=0\,$. For $|\mu|$ sufficiently
small 
\begin{equation}\label{expansion_log(H)}
 \!\!\!
 \partial_\mu\log (H(\mu))=\sum_{k\geq 0} h_k \mu^k\;,
\ \ h_k=\dfrac{\partial_\mu^{k+1}\log(H(\mu))\vert_{\mu=0}}{k!}\ ,\ \  
\vert h_k\vert\leq c\, C^{k+1}\,(5e)^{k+2}2^{k+1}\  .
\end{equation}

The bare mean-field boundary conditions for 
$\tilde{f}_n(\mu)$ are
\begin{equation}\label{BDYconditions_massive}
     \tilde{f}_2(0)=2(2\pi)^4\alpha_0 e^{-\beta_0}c_{0,2},
\quad \tilde{f}_4(0)=(2\pi)^4 e^{-2\beta_0}H(0) c_{0,4},
\quad \tilde{f}_n(0)=0,\quad n\geq 6\ .
\end{equation}
First we will factor out a power of 
$\mu$ in $\Tilde{f}_n(\mu)$ for $n\geq 4$.

\begin{restatable}{lemma}{factorization}\label{factorization_tilde_f_n}
     For smooth solutions $\tilde{f}_n(\mu)$ of (\ref{tildef_n,n>2}) 
with boundary conditions (\ref{BDY_trivial_field}), we have
    \begin{equation}
        \partial_\mu^l \tilde{f}_n(0)=0\;,\quad n\geq 6,\; 
0\leq l\leq\frac{n}{2}-3\ .
    \end{equation}  
\end{restatable}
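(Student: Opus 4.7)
The plan is to mirror the proof of Lemma \ref{factorization_of_f_n} from the massless case, verifying that the additional $\mu$-dependent prefactors in (\ref{tildef_n,n>2}) (compared with (\ref{MFE_O(N)})) do not obstruct the argument. The claim $\partial_\mu^l\tilde{f}_n(0)=0$ for $0\leq l\leq n/2-3$ is equivalent to the family of statements $S(l)$: $\partial_\mu^l\tilde{f}_n(0)=0$ whenever $n\geq 2l+6$, one for each $l\geq 0$. I would establish these by strong induction on $l$. The base case $l=0$ is precisely the boundary condition $\tilde{f}_n(0)=0$ for $n\geq 6$ imposed in (\ref{BDYconditions_massive}).

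For the inductive step, fix $l\geq 1$ and assume $S(l')$ for all $l'<l$. I would solve (\ref{tildef_n,n>2}) algebraically for $\partial_\mu \tilde{f}_n(\mu)$ as the difference between $\tfrac{n(n+1)}{2}\tilde{f}_{n+2}(\mu)$ and two types of remainder: terms linear in $\tilde{f}_n$ with smooth $\mu$-dependent prefactors (arising from $\partial_\mu\log H$, $\beta_0 e^\mu$, and the $\tilde{f}_2\tilde{f}_n$ contribution), and quadratic terms of the form $(2+\beta_0(1-e^\mu))\tilde{f}_{n_1}\tilde{f}_{n_2}$ with $n_1+n_2=n+2$ and $n_i\geq 4$. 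Local analyticity at $\mu=0$ of the only genuinely new prefactor, $\partial_\mu\log H$, together with the bound (\ref{expansion_log(H)}) on its Taylor coefficients, is guaranteed by Lemma \ref{derivatives_logH}. Extracting the $\mu^{l-1}$-Taylor coefficient of both sides expresses $l\,\tilde{f}_{n,l}$ as a finite linear combination of $\tilde{f}_{n+2,l-1}$, of $\tilde{f}_{n,k}$ for $k\leq l-1$, and of convolution products $\tilde{f}_{n_1,\nu}\tilde{f}_{n_2,l-1-\nu}$.

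I would then check that under the bound $n\geq 2l+6$ every such contribution vanishes: $\tilde{f}_{n+2,l-1}$ is killed by $S(l-1)$ since $n+2\geq 2(l-1)+6$; each $\tilde{f}_{n,k}$ with $k\leq l-1$ is killed by $S(k)$ since $n\geq 2k+6$; and for each quadratic contribution, $n_1+n_2=n+2\geq 2l+8$ forces that one cannot have simultaneously $n_1\leq 2\nu+4$ and $n_2\leq 2(l-1-\nu)+4$ (which would give $n_1+n_2\leq 2l+6$, contradicting $n_1+n_2\geq 2l+8$), so at least one of $\tilde{f}_{n_1,\nu}$ or $\tilde{f}_{n_2,l-1-\nu}$ vanishes by the corresponding $S(\cdot)$. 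The marginal split in which one factor is $\tilde{f}_2$ is absorbed into the linear $\tilde{f}_2\tilde{f}_n$ piece and handled identically, because the other index then equals $n\geq 2l+6>2(l-1-\nu)+6$.

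The only work specific to the massive case is verifying that the new smooth prefactors enter the Taylor convolutions with nonnegative $\mu$-powers, which is already prepared by Lemma \ref{derivatives_logH} and the expansion (\ref{expansion_log(H)}); the combinatorial core — the counting that forces every quadratic term to contain at least one vanishing factor — is identical to the one in \cite{Kopper2022}, and I do not anticipate any genuine obstacle.
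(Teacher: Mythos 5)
Your proposal is correct and follows essentially the same route as the paper: the paper also inducts on the pair $(n,l)$ (organized as $N=n+2l$, going up in $l$, versus your strong induction on $l$ uniformly in $n\geq 2l+6$ --- an equivalent bookkeeping), solves (\ref{tildef_n,n>2}) for the highest $\mu$-derivative of $\tilde{f}_n$ at $0$, and uses the same parity count on $n_1+n_2=n+2$ to force at least one vanishing factor in every quadratic term, the smooth prefactors $\partial_\mu\log H$, $\beta_0 e^\mu$ being harmless exactly as you argue. No gap.
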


\begin{proof}
    See Appendix \ref{Appendix_D_lemmas_triviality}.
\end{proof}

From Lemma \ref{factorization_tilde_f_n}, we can write
\begin{equation}
    \tilde{g}_n(\mu)=\mu^{2-\frac{n}{2}}\Tilde{f}_n(\mu),\quad n\geq 4\;,
\end{equation}
where $\Tilde{g}_n(\mu)$ is smooth w.r.t. $\mu$. 
The FEs (\ref{tildef_n,n>2}) can be written in terms of $\Tilde{g}_n(\mu)$
\begin{equation}\label{FE-tilde-g_n(mu)}
   \begin{split}
    \mu^2 \tilde{g}_{n+2} &=\dfrac{n-4}{n(n+1)}
\tilde{g}_n+\dfrac{2}{n(n+1)}\mu\partial_\mu 
\tilde{g}_n+\dfrac{1}{n+1}(2+\beta_0-\beta_0e^\mu)\sum_{\substack{n_1+n_2=n+2\\
n_i\geq 4}} \tilde{g}_{n_1}\tilde{g}_{n_2}\\
                  &+\mu\dfrac{1}{n+1}
\tilde{g}_n\left(2(2+\beta_0-\beta_0 \, e^\mu)\tilde{f}_2
+\beta_0e^\mu+1-\dfrac{4}{n}-(1-\frac{2}{n})\partial_\mu \log(H(\mu))\right)\ .
\end{split}
\end{equation}
We write the formal Taylor expansion of $\Tilde{f}_2$ 
and $\Tilde{g}_n$ around $\mu=0$
\begin{equation}\label{expansion_tilde}
    \Tilde{f}_2(\mu)=\sum_{k\geq 0}
\Tilde{f}_{2,k}\mu^k\;,\quad \Tilde{g}_n(\mu)=\sum_{k\geq 0}\Tilde{g}_{n,k}\mu^k\ .
\end{equation}

From (\ref{expansion_tilde}), (\ref{expansion_log(H)}) and the 
FEs (\ref{tildef_4}), (\ref{tildef_n,n>2}), 
we deduce the relations between the coefficients of the formal 
Taylor expansion of $\Tilde{f}_2$ and $\Tilde{g}_n$
\begin{equation}\label{FE_tilde_f_2_k}
\begin{split}
 \Tilde{f}_{2,k+1} &= \frac{1}{k+1}\Big(3\Tilde{g}_{4,k}
+\Tilde{f}_{2,k}-(2+\beta_0)
\sum_{\nu=0}^k\Tilde{f}_{2,\nu}\Tilde{f}_{2,k-\nu}
-\beta_0\sum_{\nu=0}^k\frac{1}{(k-\nu)!}
\Tilde{f}_{2,\nu} \\
&+\beta_0\sum_{\nu=0}^k\frac{1}{\nu!}\sum_{\nu'=0}^{k-\nu}
\Tilde{f}_{2,\nu'}\Tilde{f}_{2,k-\nu-\nu'}\Big)\ .   
\end{split}
\end{equation}

\begin{equation}\label{FE_tilde_g_n_k}
\begin{split}
    \Tilde{g}_{n,k+2} &=\dfrac{n(n+1)}{n+2k}\Tilde{g}_{n+2,k}
-\dfrac{n-4}{n+2k}\Tilde{g}_{n,k+1}+\dfrac{n-2}{n+2k}
\sum_{\nu=0}^{k+1}\Tilde{g}_{n,\nu} h_{k+1-\nu} \\
&-\dfrac{\beta_0 n}{n+2k}
\sum_{\nu=0}^{k+1}\frac{1}{(k+1-\nu)!}\Tilde{g}_{n,\nu}
-\dfrac{(2+\beta_0)n}{n+2k}\sum_{\substack{n_1+n_2=n+2\\
n_i\geq 4}}\sum_{\nu=0}^{k+2}\Tilde{g}_{n_1,\nu}\Tilde{g}_{n_2,k+2-\nu}
\\ 
&+\dfrac{\beta_0 n}{n+2k}\sum_{\substack{n_1+n_2=n+2\\
n_i\geq 4}}\sum_{\nu=0}^{k+2}\frac{1}{\nu!}
\sum_{\nu'=0}^{k+2-\nu-\nu'}\Tilde{g}_{n_1,\nu'}\Tilde{g}_{n_2,k+2-\nu-\nu'} \\
&- 2\dfrac{(2+\beta_0)n}{n+2k}\sum_{\nu=0}^{k+1}
\Tilde{g}_{n,\nu}\Tilde{f}_{2,k+1-\nu}
+\dfrac{2\beta_0 n}{n+2k}
\sum_{\nu=0}^{k+1}\frac{1}{\nu!}
\sum_{\nu'=0}^{k+1-\nu}\Tilde{g}_{n,\nu'}\Tilde{f}_{2,k+1-\nu-\nu'}\ .
\end{split}
\end{equation}

Regularity at $\mu=0$ implies that
\begin{equation}\label{tilde_g_n,0}
    0=\dfrac{n-4}{n}\Tilde{g}_{n,0}+2\sum_{\substack{n_1+n_2=n+2\\
n_i\geq 4}}\Tilde{g}_{n_1,0}\Tilde{g}_{n_2,0}\;.
\end{equation}
and
\begin{equation}\label{tilde_g_n_1}
\begin{split}
0 &=\dfrac{n-2}{n}\Tilde{g}_{n,1}+\dfrac{n-4}{n}\Tilde{g}_{n,0}
+\beta_0\Tilde{g}_{n,0}-\dfrac{n-2}{n}h_0\Tilde{g}_{n,0}+4\sum_{\substack{n_1+n_2=n+2\\
n_i\geq 4}}\Tilde{g}_{n_1,0}\Tilde{g}_{n_2,1} \\
&-\beta_0\sum_{\substack{n_1+n_2=n+2\\
n_i\geq 4}}\Tilde{g}_{n_1,0}\Tilde{g}_{n_2,0}+4\Tilde{g}_{n,0}\Tilde{f}_{2,0}\ .    
\end{split}
\end{equation}

Using (\ref{tilde_g_n,0}) we can rewrite (\ref{tilde_g_n_1}) as
\begin{equation}\label{true_tilde_g_n_1}
   \dfrac{n-2}{n}\Tilde{g}_{n,1}+4\sum_{\substack{n_1+n_2=n+2\\
n_i\geq 4}}\Tilde{g}_{n_1,0}\Tilde{g}_{n_2,1}+\Tilde{g}_{n,0}
\left(4\Tilde{f}_{2,0}+(1-\frac{4}{n})
(1+\frac{\beta_0}{2})+\beta_0-(1-\frac{2}{n})h_0\right)=0\ . 
\end{equation}

Now we derive bounds on $\Tilde{g}_{n,k}$ and $\Tilde{f}_{2,k}$.

\begin{restatable}{lemma}{boundslemma}\label{tilde_lemma_bounds}
    Let $\Tilde{f}_n$ be smooth solutions of 
(\ref{tildef_4}),(\ref{tildef_n,n>2}). 
For given $\tilde{f}_{2,0},\tilde{f}_{4,0}$ we choose $K$ 
sufficiently large such that
\begin{equation}\label{initial_bounds_tilde}
    \vert \tilde{f}_{2,0}\vert\leq\dfrac{\sqrt{K}}{16},
\quad \vert \tilde{f}_{4,0}\vert
=\vert \tilde{g}_{4,0}\vert\leq\dfrac{\sqrt{K}}{32}\ .
\end{equation}
Then
\begin{equation}
    \vert \tilde{f}_{2,1}\vert\leq \dfrac{K}{2},
\quad\vert \tilde{g}_{4,1}\vert\leq \dfrac{K}{32}\;,
\end{equation}
and for $n\geq 4$
\begin{equation}\label{general_bounds_n_tilde}
    \vert \tilde{g}_{n,0}\vert\leq\dfrac{K^{\frac{n}{2}-\frac{3}{2}}}{2n^2},
\quad \vert \tilde{g}_{n,1}\vert\leq \dfrac{K^{\frac{n}{2}-\frac{1}{2}}}{n}\ .
\end{equation}
\end{restatable}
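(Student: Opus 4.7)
The plan is to adapt the proof of Lemma \ref{triviality_lemma_1} to the massive setting, using the recursions (\ref{FE_tilde_f_2_k}), (\ref{FE_tilde_g_n_k}) and the regularity conditions (\ref{tilde_g_n,0}), (\ref{true_tilde_g_n_1}) in place of the corresponding equations from the massless case. The essential observation is that $\beta_0 < 1/2$ (since $\alpha_0 < 1/(2m^2)$) and, by (\ref{expansion_log(H)}), $|h_0|\leq 50\, c\, C\, e^2$ is a fixed universal constant, so the additional terms specific to the massive theory contribute only bounded corrections that can be absorbed by taking $K$ sufficiently large.

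First I would bound $\tilde{f}_{2,1}$: setting $k=0$ in (\ref{FE_tilde_f_2_k}) gives the explicit formula
\begin{equation*}
\tilde{f}_{2,1}=3\tilde{g}_{4,0}+(1-\beta_0)\tilde{f}_{2,0}-2\tilde{f}_{2,0}^2\;,
\end{equation*}
and the triangle inequality applied with (\ref{initial_bounds_tilde}) and $\beta_0<1/2$ yields $|\tilde{f}_{2,1}|\leq K/2$ once $K$ is large. To bound $\tilde{g}_{4,1}$, I would evaluate (\ref{tildef_n,n>2}) at $n=4$, $\mu=0$. Because $\tilde{f}_6(0)=0$ (Lemma \ref{factorization_tilde_f_n}) and $\tilde{g}_4=\tilde{f}_4$, one obtains a closed expression of the form
\begin{equation*}
\tilde{g}_{4,1}=\bigl(h_0-2\beta_0-8\tilde{f}_{2,0}\bigr)\tilde{g}_{4,0}\;,
\end{equation*}
from which (\ref{initial_bounds_tilde}) together with the universal bound on $|h_0|$ forces $|\tilde{g}_{4,1}|\leq K/32$ for $K$ large enough.

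Next I would prove the bounds on $\tilde{g}_{n,0}$ and $\tilde{g}_{n,1}$ for $n\geq 6$ by induction in $n$, using the regularity conditions (\ref{tilde_g_n,0}) and (\ref{true_tilde_g_n_1}) as replacements for (\ref{FE_trivial_field2_1_O(N)}) and (\ref{FE_trivial_field2_2_O(N)}). The condition (\ref{tilde_g_n,0}) has exactly the same algebraic form as in the massless case, so the induction step reduces to the estimate
\begin{equation*}
|\tilde{g}_{n,0}|\leq \frac{n}{n-4}\cdot\frac{1}{2}\sum_{\substack{n_1+n_2=n+2\\ n_i\geq 4}}\frac{K^{\frac{n}{2}-\frac{5}{2}}}{n_1^2 n_2^2}\leq \frac{K^{\frac{n}{2}-\frac{3}{2}}}{2n^2}\;,
\end{equation*}
the final step requiring the combinatorial estimate by direct check for $n\leq 10$ and Lemma \ref{inversesquare} for $n\geq 12$, with $K$ taken large enough. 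For $\tilde{g}_{n,1}$, equation (\ref{true_tilde_g_n_1}) is the analogue of (\ref{FE_trivial_field2_2_O(N)}) with additional bounded contributions involving $\beta_0$ and $h_0$; applying the induction hypothesis and the bound on $\tilde{f}_{2,0}$ again gives a combinatorial sum controlled in the same way, and the claimed bound $|\tilde{g}_{n,1}|\leq K^{n/2-1/2}/n$ follows (this form is just a convenient rewriting of the massless bound $K^{n/2-3/2}(1+nK/2)/n^2$ for large $K$).

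The main obstacle is purely bookkeeping: one must track how the new $\beta_0$- and $h_0$-dependent terms on the right-hand sides of (\ref{FE_tilde_f_2_k}), (\ref{FE_tilde_g_n_k}) and (\ref{true_tilde_g_n_1}) propagate through the induction. Since all these corrections come with fixed, $n$-independent prefactors, they do not alter the scaling in $n$ of the bounds and only modify the threshold value of $K$ required. The combinatorial sums encountered are identical in shape to those treated in Lemma \ref{triviality_lemma_1}, so no new analytic input is needed; a careful but routine choice of $K$ completes the proof.
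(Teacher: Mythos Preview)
Your proposal is correct and follows essentially the same approach as the paper: you use the explicit expression for $\tilde f_{2,1}$ from (\ref{FE_tilde_f_2_k}) at $k=0$, derive the closed form for $\tilde g_{4,1}$ (the paper obtains the equivalent $|\tilde g_{4,1}|=2|\tilde g_{4,0}|\,|4\tilde f_{2,0}+\beta_0-\tfrac12 h_0|$ directly from (\ref{true_tilde_g_n_1}) at $n=4$, which is the same formula), and then run induction in $n$ via (\ref{tilde_g_n,0}) and (\ref{true_tilde_g_n_1}) with the combinatorial sum controlled by direct check for $n\le 10$ and Lemma \ref{inversesquare} for $n\ge 12$. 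The paper makes the threshold on $K$ slightly more explicit (it takes $\sqrt K>7cC>4$), but the structure of the argument is identical.
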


\begin{proof}
    See Appendix \ref{Appendix_D_lemmas_triviality}.
\end{proof}

\begin{restatable}{proposition}{trivprop}\label{triviality_prop_massive}
    Under the same assumptions as in 
Lemma \ref{tilde_lemma_bounds}, choosing $K$ large enough we have
\begin{equation}
    \vert \tilde{g}_{n,k}\vert
\leq K^{\frac{n}{2}+k-\frac{3}{2}}
\left\vert\dfrac{n}{4}+k-3\right\vert!\;\dfrac{1}{(k!)^{\frac{1}{8}}}\;,
\quad \vert \tilde{f}_{2,k}\vert
\leq K^{k+\frac{1}{2}}
\dfrac{\vert k-3\vert!}{(\vert k-1\vert!)^{\frac{1}{8}}}\;,
\quad n\geq 4,\ k\geq 0\ .
\end{equation}
\end{restatable}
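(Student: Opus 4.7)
The plan is to mirror the strategy used for Proposition \ref{triviality_prop}, proceeding by induction on $M:=n+k$, ascending in $k$ within each fixed $M$. The base cases $k\leq 1$ of $\tilde{g}_{n,k}$ and the analogous initial cases of $\tilde{f}_{2,k}$ are supplied by Lemma \ref{tilde_lemma_bounds}; the remaining low-$k$ cases of $\tilde{f}_{2,k}$ can be checked directly from (\ref{FE_tilde_f_2_k}) using (\ref{initial_bounds_tilde}) once $K$ is chosen large enough.

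For the inductive step I would apply the recursion (\ref{FE_tilde_g_n_k}) and estimate each term on the right-hand side separately. The contributions already present in the massless setting, namely the terms $\frac{n(n+1)}{n+2k}\tilde{g}_{n+2,k}$, $\frac{n-4}{n+2k}\tilde{g}_{n,k+1}$, the convolutions $\sum_\nu \tilde{g}_{n,\nu}\tilde{f}_{2,k+1-\nu}$, and the double product $\sum_\nu \tilde{g}_{n_1,\nu}\tilde{g}_{n_2,k+2-\nu}$, are treated by copying the corresponding estimates from Proposition \ref{triviality_prop}. The double product in particular is controlled by splitting the range of $n_1$ as in (\ref{horrible_sum}) and then invoking Lemma \ref{useful_lemma}. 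Each of these contributions fits inside the target bound with an extra factor of $\tfrac{1}{\sqrt{K}}$.

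The new features relative to Sect.\ref{Existence_trivial_solution} are the two additional sums: $\sum_\nu\tilde{g}_{n,\nu}h_{k+1-\nu}$ coming from $\partial_\mu\log H(\mu)$, and the sums carrying the factor $\frac{1}{(k+1-\nu)!}$ or $\frac{1}{\nu!}$ coming from the Taylor expansion of $e^\mu$. For the $h$-sum I would combine the induction hypothesis with the estimate $|h_k|\leq cC^{k+1}(5e)^{k+2}2^{k+1}$ from (\ref{expansion_log(H)}), splitting the convolution at some threshold in $\nu$ so that in each half either the induction hypothesis or the bound on $h$ dominates, and then absorbing the resulting geometric growth in $k$ into the factorial $|\tfrac{n}{4}+k-3|!$ while paying one power of $\tfrac{1}{\sqrt{K}}$. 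For the $e^\mu$-terms the factorial denominators $\frac{1}{(k+1-\nu)!}$ make the sums converge absolutely and yield an even sharper bound. The bound on $\tilde{f}_{2,k+1}$ then follows in the same fashion from (\ref{FE_tilde_f_2_k}), using the bounds on $\tilde{g}_{4,k}$ and $\tilde{f}_{2,\nu}$ that are already available at the current induction stage.

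The main obstacle is precisely the accommodation of the new terms, and this is why the factorial exponent in the denominator of the target bound has been weakened from $\tfrac{1}{4}$ (as in Proposition \ref{triviality_prop}) to $\tfrac{1}{8}$. The super-geometric growth of $h_k$ and the extra $k$-dependent combinatorics from the $e^\mu$-convolutions prevent the sharper $\tfrac{1}{4}$-bound from closing the induction; the weaker $(k!)^{-1/8}$ factor provides just enough slack so that, after using a Vandermonde-type step and the inequality $\nu!\,(k+2-\nu)!\leq a!\,(k+2-a)!$ on the dominant sums, the convolutions reorganise into $(\tfrac{n}{4}+k-1)!$ times $(k+2)!^{-1/8}$ with coefficients of size $\mathcal{O}(1/\sqrt{K})$. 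Once all contributions are summed, choosing $K$ sufficiently large makes the sum of these prefactors smaller than $1$, closing the induction.
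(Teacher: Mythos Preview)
Your proposal is correct and follows essentially the same route as the paper: induction on $n+k$, reuse of the Proposition~\ref{triviality_prop} estimates for the terms already present in the massless system, and separate treatment of the new convolutions with $h_{k+1-\nu}$ and the $e^\mu$-factorials using the bound (\ref{expansion_log(H)}) together with Lemma~\ref{useful_lemma}, the weakening of the exponent from $\tfrac14$ to $\tfrac18$ being exactly what makes the induction close. The paper's proof makes the ``threshold'' split you describe concrete by isolating $\nu\in\{0,1\}$ and $\nu\in\{k-1,k,k+1\}$ from the central range $2\le\nu\le k-2$ before applying the $\mathcal F$-estimate, but this is precisely the mechanism you outline.
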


\begin{proof}
    See Appendix \ref{Appendix_D_lemmas_triviality}.
\end{proof}

\paragraph{} Choosing a smooth two-point function of the 
form (\ref{ansatz}), the sequence $(b_n)_{n\geq 1}$ satisfies
bounds of the same type as (\ref{bound_b_n}) ($N=1$ in our setting). Since we chose the same two-point function as in 
Sect.\ref{Triviality}, Lemma \ref{derivativesf_2} remains valid 
and so does Proposition \ref{vanishing_solutions_in_the_UV}.
Then, the solutions $\tilde{f}(\mu)$ are well-defined 
on $[0,\Tilde{\mu}^{\max}]$ and vanish in the UV-limit. 
Therefore the extension of Theorem \ref{theorem_triviality_massless} 
to the massive theory is straightforward.

\begin{theorem}[Triviality of pure mean-field $\varphi^4$-theory 
for the theory with a physical IR cutoff]\label{theorem_triviality_massive}
    Consider the $\varphi^4_4$-theory of bare interaction lagrangian (\ref{bare_lagrangian_trivial}) for $N=1$. Let $\tilde{f}_n(\mu)$ be  smooth solutions of the mean-field flow equations (\ref{FE-mu_new}) and the corresponding mean-field boundary conditions (\ref{BDYconditions_massive}) with
    \begin{equation}
        0<c_{0,4}<+\infty,\quad \vert c_{0,2}\vert<+\infty\;.
    \end{equation}
    There exist smooth solutions of (\ref{FE-mu_new}) $\tilde{f}_n(\mu)\in C^\infty([0,\tilde{\mu}_{\max}])$ such that they vanish in the UV-limit, i.e.
    \begin{equation}
        \lim\limits_{\tilde{\mu}_{\max}\rightarrow +\infty} \tilde{f}_n(\tilde{\mu}_{\max})=0,\quad n\geq 2\;.
    \end{equation} 
\end{theorem}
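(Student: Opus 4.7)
The plan is to mirror the proof of Theorem \ref{theorem_triviality_massless} step by step, substituting the massive-theory analogues that have already been established. The heavy technical work has in fact been front-loaded into Lemma \ref{tilde_lemma_bounds} and Proposition \ref{triviality_prop_massive}, so the theorem itself amounts to an assembly.

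First I would observe that for any finite $c_{0,4}>0$ and $|c_{0,2}|<+\infty$, the mean-field boundary conditions (\ref{BDYconditions_massive}) determine $\tilde{f}_{2,0}$ and $\tilde{f}_{4,0}=\tilde{g}_{4,0}$. One then picks $K$ large enough that (\ref{initial_bounds_tilde}) holds and that the smallness conditions used in Proposition \ref{triviality_prop_massive} are met. These results then supply the factorial-type bounds on $|\tilde{g}_{n,k}|$ and $|\tilde{f}_{2,k}|$ uniformly in $n,k$.

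Second, I would pick the two-point function $\tilde{f}_2(\mu)$ via the ansatz (\ref{ansatz}). Exactly as in Section \ref{Existence_trivial_solution}, the relation (\ref{b_n_induction}) expresses the coefficients $b_n$ in terms of the Taylor coefficients $\tilde{f}_{2,k}$; inserting the bounds of Proposition \ref{triviality_prop_massive} yields an inequality of the same shape as (\ref{b_n_inequality}), from which the induction of Proposition \ref{b_n_bounds} gives
\begin{equation*}
|b_n|\;\leq\;C(K)\,\frac{n^2}{2^n}\,,\qquad n\geq 1\,.
\end{equation*}
In particular the series (\ref{ansatz}) converges and the constructed $\tilde{f}_2$ is smooth on $[0,\tilde{\mu}_{\max}]$. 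Since Proposition \ref{derivativesf_2} and the first part of Proposition \ref{vanishing_solutions_in_the_UV} depend only on the structure of the ansatz (\ref{ansatz}) and on this bound on $b_n$, they carry over unchanged to give $\lim_{\tilde{\mu}_{\max}\to+\infty}\partial_\mu^l\tilde{f}_2(\tilde{\mu}_{\max})=0$ for every $l\geq 0$. The smooth solutions $\tilde{f}_n(\mu)$ for $n\geq 4$ are then reconstructed inductively from (\ref{tildef_n,n>2}), and the UV-vanishing propagates to all $n$ by the same induction on $n$ (going up in $l$) used in Proposition \ref{vanishing_solutions_in_the_UV}.

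The only genuinely new input relative to the massless setting is to verify that the extra $\mu$-dependent coefficients appearing in (\ref{tildef_n,n>2}) --- namely $\partial_\mu\log H(\mu)$, $\beta_0 e^{\mu}$ and $2+\beta_0(1-e^{\mu})$ --- are smooth and uniformly bounded on $[0,\tilde{\mu}_{\max}]$, and tend to inert values ($0$, $0$, $2$ respectively) as $\alpha_0\to 0$ on any bounded $\mu$-interval. This is precisely what the auxiliary Lemma \ref{boundinverseofH} and Lemma \ref{derivatives_logH} have been set up to provide, and it is where I expect the main bookkeeping obstacle to lie: one must confirm that these factors neither spoil the $b_n$-bound nor obstruct the induction that propagates the vanishing property from $\tilde{f}_2$ to all $\tilde{f}_n$. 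Once this is checked, the conclusion of the theorem follows immediately.
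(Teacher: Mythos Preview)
Your proposal is correct and follows essentially the same route as the paper: the paper's proof literally reads ``The proof is the same as for Theorem \ref{theorem_triviality_massless}'', and the paragraph preceding the theorem spells out exactly the assembly you describe --- use Lemma \ref{tilde_lemma_bounds} and Proposition \ref{triviality_prop_massive} to get the coefficient bounds, feed them through (\ref{b_n_induction}) to obtain bounds of type (\ref{bound_b_n}), then invoke Propositions \ref{derivativesf_2} and \ref{vanishing_solutions_in_the_UV} verbatim. One small caveat: your remark that the extra coefficients tend to inert values on bounded $\mu$-intervals is true but not quite what is needed, since one evaluates at $\mu=\tilde{\mu}_{\max}$ where $\beta_0 e^{\mu}=1+\beta_0$; what actually matters, and what Lemmas \ref{boundinverseofH} and \ref{derivatives_logH} provide, is uniform boundedness of these coefficients and all their $\mu$-derivatives on $[0,\tilde{\mu}_{\max}]$, which is sufficient for the induction of Proposition \ref{vanishing_solutions_in_the_UV} to go through.
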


\begin{proof}
    The proof is the same as for Theorem \ref{theorem_triviality_massless}.
\end{proof}

The uniqueness of the solutions can be proven following the reasoning in Sect.\ref{Uniqueness_trivial_solution}. The differences remain purely technical as the coefficients in the r.h.s of (\ref{FE-mu_new}) are $\mu$-dependent analytic functions in our new setting.

\newpage

\appendix

\begin{appendices}

\section{Appendix A}
\subsection{Properties of Gaussian measures}\label{appendix_A}
We consider a Gaussian probability measure $d\mu$ on the space of 
continuous  real-valued functions $C(\Omega)$, 
where $\Omega$ is a finite (simply connected compact)
volume in $\R^d$, $d\geq 1\,$.

\subsubsection{Covariance of a Gaussian measure} 

We recall here the definition of the covariance of a Gaussian measure, details can be found in \cite{Glimm1987}.

A Gaussian measure of mean zero is uniquely characterized by its 
covariance $C(x,y)$
    \begin{equation}
        \int d\mu_C(\phi)\,\phi(x)\phi(y)=\tilde C(x,y)= \tilde C(y,x)\;.
    \end{equation}
$\tilde C\,$ is a positive non-degenerate bilinear form defined 
on $\mathcal{C}^\infty(\Omega)\times\mathcal{C}^\infty(\Omega)\,$. 
We assume that $\tilde C(x,y)$ is translation invariant, then 
$C(z):= \tilde C(x,y)\,, \ z=x-y\,$, is well defined. 
Using the notations
\begin{equation}
    \langle\phi,J\rangle=\int_\Omega d^dx\,\phi(x)J(x)\;,
\quad \langle J,CJ\rangle=\int_\Omega d^dx d^dy\, J(x)C(x-y)J(y)
\end{equation}
with $J\in\mathcal{C}^\infty (\Omega)$, 
the generating functional of the correlation functions is
\begin{equation}
    \int d\mu_C(\phi)e^{\langle\phi,J\rangle}=e^{\frac{1}{2}\langle J,CJ\rangle}\;.
\end{equation}
The generating functional is also called the characteristic functional 
of the Gaussian measure $\mu_C$.
For $C=(-\Delta+I)^{-1}$, where $\Delta$ denotes the Laplacian operator 
in $\R^d$, the corresponding Gaussian measure $\mu_C$ is supported on 
distributions with $1-\frac{d}{2}-\varepsilon$ continuous derivatives, 
$\varepsilon>0$. For a regularized propagator, the Fourier transform
of which falls off rapidly in momentum space, 
the Gaussian measure is supported on smooth functions. 

\subsubsection{Properties of Gaussian measures}
 We list here some properties of Gaussian measures. Proofs can be found in 
\cite{Glimm1987}.
 \begin{itemize}
     \item Integration by parts: Let $A(\phi)$ be a polynomial in 
$\phi(x)$ and its derivatives $\partial_\mu\phi(x)$. 
     \begin{equation}\label{IPP}
        \int d\mu_C(\phi) \phi(x)A(\phi)
=\int d\mu_C(\phi)\int_\Omega dy\;C(x-y)\dfrac{\delta}{\delta\phi(y)}A(\phi)\;.
    \end{equation}
    \item Translation of a Gaussian measure: Let $C$ be a covariance. 
Under a change of variable $\phi=\varphi¨+\psi$ for 
$\varphi\in\mbox{supp}(\mu_C)$ and $\psi$ such that its Fourier transform
$\hat \psi(p)$ is compactly supported.
    \begin{equation}\label{Translation_GM}
    d\mu_C(\phi)
=e^{-\frac{1}{2}\langle\psi,C^{-1}\psi\rangle}
e^{-\langle C^{-1}\psi,\varphi\rangle}d\mu_C(\varphi)\;.
\end{equation}
\item Decomposition of the covariance: Assume that
\[C=C_1+C_2\;,\quad C_i>0\;.\]
Then for $A(\phi)$ as in (\ref{IPP})
\begin{equation}
    \int d\mu_C(\phi)A(\phi)
=\int d\mu_{C_1}(\phi_1)\int d\mu_{C_2}(\phi_2)A(\phi_1+\phi_2)\;.
\end{equation}
\item Infinitesimal change of covariance: We assume the covariance 
depends on a parameter $t$, and is differentiable w.r.t. $t$
\[
C(x-y)\equiv C_t(x-y)\;,\quad \Dot{C}_t(x-y):=\dfrac{d}{dt}C_t(x-y)
\;.\]
Let $F(\phi)$ be a smooth functional, integrable w.r.t. 
$\mu_{C_t}$ $\forall t\,$. We have
\begin{equation}\label{differentiation}
    \dfrac{d}{dt}\int d\mu_{C_t}(\phi)F(\phi)
=\dfrac{1}{2}\int d\mu_{C_t}(\phi)
\left\langle\dfrac{\delta}{\delta\phi},
\Dot{C}_t\dfrac{\delta}{\delta\phi}\right\rangle F(\phi)\;.
\end{equation}
 \end{itemize}

\subsection{Isotropic Cartesian tensors}\label{Cartesian_Tensors}

\subsubsection{Isotropic Cartesian tensors}

\begin{definition}[Cartesian tensors]
Let $X$ be an Euclidean space of finite dimension $N\geq 1$. 
We identify $X$ with its dual space $X^\ast$. 
A rank $n$-tensor \textbf{T} is an element of $\bigotimes_{i=1}^n X$. 
Assuming that  we work with an orthonormal basis, we do not need to distinguish 
the contravariant and the covariant components of a tensor. 
Then, \textbf{T} $\in \bigotimes_{i=1}^n X$ is called a Cartesian 
rank $n$ tensor. Its components are denoted by $T_{i_1 i_2\cdots i_n}$.
    
\end{definition}

\begin{definition}[Isotropic Cartesian tensors]
    A Cartesian rank $n$ tensor \textbf{T} is said to be isotropic
 if for any matrix $M\in SO(N)$
    \begin{equation}
        M_{i_1 j_1}M_{i_2 j_2}\cdots M_{i_n j_n}T_{j_1 j_2\cdots j_n}=T_{i_1 i_2\cdots i_n}\;.
    \end{equation}
\end{definition}

\begin{proposition}\label{tensor_structure}
    Let \textbf{T} be a real rank $n$-tensor, $n\in 2\N\,$. 
If \textbf{T} is symmetric and $O(N)$-invariant, then it is of the form
    \begin{equation}\label{symmetric_cartesian_tensors}
        T_{i_1 i_2\cdots i_n}=A\sum_{\sigma\in S_n}
\delta_{i_{\sigma(1)}i_{\sigma(2)}}\cdots \delta_{i_{\sigma(n-1)}i_{\sigma(n)}}\;,\quad A\in\R\;,
    \end{equation}
    where $S_n$ denotes the set of permutations in 
$\,\lbrace 1,\cdots,n\rbrace\,$.
\end{proposition}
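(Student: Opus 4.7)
The plan is to translate the tensor problem into a statement about homogeneous polynomials via polarisation, where $O(N)$-invariance becomes the classical condition of being a polynomial in the squared norm. Since $\mathbf{T}$ is symmetric of rank $n$, it is uniquely determined by its associated homogeneous polynomial
\begin{equation*}
P(x) := \sum_{i_1,\dots,i_n} T_{i_1 i_2 \cdots i_n}\, x_{i_1} x_{i_2} \cdots x_{i_n},\qquad x \in \mathbb{R}^N,
\end{equation*}
through the polarisation identity, and the hypothesised $O(N)$-invariance of $\mathbf{T}$ is equivalent to the invariance $P(Mx) = P(x)$ for all $M \in O(N)$.

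First I would invoke the classical fact that the algebra of $O(N)$-invariant polynomials on $\mathbb{R}^N$ is generated by $|x|^2 = \sum_{i=1}^N x_i^2$. This can be justified elementarily: $O(N)$ acts transitively on spheres, so an invariant polynomial $P$ satisfies $P(x) = g(|x|)$ for some function $g$; restricting to a line $x = t e_1$ through the origin then shows that $t \mapsto g(|t|)$ is a polynomial in $t$ and (being an even function of $t$) forces $g$ to be a polynomial in $|x|^2$. Combined with homogeneity of degree $n = 2k$, this yields
\begin{equation*}
P(x) = A'\, |x|^{2k}
\end{equation*}
for some constant $A' \in \mathbb{R}$.

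The remaining step is to recognise the symmetric tensor whose associated polynomial is $|x|^{2k}$. Writing
\begin{equation*}
|x|^{2k} = \sum_{i_1,\dots,i_n} \delta_{i_1 i_2} \delta_{i_3 i_4} \cdots \delta_{i_{n-1} i_n}\, x_{i_1} \cdots x_{i_n},
\end{equation*}
I would symmetrise the coefficient over $S_n$; this leaves the polynomial unchanged because $x_{i_1} \cdots x_{i_n}$ is already symmetric in its indices, and it identifies the coefficient tensor with $F_{i_1 \cdots i_n}$ from $(\ref{symmetric_part_tensor})$ up to the combinatorial factor $n!$. Uniqueness of polarisation then yields $T_{i_1 \cdots i_n} = A \sum_{\sigma \in S_n} \delta_{i_{\sigma(1)} i_{\sigma(2)}} \cdots \delta_{i_{\sigma(n-1)} i_{\sigma(n)}}$ for a suitable $A \in \mathbb{R}$, which is precisely the claimed formula.

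The main obstacle is the invariant-theoretic input $\mathbb{R}[x_1,\dots,x_N]^{O(N)} = \mathbb{R}[|x|^2]$; once this is granted, the rest of the argument is bookkeeping. An alternative route would be a direct induction on the even integer $n$: the base case $n=2$ gives $T_{ij} = A \delta_{ij}$ at once from $M T M^T = T$ for every $M \in O(N)$, and the inductive step would proceed by contracting $\mathbf{T}$ against auxiliary vectors and invoking the inductive hypothesis. That route, however, requires a more delicate bookkeeping of the symmetry condition at each step, so the polynomial approach outlined above seems cleaner and more transparent.
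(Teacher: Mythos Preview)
Your proof is correct and takes a genuinely different route from the paper's. The paper invokes an external classification (from \cite{appleby_duffy_ogden_1987}) of all $SO(N)$-isotropic tensors as linear combinations of products of Kronecker deltas and, when $n\geq N$, Levi--Civita symbols; it then uses a reflection $R\in O(N)\setminus SO(N)$ to kill the Levi--Civita contributions, and finally appeals to symmetry to collapse the remaining coefficients $\lambda_\sigma$ to a single $A$. Your argument instead exploits symmetry from the outset: polarisation sets up a bijection between symmetric rank-$n$ tensors and homogeneous degree-$n$ polynomials, and the $O(N)$-invariance condition becomes the elementary statement $\R[x_1,\dots,x_N]^{O(N)}=\R[\,|x|^2\,]$, which you justify cleanly via transitivity on spheres and restriction to a line. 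The paper's route is broader in scope (it starts from the full classification of isotropic, not necessarily symmetric, tensors) but relies on an outside reference and leaves the final ``all $\lambda_\sigma$ equal'' step implicit; your route is self-contained and shorter precisely because polarisation lets you work with polynomials in a single vector variable, where the invariant theory is trivial.
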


\begin{proof}
     The most general forms of 
real isotropic Cartesian rank $n$ tensors are
    \begin{itemize}
        \item $n<N$:
        \begin{equation}\label{isotropic_cartesian_tensors_1}
            T_{i_1 i_2\cdots i_n}=\sum_{\sigma\in S_n}
\lambda_\sigma\ \delta_{i_{\sigma(1)}i_{\sigma(2)}}\cdots \delta_{i_{\sigma(n-1)}i_{\sigma(n)}}\;,
\quad \lambda_\sigma\in\R\ .
        \end{equation}
        \item $n=N$:
        \begin{equation}\label{isotropic_cartesian_tensors_2}
           T_{i_1 i_2\cdots i_n}=\sum_{\sigma\in S_n}
\lambda_\sigma\ \delta_{i_{\sigma(1)}i_{\sigma(2)}}\cdots \delta_{i_{\sigma(n-1)}i_{\sigma(n)}}
+\mu\ \varepsilon_{i_1i_2\cdots i_n}\;,
\quad \lambda_\sigma,\mu\in\R\ . 
        \end{equation}
        \item $n>N$ and $N$ even:
        \begin{equation}\label{isotropic_cartesian_tensors_3}
         \!\!\!\!   T_{i_1 i_2\cdots i_n}
=\sum_{\sigma\in S_n}\lambda_\sigma\ \delta_{i_{\sigma(1)}i_{\sigma(2)}}\cdots 
\delta_{i_{\sigma(n-1)}i_{\sigma(n)}}+\sum_{\sigma\in S_n}
\mu_\sigma\ \varepsilon_{i_{\sigma(1)}\cdots i_{\sigma(N)}}
\delta_{i_{\sigma(N+1)}i_{\sigma(N+2)}}\cdots\delta_{i_{\sigma(n-1)}i_{\sigma(n)}}\;,
        \end{equation}
        where $\lambda_\sigma,\mu_\sigma\in\R\,$.
    \end{itemize}
    Here $\varepsilon_{i_1i_2\cdots i_n}$ is the Levi-Civita tensor defined by
    \begin{equation}
        \varepsilon_{i_1 i_2\cdots i_n}=\left\{
    \begin{array}{lll}
        1 & \mbox{if}\; (i_1 i_2 \cdots i_n) 
\mbox{ is an even permutation of}\; (1,2,\cdots,n) \\
    -1 & \mbox{if}\; (i_1 i_2 \cdots i_n) 
\mbox{ is an odd permutation of}\; (1,2,\cdots,n) \\
        0 & \mbox{otherwise.}
    \end{array}
\right.
    \end{equation}
For a proof see \cite{appleby_duffy_ogden_1987}. 

If \textbf{T} is $O(N)$-invariant, it is an isotropic Cartesian tensor. 
It then takes the form (\ref{isotropic_cartesian_tensors_1}),
(\ref{isotropic_cartesian_tensors_2}),(\ref{isotropic_cartesian_tensors_3}) 
depending on $n\,$. 
We consider the reflection $R$ in the hyperplane through the origin, 
orthogonal to $\bm{e_k}$, $1\leq k\leq N$, where $\bm{e_k}$ denotes 
a canonical basis vector of $\R^N\,$. 
The matrix expression of $R$ is given by
\begin{equation}
    R_{ij}=\delta_{ij}-2\delta_{ik}\delta_{jk}\ .
\end{equation}
Then we have
\begin{equation}\label{reflection}
    R_{i_1 j_1}\cdots R_{i_N j_N}\varepsilon_{j_1\cdots j_N}
=\mbox{det}(R)\varepsilon_{i_1\cdots i_N}=-\varepsilon_{i_1\cdots i_N}\ .
\end{equation}
Then from (\ref{reflection}) and (\ref{isotropic_cartesian_tensors_1}),
(\ref{isotropic_cartesian_tensors_2}), (\ref{isotropic_cartesian_tensors_3}), 
symmetric and $O(N)$-invariant tensors take the form 
(\ref{symmetric_cartesian_tensors}).
\end{proof}

\subsubsection{Contraction of isotropic Cartesian tensors}

We recall the definition of $F_{i_1i_2\cdots i_n}$

\begin{equation}
    F_{i_1i_2\cdots i_n}:=\delta_{(i_1i_2}\delta_{i_3i_4}\cdots\delta_{i_{n-1}i_n)}
:=\frac{1}{n!}\sum_{\sigma\in S_n}\delta_{i_{\sigma(1)}i_{\sigma(2)}}
\cdots \delta_{i_{\sigma(n-1)}i_{\sigma(n)}}\;.
\end{equation}

\begin{proposition}\label{contraction}
    We have the following identities:
    \begin{equation}\label{identities_contraction}
   \sum_{j=1}^N F_{i_1i_2\cdots i_n jj}=\dfrac{N+n}{n+1}F_{i_1i_2\cdots i_n}\;,
\quad \sum_{j=1}^N \mathbb{S}\Big[F_{i_1i_2\cdots i_{n_1-1}j}F_{i_{n_1}i_{n_1+1}\cdots i_{n}j}
\Big]=F_{i_1i_2\cdots i_n}\ .
\end{equation}
\end{proposition}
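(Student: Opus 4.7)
The plan is to reduce both identities to the single generating-function relation
\[
F_{i_1 \cdots i_n}\, x_{i_1} \cdots x_{i_n} \;=\; |x|^n \quad \text{for all } x\in\R^N\,,
\]
which is immediate from the definition: each of the $n!$ Kronecker-delta products in the symmetrization contributes $(|x|^2)^{n/2} = |x|^n$ upon contraction with $x^{\otimes n}$, cancelling the $1/n!$ prefactor. Differentiating this relation once w.r.t.\ $x_j$ yields the auxiliary formula $F_{k_1 \cdots k_{m-1} j}\, x_{k_1} \cdots x_{k_{m-1}} = |x|^{m-2}\, x_j$. The key observation is that both sides of each identity in (\ref{identities_contraction}) are symmetric, $O(N)$-invariant rank-$n$ tensors, hence by Proposition \ref{tensor_structure} each side is a scalar multiple of $F_{i_1 \cdots i_n}$; the proportionality constant can therefore be fixed by contracting with $x^{\otimes n}$, since a symmetric tensor is uniquely recovered from its polarized polynomial.

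For the first identity I would apply the Laplacian $\sum_j \partial_{y_j}^2$ to both sides of $F_{k_1\cdots k_{n+2}}\,y_{k_1}\cdots y_{k_{n+2}} = |y|^{n+2}$ and then relabel $y \to x$. The LHS produces $(n+2)(n+1)\sum_j F_{i_1\cdots i_n jj}\,x_{i_1}\cdots x_{i_n}$ by differentiating twice and using the full symmetry of $F$, while the RHS gives $(n+2)(N+n)|x|^n$ by a direct computation of $\nabla^2|y|^{n+2}$. Dividing yields the polynomial identity with prefactor $(N+n)/(n+1)$, and Proposition \ref{tensor_structure} then upgrades this to the claimed tensor identity.

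For the second identity I would first observe that the shuffle average $\mathbb{S}$ acts trivially once one contracts with the fully-symmetric tensor $x_{i_1}\cdots x_{i_n}$, since each shuffle only relabels dummy summation indices. Applying the auxiliary formula to each $F$-factor then gives
\[
\sum_j |x|^{n_1-2}x_j \cdot |x|^{n_2-2}x_j \;=\; |x|^{n_1+n_2-2} \;=\; |x|^n\,,
\]
using $n_1+n_2 = n+2$, which matches the contracted RHS and fixes the constant to $1$. The one small bookkeeping step is checking that the LHS of the second identity really is symmetric under all of $S_n$: this follows because $F$ is already symmetric within each block and the shuffle average $\mathbb{S}$ over the cosets $S_n/(S_{n_1-1}\times S_{n_2-1})$ promotes block-symmetry to full $S_n$-symmetry. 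Beyond this verification I do not foresee any substantive obstacle; once the polarization trick is set up, both identities reduce to one-line computations in $|x|^n$.
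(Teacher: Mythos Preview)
Your proof is correct and follows essentially the same strategy as the paper: both arguments rest on the polarization identity $F_{i_1\cdots i_n}x_{i_1}\cdots x_{i_n}=|x|^n$, apply the Laplacian for the first identity, and use the gradient formula $F_{k_1\cdots k_{m-1}j}x_{k_1}\cdots x_{k_{m-1}}=|x|^{m-2}x_j$ for the second, invoking Proposition~\ref{tensor_structure} to pass from the polynomial equality back to the tensor equality.

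The only difference is packaging. The paper bundles all degrees into the single generating series $F(\bm{x})=\sum_{n\in 2\N}|\bm{x}|^{n}$ and computes $\Delta F$ and $\|\nabla F\|^2$; you work at a fixed degree $n$ (and, for the second identity, a fixed partition $(n_1,n_2)$). Your term-by-term treatment of the second identity is in fact slightly sharper: the paper's $\|\nabla F\|^2$ computation, read literally, only yields the identity after summing over all partitions $n_1+n_2=n+2$ with weights $n_1 n_2$, whereas you obtain it directly for each individual $(n_1,n_2)$, which is what is actually needed in passing from (\ref{FE_O(N)_3}) to (\ref{FE_O(N)_4}). Your verification that the shuffle average $\mathbb{S}$ promotes block symmetry to full $S_n$-symmetry is exactly the bookkeeping step the paper records in its last displayed line.
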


\begin{proof}
    Let $F(\bm{x})$ be the generating series of $F_{i_1i_2\cdots i_n}$ defined by
    
\[ F(\bm{x}):=\sum_{n\in 2\N}\sum_{i_1,i_2,\cdots, i_n}x_{i_1}
\cdots x_{i_n}F_{i_1 i_2\cdots i_n}=\sum_{n=1}^{+\infty} \vert\bm{x}\vert^{2n}\;,
\quad \vert\bm{x}\vert^2:=\sum_{i=1}^N x_i^2,\quad \bm{x}\in\R^N \ .
\]
The result of the action of the Laplacian on $F(\bm{x})$ is
\begin{equation}\label{laplacian_tensor_1}
    \Delta F(\bm{x})=\sum_{j=1}^N\partial^2_{j}F(\bm{x})
=2N+\sum_{n\in 2\N}\sum_{i_1,i_2,\cdots,i_n}x_{i_1}\cdots x_{i_n}(n+2)(n+1)
\sum_{j=1}^N F_{i_1 i_2\cdots i_n jj}\ .
\end{equation}
On the other hand we have for $n\in \N$
\begin{equation}
    \sum_{j=1}^N\partial^2_{j}\vert\bm{x}\vert^{2n}
=2n(N+2n-2)\vert\bm{x}\vert^{2n-2}
\end{equation}
and therefore
\begin{equation}\label{laplacian_tensor_2}
\begin{split}
\Delta F(\bm{x}) &=2N+\sum_{n\in 2\N,n>2}n(N+n-2)\vert\bm{x}\vert^{n-2} \\
&=2N+\sum_{n\in 2\N}(n+2)(N+n)\vert\bm{x}\vert^n \\
&=2N+\sum_{n\in 2\N}\sum_{i_1,\cdots,i_n} x_{i_1}\cdots x_{i_n}(n+2)(N+n)
F_{i_1\cdots i_n}\ .
\end{split}
\end{equation}
Now (\ref{laplacian_tensor_1}) and (\ref{laplacian_tensor_2}) imply
\begin{equation}
     \sum_{j=1}^N F_{i_1i_2\cdots i_n jj}=\dfrac{N+n}{n+1}F_{i_1i_2\cdots i_n}\;.
\end{equation}

For the second identity in (\ref{identities_contraction}), we compute
\begin{equation}
\begin{split}
 \Vert\nabla F(x)\Vert^2:=\sum_{j=1}^N(\partial_{j}F(\bm{x}))^2 &= 4\vert 
\bm{x}\vert^2\sum_{n_1,n_2\geq 1}n_1 n_2\vert\bm{x}\vert^{2(n_1+n_2-2)}
=\sum_{n_1\in 2\N, n_2\in 2\N} n_1 n_2
\vert\bm{x}\vert^{n_1+n_2-2} \\
 &= 4\vert\bm{x}\vert^2+\sum_{n\in 2\N, n>2}
\sum_{n_1+n_2=n+2}n_1 n_2\sum_{i_1,\cdots ,i_n}x_{i_1}\cdots  x_{i_n}F_{i_1\cdots i_n}\ .   
\end{split}
\end{equation}
And on the other hand, we have
\begin{equation}
    \Vert\nabla F(x)\Vert^2 =4\vert\bm{x}\vert^2+\sum_{n\in 2\N, n>2}
\sum_{n_1+n_2=n+2}n_1 n_2\sum_{i_1,\cdots ,i_n}x_{i_1}
\cdots x_{i_n}\sum_{j=1}^N F_{(i_1\cdots i_{n_1-1}j}F_{ji_{n_1}\cdots i_n) }\;,
\end{equation}
leading to 
\begin{equation}
    \sum_{j=1}^N F_{(i_1\cdots i_{n_1-1}j}F_{ji_{n_1}\cdots i_n) } = F_{i_1\cdots i_n}\;.
\end{equation}
From the definition of a symmetric part of a tensor \textbf{T} 
in (\ref{symmetric_part_tensor}) and the fact that $\mathbb{S}$ 
is an average operator, we obtain 
\begin{equation}
    \sum_{j=1}^N\mathbb{S}\Big[F_{i_1i_2\cdots i_{n_1-1}j}F_{ji_{n_1}i_{n_1+1}
\cdots i_{n}}\Big]=F_{i_1\cdots i_n}\ .
\end{equation}
\end{proof}

\subsection{Bound on a sum}\label{appendix_C}
\begin{lemma}\label{inversesquare}
    For $n\geq 12$
\begin{equation}
    \frac{n}{n-2}\sum_{\substack{n_1+n_2=n+2\\
n_i\geq 4,n_i\in 2\N}}\dfrac{1}{n_1^2(n+2-n_1)^2}\leq\frac{1}{n^2}\;.
\end{equation}
\end{lemma}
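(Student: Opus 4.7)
The plan is to exploit the partial-fraction identity
\[
\frac{1}{n_1(n+2-n_1)} \;=\; \frac{1}{n+2}\Bigl(\frac{1}{n_1}+\frac{1}{n+2-n_1}\Bigr)
\]
to extract a factor of $(n+2)^{-2}$ from each summand. Writing $E_n:=\{n_1\in 2\N:\ 4\le n_1\le n-2\}$, squaring the identity and using the involution $n_1\leftrightarrow n+2-n_1$ on $E_n$ to identify $\sum 1/n_1^2=\sum 1/(n+2-n_1)^2$ and to collapse the cross terms via a second application of partial fractions, the double sum reduces to a one-dimensional sum in which the large prefactor has been pulled out:
\[
\sum_{\substack{n_1+n_2=n+2\\ n_i\ge 4,\,n_i\in 2\N}}\frac{1}{n_1^2\,n_2^2}
\;=\;\frac{1}{(n+2)^2}\Biggl(2\sum_{n_1\in E_n}\frac{1}{n_1^2}\;+\;\frac{4}{n+2}\sum_{n_1\in E_n}\frac{1}{n_1}\Biggr).
\]

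I would then control the two residual one-dimensional sums by elementary estimates. The quadratic piece satisfies $\sum_{n_1\in E_n}1/n_1^2\le\frac14\sum_{k\ge 2}1/k^2=(\pi^2-6)/24<1/6$, being a restriction of the Basel series to even arguments; the harmonic piece satisfies $\sum_{n_1\in E_n}1/n_1=\frac12(H_{(n-2)/2}-1)\le\frac12\ln((n-2)/2)$ by the standard bound $H_m-1\le\ln m$. Inserting these bounds and multiplying by $n/(n-2)$, the claimed inequality $\le 1/n^2$ reduces, after clearing $(n+2)^2$, to
\[
\frac{1}{3}+\frac{2\ln((n-2)/2)}{n+2}\;\le\;\frac{(n-2)(n+2)^2}{n^3}\;=\;1+\frac{2}{n}-\frac{4}{n^2}-\frac{8}{n^3}.
\]

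The right-hand side already exceeds $1$ for $n\ge 4$ (equivalent to $n^2-2n-4\ge 0$), whereas the left-hand side is decreasing in $n$ on $[12,\infty)$ and evaluates to roughly $0.56$ at $n=12$; hence the left-hand side remains below $1$ for all $n\ge 12$, which closes the argument. The main obstacle I anticipate is purely quantitative: the naive termwise bound $n_1(n+2-n_1)\ge 4(n-2)$ combined with the count of $(n-4)/2$ summands yields only a bound of order $n^{-1}$, losing a full power of $n$ compared with the target $n^{-2}$. The partial-fraction squaring is therefore indispensable, as it uncovers the full $(n+2)^{-2}$ saving up front and leaves only one-dimensional sums that are genuinely harmless.
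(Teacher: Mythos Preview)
Your proof is correct and follows essentially the same route as the paper: both arguments use the partial-fraction identity to square out a factor of $(n+2)^{-2}$ and then bound the residual one-dimensional sums. The only cosmetic differences are that the paper first re-indexes the even sum to an integer sum (picking up a $1/16$) before applying partial fractions, and it bounds the harmonic tail crudely by $\sum_{n_1}\frac{1}{n_1}\le \frac{n-4}{4}$ rather than by $\frac12\ln((n-2)/2)$; your version is slightly sharper but the strategy is identical.
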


 \begin{proof}
     First we have for $n\geq 12$
\[\sum_{\substack{n_1+n_2=n+2\\
n_i\geq 4,n_i\in 2\N}}\dfrac{1}{n_1^2(n+2-n_1)^2}\leq 
\dfrac{1}{16}\sum_{\substack{n_1+n_2=\frac{n}{2}+1\\
n_i\geq 2,n_i\in\N}}\dfrac{1}{n_1^2(\frac{n}{2}+1-n_1)^2}\;.\]
We use the decomposition
\[\dfrac{1}{X^2(X-A)^2}=\dfrac{1}{A^2}\left(\dfrac{1}{X^2}
+\dfrac{1}{(X-A)^2}+\dfrac{2}{AX}-\dfrac{2}{A(X-A)}\right),\quad A>0\;.\]
We get
\begin{equation*}
\begin{split}
&\sum_{\substack{n_1+n_2=n+2\\
n_i\geq 4,n_i\in 2\N}}\dfrac{1}{n_1^2(n+2-n_1)^2}\\
&\leq\dfrac{1}{4(n+2)^2}\sum_{2\leq n_1\leq\frac{n}{2}-1}
\Bigg(\dfrac{1}{n_1^2}+\frac{1}{(\frac{n}{2}+1-n_1)^2}
+\dfrac{2}{(\frac{n}{2}+1)n_1} 
+\dfrac{2}{(\frac{n}{2}+1)(\frac{n}{2}+1-n_1)}\Bigg)\\
&\leq \dfrac{1}{2(n+2)^2}\left(\zeta(2)-1+\frac{n-4}{n+2}\right)
\leq \frac{5}{6(n+2)^2}\ ,
\end{split}
\end{equation*}
where we used the fact that 
$\sum_{2\leq n_1\leq \frac{n}{2}-1}\frac{1}{n_1}\leq \frac{n-4}{4}\,$. 
Therefore we have for $n\geq 12$
\begin{align*}
\dfrac{n}{n-2}\sum_{\substack{n_1+n_2=n+2\\
n_i\geq 4}}\dfrac{1}{n_1^2(n+2-n_1)^2} 
&\leq\dfrac{5}{6(n+2)^2}\frac{n}{n-2} 
\leq \frac{5}{6n^2}\frac{n^2}{(n+2)^2}\frac{n}{n-2}\leq\frac{1}{n^2}\;.
\end{align*}
\end{proof}

\subsection{Derivatives of 
$\frac{f}{g}$}\label{l-th_derivative_f/g}

We prove
\begin{proposition}
    For $f,g$ smooth with $g>0$,
    \begin{equation}\label{formula_appendix_D}
        \left(\dfrac{f}{g}\right)^{(l)}=\frac{1}{g}
\left[f^{(l)}-l!\;\sum_{j=1}^{l}\frac{g^{(l+1-j)}}{(l+1-j)!}
\frac{1}{(j-1)!}\left(\dfrac{f}{g}\right)^{(j-1)}\right]\;.
    \end{equation}   
\end{proposition}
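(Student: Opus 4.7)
The plan is to avoid an induction and instead derive the identity in one shot from the Leibniz rule applied to the trivial factorization $f = g \cdot (f/g)$. Setting $h := f/g$, this writes $f$ as a product of two smooth functions (recall $g>0$ everywhere, so $h$ is smooth), and the $l$-th derivative is
\[
f^{(l)} = \sum_{k=0}^{l} \binom{l}{k}\, g^{(l-k)}\, h^{(k)}.
\]
The term $k=l$ isolates $g\cdot h^{(l)}$, which is what we want to solve for.

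Concretely, I would split off the $k=l$ term and rearrange:
\[
g\, h^{(l)} = f^{(l)} - \sum_{k=0}^{l-1} \binom{l}{k}\, g^{(l-k)}\, h^{(k)}.
\]
Then I would change the summation index via $j = k+1$, so $k$ ranging over $\{0,\dots,l-1\}$ corresponds to $j\in\{1,\dots,l\}$, and rewrite the binomial coefficient as
\[
\binom{l}{j-1} = \frac{l!}{(j-1)!\,(l+1-j)!}.
\]
Substituting yields
\[
g\, h^{(l)} = f^{(l)} - l!\sum_{j=1}^{l} \frac{g^{(l+1-j)}}{(l+1-j)!}\,\frac{1}{(j-1)!}\, h^{(j-1)},
\]
and dividing by $g$ (permitted since $g>0$) gives exactly the claimed formula \eqref{formula_appendix_D}.

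There is no real obstacle here: the entire content is the Leibniz product rule together with a rewriting of the binomial coefficient. The only thing that could look mildly confusing is the index shift, so I would present the two steps (split off $k=l$, reindex by $j=k+1$) explicitly to make the identification of the coefficients transparent. No smoothness beyond what is already assumed is needed, and the hypothesis $g>0$ is used solely to ensure $h=f/g$ is smooth and that the division by $g$ at the end is legitimate.
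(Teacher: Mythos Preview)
Your proof is correct and is in fact cleaner than the paper's. The paper proceeds by induction on $l$: it verifies the case $l=1$ directly, then differentiates the identity at level $l$ and regroups the resulting terms using Pascal's rule $\binom{l}{j-1}+\binom{l}{j-2}=\binom{l+1}{j-1}$ to recover the identity at level $l+1$. Your approach sidesteps the induction entirely by recognizing that the formula is nothing but the Leibniz expansion of $f^{(l)} = (g\cdot h)^{(l)}$ solved for $g\,h^{(l)}$; the binomial coefficients appear already in the right form after the reindexing $j=k+1$. What you gain is brevity and transparency---the identity is seen to be a tautological rewriting of Leibniz rather than something requiring an inductive verification. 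The paper's route has no particular advantage here; it simply treats the formula as an identity to be checked rather than derived.
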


\begin{proof}
    The proof is done by induction in $l\in\N$. For $l=1$, the statement 
is easily verified. 
  Then differentiating (\ref{formula_appendix_D}) and using the 
induction hypothesis, we obtain
\begin{equation}
    \begin{split}
       \left(\dfrac{f}{g}\right)^{(l+1)} 
&= \dfrac{f^{(l+1)}}{g}-\dfrac{g' f^{(l)}}{g^2}+\frac{g'}{g^2}
\sum_{j=1}^l \binom{l}{j-1}g^{(l+1-j)}\left(\dfrac{f}{g}\right)^{(j-1)} \\
       &-\frac{1}{g}\sum_{j=1}^l \binom{l}{j-1}\Big(g^{(l+2-j)}
\left(\dfrac{f}{g}\right)^{(j-1)}+g^{(l+1-j)}\left(\dfrac{f}{g}\right)^{(j)}
\Big) \\
       &= \dfrac{f^{(l+1)}}{g}-\frac{g'}{g} 
\left(\dfrac{f}{g}\right)^{(l)}-\frac{g^{(l+1)}}{g}\dfrac{f}{g}-l 
\frac{g'}{g} \left(\dfrac{f}{g}\right)^{(l)} \\
       &-\frac{1}{g}\sum_{j=2}^l \Bigg[\binom{l}{j-1}
+\binom{l}{j-2}\Bigg]g^{(l+2-j)}\left(\dfrac{f}{g}\right)^{(j-1)} \\
       &= \frac{1}{g}\left[f^{(l+1)}-(l+1)!\;\sum_{j=1}^{l+1}
\frac{g^{(l+2-j)}}{(l+2-j)!}\frac{1}{(j-1)!}
\left(\dfrac{f}{g}\right)^{(j-1)}\right]\;,
    \end{split}
\end{equation}
where we used 
\begin{equation}
    \binom{n}{k}+\binom{n}{k-1}=\binom{n+1}{k},\quad n\in\N_0\,,\ k\in\N\;.
\end{equation}  
\end{proof}


\section{Appendix B}\label{Appendix_D}

In this appendix, we prove the different lemmas
 and propositions stated in Sect.\ref{Massive_theory}.
The bounds we obtain are expressed in terms of positive constants
$C\,,\ C_i\,$, $i=1,\ldots,11\,$ 
chosen sufficiently large and then  for $K$ sufficiently
large, depending on these constants.

\subsection{Bounds on the functions $H(\mu)\,,\ h(\mu)$}
\label{Appendix_D_Flow_equations}

Here we prove bounds on the functions $H(\mu)\,,\ h(\mu)$ 
introduced in (\ref{H(mu)})-(\ref{Hh}), and on their derivatives.

\begin{lemma}\label{boundinverseofH}
\begin{equation}
0 <     \dfrac{1}{H(\mu)}\leq C\ , \quad \mu\in[0,\Tilde{{\mu}}_{\max}]\ .
\label{bdH}
\end{equation}
\end{lemma}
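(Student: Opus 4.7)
The target is a positive constant $C$ with $H(\mu)\ge 1/C$ uniformly in $\mu$. Writing $\beta:=\beta_0 e^\mu$, so that $\beta$ ranges over $[\beta_0,1+\beta_0]$, this reduces to the lower bound $G(\beta)\ge 1/C$ on that interval.

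The plan is to start from the rewriting already recorded in the paper,
\[
G(\beta)=c(1+\beta_0-\beta)+2c\beta\int_0^{+\infty}\frac{u^3 e^{-u^2}}{u^2+\beta}\,du,
\]
both summands of which are manifestly nonnegative for $\beta\in[\beta_0,1+\beta_0]$. Thus $G(\beta)>0$ pointwise, but since the first term vanishes at the right endpoint $\beta=1+\beta_0$, a uniform bound requires that the integral term pick up the slack there.

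To that end I would produce a uniform lower bound on the integral. Since $u\mapsto u^3 e^{-u^2}/(u^2+\beta)$ is decreasing in $\beta$, and since the standing hypothesis $\alpha_0<1/(2m^2)$ gives $\beta_0<1/2$ and hence $\beta\le 1+\beta_0<3/2$, we obtain
\[
\int_0^{+\infty}\frac{u^3 e^{-u^2}}{u^2+\beta}\,du \;\ge\; \int_0^{+\infty}\frac{u^3 e^{-u^2}}{u^2+3/2}\,du \;=:\; J^\ast \;>\;0.
\]
Finally I would apply the elementary inequality $aX+bY\ge\min(a,b)(X+Y)$, valid for $a,b>0$ and $X,Y\ge 0$, with $a=c$, $b=2cJ^\ast$, $X=1+\beta_0-\beta$ and $Y=\beta$, which yields
\[
G(\beta)\;\ge\;\min(c,2cJ^\ast)\,(1+\beta_0)\;\ge\;\min(c,2cJ^\ast)\;>\;0.
\]
Taking $C$ to be the reciprocal of this constant then gives the bound \eqref{bdH}.

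No step here is substantive: the single observation that does the work is that the two contributions to $G$ cover each other's zeros on the interval $[\beta_0,1+\beta_0]$, so their sum stays uniformly positive. The only mild point is to make this complementarity precise, which the $\min(a,b)(X+Y)$ trick handles cleanly. The crucial \emph{qualitative} input is the careful choice of regularized propagator in \eqref{propagator}: it was designed precisely so that the decaying exponential multiplying the dangerous linear-in-$\alpha$ subtraction keeps the integral term nonzero all the way to the right endpoint.
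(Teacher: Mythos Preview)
Your proof is correct. Both you and the paper exploit the same observation---that the linear term $c(1+\beta_0-\beta)$ and the integral term $h(\mu)$ cover each other's zeros---but you package it differently. The paper proceeds by an explicit case split: for $\beta\le 1/2$ the linear term alone gives $H(\mu)\ge c(1/2+\beta_0)$, while for $\beta\ge 1/2$ the integral term is bounded below directly, yielding $H(\mu)\ge h(\mu)\ge c/(16e)$ and hence the explicit constant $C=256\,e\,\pi^2$. Your route avoids the split by the convexity device $aX+bY\ge\min(a,b)(X+Y)$, which is a little slicker and gives $C=1/\min(c,2cJ^\ast)$ in one stroke; the paper's version, on the other hand, produces a concrete numerical value without needing to estimate $J^\ast$. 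Either way the argument is elementary and the difference is purely stylistic.
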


\begin{proof}
We recall that we can choose $\alpha_0\leq\frac{1}{2m^2}$ so that 
$\beta_0\leq\frac{1}{2}$. Obviously $H(\mu)>0$ 
for $\mu\in[0,\tilde{{\mu}}_{\max}]$. For $\mu\in [0,-\ln(2\beta_0)]$, 
we have $H(\mu)\geq c(1+\beta_0-\frac{1}{2})\geq \frac{c}{2}>0\,$. 
For $\, \mu\in [-\ln(2\beta_0),{\mu}_{\max}]\,$ we have
\begin{equation}\label{Lemma 2.2}
    H(\mu)\geq h(\mu)\geq\dfrac{1}{4}
\int_k\dfrac{e^{-\frac{3 k^2}{2}}}{k^2+1}
\geq \dfrac{1}{4e^{\frac{3}{2}}}\int_{k,\vert k\vert\leq 1} 
\dfrac{1}{k^2+1}\geq \dfrac{1}{8e^{\frac{3}{2}}}\int_{k,\vert k\vert\leq 1} 
1 =\dfrac{c}{16e^{\frac{3}{2}}}\ .
\end{equation}
Choosing $\,C:=256\,e^{\frac{3}{2}}\,\pi^2\,$, the bound (\ref{bdH}) is satisfied.    
\end{proof}

\begin{lemma}\label{derivatives_h}
    For $l\geq 0$ and $\mu\in [0,\tilde{{\mu}}_{\max}]$
\begin{equation}\label{bound_h_k}
    \vert\partial_\mu^l h(\mu)\vert\leq c\,(5\,e)^l \vert l-1\vert!\ .
\end{equation}
\end{lemma}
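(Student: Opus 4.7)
The plan is to treat the case $l=0$ directly from the integral representation and the cases $l\ge 1$ by analytically continuing $h$ to a complex strip and applying Cauchy's estimate. Using the $4$-dimensional Gaussian identity $\int_k e^{-sk^2}=c/s^2$ together with $(k^2+1)^{-1}=\int_0^\infty e^{-t(k^2+1)}dt$, one finds
\[
    \int_k\frac{e^{-vk^2}}{k^2+1} \;=\; c\int_0^\infty\frac{e^{-u}}{(v+u)^2}\,du,
\]
and hence, writing $v:=\beta_0 e^\mu$,
\[
    h(\mu) \;=\; c\,v^2\!\int_0^\infty\frac{e^{-u}}{(v+u)^2}\,du \;=\; c\!\int_0^\infty\Bigl(\frac{v}{v+u}\Bigr)^{\!2}e^{-u}\,du.
\]
The second form immediately gives $|h(\mu)|\le c$ since $v/(v+u)\in[0,1]$, handling the case $l=0$.

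For $l\ge 1$, the first integral formula defines an analytic function of $v$ on $\mathbb{C}\setminus(-\infty,0]$, so $h$ extends analytically to the strip $|\operatorname{Im}\mu|<\pi$. On the smaller strip $|\operatorname{Im}\mu|\le\pi/2$, writing $v=|v|e^{i\phi}$ with $|\phi|\le\pi/2$ and using the elementary inequality $|v+u|^2=|v|^2+u^2+2u|v|\cos\phi\ge\tfrac12(|v|+u)^2$ yields $|h(\mu)|\le 2c|v|=2c\beta_0 e^{\operatorname{Re}\mu}$. I would then apply Cauchy's formula on the circle $|z-\mu_0|=1$ around any $\mu_0\in[0,\tilde\mu_{\max}]$ (which lies inside the analyticity strip since $1<\pi/2$); combined with the bound $\beta_0 e^{\operatorname{Re}z}\le\beta_0 e^{\mu_0+1}\le e(1+\beta_0)\le 2e$, this gives $|h(z)|\le 4ce$ on the whole disc and hence
\[
    |\partial_\mu^l h(\mu_0)| \;\le\; l!\cdot 4ce \;=\; 4ce\cdot l\cdot (l-1)!\,.
\]

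The proof concludes with the elementary inequality $4el\le(5e)^l$ valid for every $l\ge 1$: at $l=1$ it reads $4e\le 5e$, and if it holds at $l$ then $4e(l+1)\le 2\cdot 4el\le 2(5e)^l\le(5e)^{l+1}$, using $(l+1)/l\le 2$ and $5e>2$. Together with the $l=0$ estimate $|h|\le c$ this yields the stated bound $|\partial_\mu^l h(\mu)|\le c(5e)^l|l-1|!$ uniformly in $l\ge 0$. The only technical input required is the uniform estimate $|h(z)|\le 2c|v(z)|$ on the complex strip $|\operatorname{Im}\mu|\le\pi/2$, and it hinges solely on the elementary inequality $|v+u|^2\ge\tfrac12(|v|+u)^2$ for $|\arg v|\le\pi/2$; once this is in place, the argument is a single Cauchy estimate followed by a trivial numerical induction.
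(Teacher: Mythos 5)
Your proof is correct, but it takes a genuinely different route from the paper. The paper first derives the first-order identity $h'(\mu)=(2+\beta_0 e^{\mu})h(\mu)-c\,\beta_0 e^{\mu}$ (obtained by differentiating under the integral and using $\int_k e^{-sk^2}=c/s^2$), bounds $|h|\le c$ directly, and then differentiates the identity $l-1$ times with Leibniz's rule, closing an induction in $l$; the factor $(5e)^l$ is tuned so that the inductive sum over binomial coefficients converges. You instead exhibit the explicit representation $h=c\int_0^\infty\bigl(\tfrac{v}{v+u}\bigr)^2e^{-u}\,du$ with $v=\beta_0 e^{\mu}$, continue it analytically to the strip $|\operatorname{Im}\mu|<\pi$, prove the uniform bound $|h(z)|\le 2c\,|v(z)|$ on $|\operatorname{Im}\mu|\le\pi/2$ via $|v+u|^2\ge\tfrac12(|v|+u)^2$ and $\int_0^\infty |v|^2(|v|+u)^{-2}du=|v|$, and then invoke Cauchy's estimate on a unit disc (which fits inside the strip and on which $|v|\le e(1+\beta_0)\le 2e$ because $\mu_0\le\ln(1+1/\beta_0)$ and $\beta_0\le\tfrac12$). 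All the individual steps check out, including the final elementary induction $4el\le(5e)^l$. What your approach buys: the Cauchy estimate yields $|\partial_\mu^l h|\le 4ce\,l!$, which is in fact substantially \emph{sharper} than the stated $c(5e)^l|l-1|!$ for large $l$, and it avoids any recursion; it also makes the local analyticity of $h$ (used later in Lemma \ref{derivatives_logH} and (\ref{expansion_log(H)})) completely explicit. What the paper's approach buys: it stays entirely real-variable and elementary, and it matches the inductive Leibniz-rule machinery used throughout the rest of the appendix, so the constants propagate in the same format to $H$ and $\log H$.
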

    
\begin{proof}
 The proof is by induction in $l\geq 0\,$. 
First note $\vert h(\mu)\vert\leq c\,$ for $\,\mu\in[0,\tilde{{\mu}}_{\max}]\,$. 
For $l=1$
\begin{equation*}
    \vert \partial_\mu h(\mu)\vert=\vert(2+\beta_0e^\mu)h(\mu)
-c\beta_0e^\mu\vert\leq (4+2\beta_0)c\leq c\,5\,e\  ,
\end{equation*}
since $\beta_0\leq\frac{1}{2}\,$. Using Leibniz' s rule, 
we obtain for $l\geq 1$
\begin{equation}\label{Induction_Lemma2.1}
    \partial_\mu^l h(\mu)=\sum_{0\leq l_1\leq l-1}
\binom{l-1}{l_1}(2\delta_{l_1,0}+\beta_0e^\mu)
\partial_\mu^{l-1-l_1}h(\mu)-c\,\beta_0\,e^\mu\ .
\end{equation}
Inserting the induction hypothesis in the r.h.s of 
(\ref{Induction_Lemma2.1}) we get
\begin{equation}\label{inequality1}
\begin{split}
    \vert\partial_\mu^l h(\mu)\vert &\leq \sum_{0\leq l_1\leq l-1}
\binom{l-1}{l_1}(3+\beta_0)(5e)^{l-1-l_1}\vert l-2-l_1\vert!\;c+c(1+\beta_0) \\
    &\leq (3+\beta_0)(5e)^{l-1}\sum_{0\leq l_1\leq l-1}
\binom{l-1}{l_1}(l-1-l_1)!\;c+c\,(1+\beta_0) \\
    &\leq (3+\beta_0)(5e)^{l-1}(l-1)!\;e\,c+c(1+\beta_0) \\
    &\leq c\,(5e)^l (l-1)!
\left(\frac{7}{10}+\frac{3}{2\cdot 8^{l}}\right)\leq c\, (5e)^l (l-1)!\ .
\end{split}
\end{equation}
\end{proof}

\begin{lemma}\label{derivatives_H}
For $l\geq 0$, $\mu\in[0,\tilde{{\mu}}_{\max}]\,$,
\begin{equation}
    \vert\partial_\mu^l H(\mu)\vert\leq 3\,c\,(5e)^l \vert l-1\vert!\ .
\end{equation}
\end{lemma}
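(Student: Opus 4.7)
The plan is to exploit the explicit decomposition (\ref{H(mu)}), namely $H(\mu) = c(1+\beta_0) - c\beta_0 e^\mu + h(\mu)$, and bound each piece using the fact that $\beta_0 \leq \tfrac12$ and $\mu \in [0,\tilde\mu_{\max}]$ with $\tilde\mu_{\max} = \ln(1+1/\beta_0)$, together with the already-established bound on $h(\mu)$ and its derivatives from Lemma~\ref{derivatives_h}.

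For the case $l=0$ I would first rewrite the first two terms together as $c(1+\beta_0(1-e^\mu))$. Since $\beta_0(e^\mu - 1) \leq \beta_0(e^{\tilde\mu_{\max}}-1) = 1$, this quantity is bounded in absolute value by $c$. Combining with $|h(\mu)| \leq c$ from the $l=0$ case of Lemma~\ref{derivatives_h} gives $|H(\mu)| \leq 2c \leq 3c$, which matches the claim since $|{-1}|! = 1$.

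For $l\geq 1$, differentiating kills the constant $c(1+\beta_0)$, so
\begin{equation*}
\partial_\mu^l H(\mu) = -c\beta_0 e^\mu + \partial_\mu^l h(\mu)\ .
\end{equation*}
The first term is bounded by $c\beta_0 e^\mu \leq c(1+\beta_0) \leq \tfrac{3c}{2}$, again by the identity $\beta_0 e^\mu \leq 1+\beta_0$ on the allowed interval. The second is bounded by $c(5e)^l (l-1)!$ by Lemma~\ref{derivatives_h}. Since for every $l\geq 1$ one has $\tfrac{3c}{2} \leq 2c(5e)^l (l-1)!$ (the right-hand side is already $\geq 10ec$ at $l=1$ and grows in $l$), the sum of the two contributions is at most $3c(5e)^l(l-1)!$, giving the claim.

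There is essentially no obstacle here: once the decomposition (\ref{H(mu)}) is in hand, the bound is a straightforward consequence of Lemma~\ref{derivatives_h} plus the elementary estimate $\beta_0 e^\mu \leq 1+\beta_0$ that follows from the definition of $\tilde\mu_{\max}$. The only care needed is in book-keeping the constant $3$ on the right-hand side to absorb both the exponential term coming from $-c\beta_0 e^\mu$ and the bound on $\partial_\mu^l h$ uniformly in $l\geq 0$.
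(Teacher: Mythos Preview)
Your proof is correct and follows essentially the same route as the paper: use the decomposition $H(\mu)=c(1+\beta_0)-c\beta_0 e^\mu+h(\mu)$, control the exponential piece via $\beta_0 e^\mu\leq 1+\beta_0$, and invoke Lemma~\ref{derivatives_h} for $\partial_\mu^l h$. Your $l=0$ case is in fact slightly sharper than the paper's one-line version (you get $2c$ rather than the looser intermediate constant), but the argument is the same.
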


\begin{proof}
   From Lemma \ref{derivatives_h}
   \begin{equation}
  \vert\partial_\mu^l H(\mu)\vert \,\leq\, \vert c\,(1+\beta_0)\,\delta_{l,0}
\,-\,c\,\beta_0\ e^\mu\vert+\vert\partial_\mu^l h(\mu)\vert 
\,\leq\, 3\,c+\,c\,(5e)^l 
\vert l-1\vert!
\,\leq\, 3\,c\,(5e)^l \vert l-1\vert!\ .
\end{equation}
\end{proof}

\begin{lemma}\label{derivatives_logH}
     For $l\geq 1$ and $\mu\in[0,\tilde{{\mu}}_{\max}]\,$,
    \begin{equation}
     \vert\partial_\mu^l\log(H(\mu))\vert\,\leq\, 
c\, C^l\, (5e)^{l+1}\,2^l\, (l-1)!\ . 
    \end{equation}
\end{lemma}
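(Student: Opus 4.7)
The plan is to convert the statement into a recursion for $(\log H)^{(l)}$ driven by derivatives of $H$ itself, and then to close an induction on $l$ using the bounds on $1/H$ and on $\partial_\mu^l H$ already proved in Lemmas \ref{boundinverseofH} and \ref{derivatives_H}. The starting point is the identity $H\cdot (\log H)'=H'$; differentiating this $(l-1)$ times with the Leibniz rule and solving for the top derivative gives the recurrence
\begin{equation*}
(\log H)^{(l)}(\mu)=\frac{H^{(l)}(\mu)}{H(\mu)}-\sum_{k=1}^{l-1}\binom{l-1}{k}\frac{H^{(k)}(\mu)}{H(\mu)}\,(\log H)^{(l-k)}(\mu)\;,
\end{equation*}
which expresses the $l$-th derivative of $\log H$ in terms of lower-order ones and the quantities we can already control.

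Next, I would set $a_l:=\sup_{\mu\in[0,\mu_{\max}]}|(\log H)^{(l)}(\mu)|$ and prove by induction on $l\ge 1$ that $a_l\le M_l$ with $M_l:=c\,C^l(5e)^{l+1}2^l(l-1)!$. The base case $l=1$ follows directly from $|H'(\mu)|\le 3c\cdot 5e$ and $1/H\le C$, since $3cC\cdot 5e\le c\,C\,(5e)^2\cdot 2$. For the inductive step I insert the bounds $|H^{(k)}|\le 3c(5e)^k(k-1)!$ and $a_{l-k}\le M_{l-k}$ into the recurrence; the key combinatorial simplification
\begin{equation*}
\binom{l-1}{k}(k-1)!\,(l-k-1)! \;=\; \frac{(l-1)!}{k}
\end{equation*}
collapses the sum and yields
\begin{equation*}
a_l\;\le\;3cC(5e)^l(l-1)!\;+\;3c^2(5e)^{l+1}(l-1)!\sum_{k=1}^{l-1}\frac{C^{l-k+1}2^{l-k}}{k}\;.
\end{equation*}

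It remains to compare the right-hand side with $M_l$. The first contribution, divided by $M_l$, equals $3/[C^{l-1}(5e)2^l]\le 3/(10e)$ for $l\ge 1$. For the sum, the term $k=1$ dominates: the ratio of consecutive terms is $k/[2C(k+1)]\le 1/(2C)$, so the geometric series gives
\begin{equation*}
\sum_{k=1}^{l-1}\frac{C^{l-k+1}2^{l-k}}{k}\;\le\;C^l\,2^{l-1}\cdot\frac{2C}{2C-1}\;,
\end{equation*}
and its weighted contribution to $a_l/M_l$ becomes $3cC/(2C-1)$, which stays below $3c/2+o(1)$ as $C\to\infty$. With $c=1/(16\pi^2)$ both terms add to well below $1$, so $a_l\le M_l$ closes the induction.

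The only subtle point — and the main obstacle — is precisely the accounting of the factor $2^l$: a naive triangle-inequality estimate would convolve two $2^{l-k}$ and $2^k$ factors through the sum and produce $l\cdot 2^l$ or worse. The identity $\binom{l-1}{k}(k-1)!(l-k-1)!=(l-1)!/k$ is what removes the extra polynomial factor, while the geometric-series argument above is what replaces the harmonic-type sum $\sum 1/k$ by a constant multiple of the $k=1$ term. Once these two observations are combined, the constants can be absorbed by enlarging $C$ (still satisfying Lemma \ref{boundinverseofH}) and the claimed bound follows for all $l\ge 1$ uniformly in $\mu\in[0,\mu_{\max}]$.
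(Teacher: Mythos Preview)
Your proof is correct and follows essentially the same approach as the paper: both derive the same recursion for $(\log H)^{(l)}$ in terms of lower derivatives of $\log H$ and derivatives of $H$ (the paper obtains it from the general quotient-derivative formula (\ref{derivative_of_f/g_formula}), you obtain it directly by Leibniz on $H\,(\log H)'=H'$), and both close by induction using Lemmas \ref{boundinverseofH} and \ref{derivatives_H}. The only cosmetic difference is in how the sum is estimated: the paper simply uses $\sum_{j=1}^{l}2^j\le 2^{l+1}$ together with $4c=1/(4\pi^2)$, whereas you bound the sum by a geometric series dominated by its $k=1$ term; both yield constants comfortably below $1$.
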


\begin{proof}
For $l=1$ the bounds derived for $h,h'$ 
in Lemmas \ref{boundinverseofH} and \ref{derivatives_H} give :
\begin{equation}
    \vert\partial_\mu\log H(\mu)\vert\,\leq\, c\,C\,
15\, e\,\leq\, c\, C\, 50\ e^2\ .
\end{equation}
For $l\geq 1$  we 
have using (\ref{derivative_of_f/g_formula})
\begin{equation}\label{log_induction}
    \partial_\mu^{l+1} \log H(\mu) = \partial_\mu^l 
\left(\frac{H'(\mu)}{H(\mu)}\right) 
=  \dfrac{1}{H(\mu)}\left[\partial_\mu^{l+1} H(\mu)-l!\sum_{j=1}^{l}
\dfrac{\partial_\mu^{l+1-j}H(\mu)}{(l+1-j)!\; (j-1)!}
\partial_\mu^{j-1}\left(\dfrac{H'(\mu)}{H(\mu)}\right)\right]\ .
 \end{equation}
Since $\partial_\mu^{j-1}\left(\dfrac{H'(\mu)}{H(\mu)}\right)=\partial_\mu^j 
\log H(\mu)$, we can  proceed inductively on the r.h.s of  
(\ref{log_induction}). Using Lemma \ref{derivatives_H}, we get
\begin{equation}
    \begin{split}
        \vert\partial_\mu^{l+1} \log H(\mu)\vert 
&\leq C\Bigl[c\,(5e)^{l+1}l!+l!
\sum_{j=1}^l\dfrac{4c (5e)^{l+1-j}(l-j)!}{(l+1-j)!\; (j-1)!}cC^{j}(5e)^{j+1}2^j(j-1)!
\Bigr]\\
        &\leq c\,C^{l+1}(5e)^{l+2}
\Bigl[\dfrac{l!}{5eC^l}+4cl!\;\sum_{j=1}^{l}2^j\Bigr]\\
        &\leq c\,C^{l+1}(5e)^{l+2}2^{l+1}l! 
\Bigl[\frac{1}{10eC}+\frac{1}{4\pi^2}\Bigr]
\leq c\ C^{l+1}\ (5e)^{l+2}\ 2^{l+1}l!\ .
    \end{split}
\end{equation}
\end{proof}


\subsection{Bounds on the coefficients $\,\tilde{f}_{n,k}\,\ \tilde{g}_{n,k}\,$}
\label{Appendix_D_lemmas_triviality}

\factorization*
\begin{proof}
    The proof is done by induction in $N=n+2l$, going up in $l$. 
We start at $N=6$ and we have from the boundary conditions 
(\ref{BDYconditions_massive})
    \[
\tilde{f}_6(0)=0\ .
\]
    For $0\leq l<\frac{n}{2}-3$, we use (\ref{FE-mu_new}) 
and we solve it for $\partial_\mu^{l+1}\tilde{f}_n(0)$. 
Using the induction hypothesis, we obtain 
$\partial_\mu^{l+1}\tilde{f}_n(0)=0$ since in the products
    \[
\partial_\mu^{l_1} \tilde{f}_{n_1}(0)\partial_\mu^{l_2} \tilde{f}_{n_2}(0)\;,
\]
    the constraints $n_1+n_2=n+2$ and $l_1+l_2\leq l$ and $l<\frac{n}{2}-3$ 
imply that either $l_1\leq \frac{n_1}{2}-3\,$ or 
$\,l_2\leq \frac{n_2}{2}-3\,$.
\end{proof}

\boundslemma*
\begin{proof}
    From (\ref{FE_tilde_f_2_k}) we have
\begin{equation*}
     \vert \tilde{f}_{2,1}\vert
= \vert 3 \tilde{g}_{4,0}+(1-\beta_0)
\tilde{f}_{2,0}-2\Tilde{f}_{2,0}^2\vert\leq \dfrac{K}{2}\;,
\end{equation*}
and from (\ref{true_tilde_g_n_1})
\begin{equation*}
    \vert \tilde{g}_{4,1}\vert=2\vert \tilde{g}_{4,0}\vert
 \left\vert 4\tilde{f}_{2,0}+\beta_0-\frac{1}{2}h_0\right\vert
\leq \dfrac{K}{32}
\end{equation*}
choosing $\sqrt{K}>7\,c\,C>4\,$ since we have the sharper bound
\[h_0\leq 7\,c\,C\;,\]
which can be easily obtained from the explicit expression of $H$.

We proceed by induction in $n$. 
For $n\geq 6$, we find from (\ref{tilde_g_n,0}) and from 
(\ref{true_tilde_g_n_1})
for $K$ large enough
\[
\vert \tilde{g}_{n,0}\vert\leq \dfrac{n}{n-4}\frac{1}{2}\sum_{\substack{n_1+n_2=n+2\\
n_i\geq 4}}\dfrac{K^{\frac{n}{2}-\frac{3}{2}+1-\frac{3}{2}}}{n_1^2(n+2-n_1)^2}
\leq\dfrac{K^{\frac{n}{2}-\frac{3}{2}}}{2n^2}\ ,
\]
\[
\vert \tilde{g}_{n,1}\vert \leq \dfrac{2n}{n-2}(n-2)
\sum_{\substack{n_1+n_2=n+2\\
n_i\geq 4}}\dfrac{K^{\frac{n}{2}-\frac{3}{2}+2-\frac{3}{2}}}{n_1^2(n+2-n_1)^2}
+\dfrac{n}{n-2}\dfrac{K^{\frac{n}{2}-\frac{3}{2}}}{2n^2}
\left(\dfrac{\sqrt{K}}{4}+2+\frac{1}{2}7cC\right) 
\leq\dfrac{K^{\frac{n}{2}-\frac{1}{2}}}{n}\ .
\]
For $n\leq 10$ the previous bounds for the sum over $n_1$ can be checked 
explicitly, for $n\geq 12$ we use Lemma \ref{inversesquare} 
in Appendix \ref{appendix_C}.
\end{proof}

\trivprop*
\begin{proof}
    The proof proceeds by  induction in $N=n+k$ 
going up in $k$, as in the proof of Proposition \ref{triviality_prop}. 
For $k\leq 1$, the bounds follow from Lemma \ref{tilde_lemma_bounds}. 
In the r.h.s of (\ref{FE_tilde_g_n_k}), 
the first, second, fifth and seventh terms can be treated 
as in the proof of Proposition \ref{triviality_prop}. 
So we focus on the remaining terms.
    \begin{itemize}
        \item Third term: we separate the terms summed over 
$2\leq\nu\leq k-2$ and the remaining terms. 
Using Lemma \ref{tilde_lemma_bounds} we have, choosing $K>\,10\,e\,C$

        \begin{itemize}
        \item $\nu=0$:
        \begin{equation}
        \begin{split}
           \dfrac{n-2}{n+2k}\vert \tilde{g}_{n,0}h_{k+1}\vert
&\leq \dfrac{n-2}{n+2k} \dfrac{K^{\frac{n}{2}-\frac{3}{2}}}{2n^2} 
c\,C^{k+2} (5e)^{k+3} 2^{k+2} \\
&\leq 
K^{\frac{n}{2}+k+\frac{1}{2}}\left(\frac{n}{4}+k-1\right)!
\;\dfrac{1}{[(k+2)!]^{\frac{1}{8}}}5\,e\,c\ .   
        \end{split}
        \end{equation}

        \item $\nu=1$:
        \begin{equation}
            \begin{split}
             \dfrac{n-2}{n+2k}\vert \tilde{g}_{n,1}h_{k}\vert &\leq 
\dfrac{n-2}{n+2k}\dfrac{K^{\frac{n}{2}-\frac{1}{2}}}{n} c\,C^{k+1}(5e)^{k+2}2^{k+1} \\
&\leq K^{\frac{n}{2}+k+\frac{1}{2}}
\left(\frac{n}{4}+k-1\right)!\;\dfrac{1}{[(k+2)!]^{\frac{1}{8}}}5\,e\,c\ .   
            \end{split}
        \end{equation}

        \end{itemize}

        For $k-1\leq \nu\leq k+1$, we have the following bounds
        \begin{equation}
         \vert \tilde{g}_{n,\nu}\vert\leq 
K^{\frac{n}{2}+\nu-\frac{3}{2}}\left(\frac{n}{4}+k-1\right)!\;\frac{1}{[(k+2)!]^{\frac{1}{8}}}\;.   
        \end{equation} 
Using (\ref{expansion_log(H)}) we obtain
                \[
\dfrac{n-2}{n+2k}\sum_{\nu=k-1}^{k+1}\vert \Tilde{g}_{n,\nu}h_{k+1-\nu}\vert
\leq 15ec 
K^{\frac{n}{2}+k+\frac{1}{2}}\left(\dfrac{n}{4}+k-1\right)!
\;\dfrac{1}{[(k+2)!]^{\frac{1}{8}}}\ .
\]
         
        The remaining sum can be bounded for $k\geq 4$ using
 Lemma \ref{useful_lemma} and (\ref{expansion_log(H)})
        
        \begin{equation}\label{tilde_3_term}
        \begin{split}
         \dfrac{n-2}{n+2k}\sum_{\nu=2}^{k-2}\vert 
\Tilde{g}_{n,\nu} h_{k+1-\nu}\vert 
&\leq \dfrac{n-2}{n+2k}K^{\frac{n}{4}+k+\frac{1}{2}}
5ec\sum_{\nu=2}^{k-2}
\frac{(\frac{n}{4}+\nu-3)!}{[\nu!]^\frac{1}{8}} \\
         &\leq \dfrac{n-2}{n+2k}
K^{\frac{n}{4}+k+\frac{1}{2}}10ec\; 
F\Bigl(n,4,k-2,2,2,\frac{1}{8}\Bigr) \\
         &\leq \dfrac{n-2}{n+2k}
K^{\frac{n}{4}+k+\frac{1}{2}}10ec
\dfrac{2^{\frac{7}{8}}[(k-2)!]^\frac{7}{8}}{n}
\dfrac{\left(\frac{n}{4}+k-3\right)!}{(k-4)!} \\
         &\leq K^{\frac{n}{4}+k+\frac{1}{2}}\,20\,e\,c
\left(\dfrac{n}{4}+k-1\right)!\;\dfrac{1}{[(k+2)!]^{\frac{1}{8}}}\ .
        \end{split}
        \end{equation}

        Then the third term is bounded by
        \begin{equation}\label{Tilde_final_bound_third_term}
            K^{\frac{n}{2}+k+\frac{1}{2}}
\dfrac{\left(\frac{n}{4}+k-1\right)!\;}{[(k+2)!]^{\frac{1}{8}}}\frac{2}{5}\ .
        \end{equation}
        \item Fourth term: we proceed similarly as for the third term, 
then the fourth term is bounded by
        \begin{equation}\label{Tilde_final_bound_fourth_term}
            K^{\frac{n}{2}+k+\frac{1}{2}}
\dfrac{\left(\frac{n}{4}+k-1\right)!\;}{[(k+2)!]^{\frac{1}{8}}}
\frac{{C}_2}{\sqrt{K}} \ .
        \end{equation}
               \item Sixth term: Looking at the terms corresponding to 
$\nu\geq k+1$ and using the bounds from Lemma \ref{tilde_lemma_bounds} we get
        \begin{itemize}
            \item $\nu=k+1$
           
\[                         \dfrac{2}{(k+1)!}\sum_{\substack{n_1+n_2=n+2\\
n_i\geq 4}}\vert \Tilde{g}_{n_1,0}\Tilde{g}_{n_2,1}\vert 
\leq\dfrac{1}{(k+1)!}\sum_{\substack{n_1+n_2=n+2\\
n_i\geq 4}} \dfrac{K^{\frac{n}{2}-\frac{3}{2}+2-\frac{3}{2}}}{n_1^2(n+2-n_1)}
\]
\[
\leq \dfrac{1}{\sqrt{K}}
\dfrac{K^{\frac{n}{2}+k+\frac{1}{2}}}{(k+1)!}\dfrac{1}{n}\leq 
\dfrac{1}{\sqrt{K}} K^{\frac{n}{2}+k+\frac{1}{2}}
\dfrac{\left(\frac{n}{4}+k-1\right)!\;}{[(k+2)!]^{\frac{1}{8}}}\ .
\]            
            \item $\nu=k+2$
\[
             \dfrac{1}{(k+2)!}\sum_{\substack{n_1+n_2=n+2\\
n_i\geq 4}}\vert \Tilde{g}_{n_1,0}\Tilde{g}_{n_2,0}\vert 
\leq\dfrac{1}{4(k+2)!}\sum_{\substack{n_1+n_2=n+2\\
n_i\geq 4}} \dfrac{K^{\frac{n}{2}-\frac{3}{2}+1-\frac{3}{2}}}{n_1^2(n+2-n_1)^2}
\]
\[
\leq \dfrac{1}{4\sqrt{K}}\dfrac{K^{\frac{n}{2}+k+\frac{1}{2}}}{(k+2)!}
\dfrac{1}{n^2}\leq \dfrac{1}{4\sqrt{K}} K^{\frac{n}{2}+k+\frac{1}{2}}
\dfrac{\left(\frac{n}{4}+k-1\right)!\;}{[(k+2)!]^{\frac{1}{8}}}\ .
            \]
\end{itemize}
        
        For the remaining part of the sum, 
we substitute 
$F(n_1,n_2,k,c,c,\frac{1}{4})$, $c\in\lbrace 0,2\rbrace$ 
from the analysis of the third term in the proof of 
Proposition \ref{triviality_prop} by 
$F(n_1,n_2,k-\nu,c,c,\frac{1}{8})$, 
$c\in\lbrace 0,2\rbrace$. Then, the sixth term is bounded by
        \begin{equation}\label{tilde_sixth_first_sum}
            \frac{1}{2\sqrt{n+2k}}\dfrac{C_3}{\sqrt{K}}
K^{\frac{n}{2}+k+\frac{1}{2}}
\sum_{\nu=0}^{k}\frac{1}{\nu!}
\dfrac{\left(\frac{n}{4}+k-1-\nu\right)!}{[(k+2-\nu)!]^{\frac{1}{8}}}\ .
        \end{equation}
One can then extract the terms in the sum corresponding to $\nu\leq 1$ 
and bound them by 
$\frac{\left(\frac{n}{4}+k-1\right)!}{[(k+2)!]^{\frac{1}{8}}}\,$. 
The residual sum is non-zero for $k\geq 2$, it is bounded by
\[
    \sum_{\nu=2}^k\frac{1}{\nu!}
\dfrac{\left(\frac{n}{4}+k-1-\nu\right)!}{[(k+2-\nu)!]^{\frac{1}{8}}}
=\sum_{\nu=0}^{k-2}\frac{1}{(k-\nu)!}
\dfrac{\left(\frac{n}{4}-1+\nu\right)!}{[(\nu+2)!]^{\frac{1}{8}}}\leq 
\dfrac{\left(\frac{n}{4}+k-3\right)!}{[(k-2)!]^{\frac{1}{8}}}e\ .
\]
Then we obtain
\[
    \frac{1}{2\sqrt{n+2k}}\sum_{\nu=2}^k\frac{1}{\nu!}
\dfrac{\left(\frac{n}{4}+k-1-\nu\right)!}{[(k+2-\nu)!]^{\frac{1}{8}}}
\leq\frac{e}{2}\dfrac{1}{\sqrt{n+2k}}
\dfrac{\left(\frac{n}{4}+k-3\right)!}{[(k-2)!]^{\frac{1}{8}}}
\leq\frac{e}{2}\dfrac{\left(\frac{n}{4}+k-1\right)!}{[(k+2)!]^{\frac{1}{8}}}\ .
\]
Finally the sixth term is bounded by
\begin{equation}\label{Tilde_sixth_term}
    K^{\frac{n}{2}+k+\frac{1}{2}}
\dfrac{\left(\frac{n}{4}+k-1\right)!\;}
{[(k+2)!]^{\frac{1}{8}}}\frac{C_4}{\sqrt{K}}\;.
\end{equation}
\item Eighth term: first the term in the sum corresponding 
to $\nu=k+1$ is
        \[
            \dfrac{n}{n+2k}\dfrac{1}{(k+1)!}
\vert\Tilde{g}_{n,0}\tilde{f}_{2,0}\vert\leq 
\dfrac{\sqrt{K}}{16}\dfrac{K^{\frac{n}{2}-\frac{3}{2}}}{2n^2}
\leq K^{\frac{n}{2}+k+\frac{1}{2}}
\dfrac{\left(\frac{n}{4}+k-1\right)!\;}{[(k+2)!]^{\frac{1}{8}}}\frac{1}{32nK}\ .
        \]
        We follow the steps from the proof of 
Proposition \ref{triviality_prop} for the second term but substituting $F(n,4,k-2,2,2,\frac{1}{4})$ 
by $F(n,4,k-2-\nu,2,2,\frac{1}{8})$. Thus we have
        \[
     \dfrac{n}{n+2k}F\left(n,4,k-2-\nu,2,2,\frac{1}{8}\right) \leq
     \dfrac{n}{n+2k} [(k-2)!]^{\frac{7}{8}}\dfrac{4}{n}
\dfrac{\left(\frac{n}{4}+k-\nu-4\right)!}{(k-\nu-4)!}
\]
\[ 
\leq 4 \;\dfrac{\left(\frac{n}{4}+k-\nu-1\right)!}
{[(k+2-\nu)!]^{\frac{1}{8}}}\dfrac{\Big(k(k+1)(k+2)\Big)^{\frac{1}{8}}}{n+2k} 
     \leq \frac{4}{\sqrt{n+2k}} \;
\dfrac{\left(\frac{n}{4}+k-1-\nu\right)!}{[(k+2-\nu)!]^{\frac{1}{8}}}\ .
\]
Therefore we can bound the eighth term by
         \begin{equation}
          \dfrac{K^{\frac{n}{2}+k+\frac{1}{2}}\,{C}_5}{\sqrt{K}\sqrt{n+2k}}
\sum_{\nu=0}^k\dfrac{1}{\nu!}
\dfrac{\left(\frac{n}{4}+k-1-\nu\right)!}{[(k+2-\nu)!]^{\frac{1}{8}}}\ .  
         \end{equation}
       From the analysis of the sixth term, 
we deduce that the eighth term is finally bounded by
        \begin{equation}\label{Tilde_eixth_term}
    K^{\frac{n}{2}+k+\frac{1}{2}}
\dfrac{\left(\frac{n}{4}+k-1\right)!\;}{[(k+2)!]^{\frac{1}{8}}}
\frac{C_6}{\sqrt{K}}\;.
\end{equation}
    \end{itemize}
    Summing (\ref{Tilde_final_bound_third_term}), 
(\ref{Tilde_final_bound_fourth_term}), (\ref{Tilde_sixth_term}) 
and (\ref{Tilde_eixth_term}) with the already treated terms in 
the proof of Proposition \ref{triviality_prop}, we obtain 
\begin{equation}
    \vert \tilde{g}_{n,k+2}\vert 
\leq \left[\frac{2}{5}+\frac{C_7}{K}
+\frac{C_8}{\sqrt{K}}\right]
\dfrac{K^{\frac{n}{2}+k+\frac{1}{2}}}{[(k+2)!]^{\frac{1}{8}}}
\left(\dfrac{n}{4}+k-1\right)!
\leq \dfrac{K^{\frac{n}{2}+k+\frac{1}{2}}}{[(k+2)!]^{\frac{1}{8}}}
\left(\dfrac{n}{4}+k-1\right)!\ .    
\end{equation}

  The bounds for $\Tilde{f}_{2,k}$, for $k\leq 1$, follow 
from Lemma \ref{tilde_lemma_bounds}. The bounds for $k\leq 6$ can be checked by hand noting that we can always factor out 
$\dfrac{1}{\sqrt{K}}$ in the r.h.s of (\ref{FE_tilde_f_2_k}) 
using  the induction hypothesis so that the bound holds 
choosing $K$ sufficiently large. 
For $k> 6$ we will focus on the last two terms in the r.h.s of 
(\ref{FE_tilde_f_2_k}) since the other terms are treated as
in the proof of Proposition \ref{triviality_prop}.
\begin{itemize}
    \item Fourth term: it is bounded by
    \begin{equation}
        \dfrac{\beta_0}{(k+1)}
\dfrac{K^{k+\frac{1}{2}+1}}{K}
\sum_{\nu=0}^k\dfrac{\vert\nu-3\vert!}{(k-\nu)!\;
[\vert\nu-1\vert!]^{\frac{1}{8}}}\ .
    \end{equation}
    The sum can be bounded as follows
    \begin{equation}
    \begin{split}
     \sum_{\nu=0}^k\dfrac{\vert\nu-3\vert!}{(k-\nu)!\;
[\vert\nu-1\vert!]^{\frac{1}{8}}} 
&=\sum_{\nu=3}^k\dfrac{(\nu-3)!}{(k-\nu)!\;[(\nu-1)!]^{\frac{1}{8}}}+\dfrac{6}{k!}
+\dfrac{2}{(k-1)!}+\dfrac{1}{(k-2)!} \\
     &\leq [(k-1)!]^{\frac{7}{8}}\sum_{\nu=3}^k
\dfrac{(\nu-3)!}{(k-\nu)!\;(\nu-1)!}\\
     &\leq [(k-1)!]^{\frac{7}{8}} 
\dfrac{15}{k}\leq \dfrac{15(k-2)!}{[(k-1)!]^{\frac{1}{8}}}\ .
    \end{split}
    \end{equation}
    Taking into account the factor $\frac{1}{k+1}$, 
the fourth term is bounded by
    \begin{equation}\label{tildef_2_fourth}
        K^{k+1+\frac{1}{2}}\dfrac{(k-2)!}{[k!]^{\frac{1}{8}}}\dfrac{C_9}{K}\ .
    \end{equation}
    \item Fifth term: we separate the sum as follows:
    \[
\dfrac{1}{2(k+1)}\sum_{\nu=0}^{k-2}\frac{1}{\nu!}
\sum_{\nu'=0}^{k-\nu}\vert\tilde{f}_{2,\nu'}
\Tilde{f}_{2,k-\nu-\nu'}\vert
+\dfrac{1}{(k+1)(k-1)!}\vert\Tilde{f}_{2,0}\Tilde{f}_{2,1}\vert
+\dfrac{1}{2(k+1)!}\Tilde{f}_{2,0}^2\ .
\]
The bounds for the last two terms follow from Lemma \ref{tilde_lemma_bounds}. 
The sum is bounded as in the proof of Proposition \ref{triviality_prop} by a term proportional to
\begin{equation}
    \dfrac{ K^{k+1+\frac{1}{2}}}{\sqrt{K}}
\sum_{\nu=0}^{k-2}\dfrac{1}{\nu!}\dfrac{(k-2-\nu)!}{[(k-\nu)!]^{\frac{1}{8}}}\;.
\end{equation}
This sum is then treated as in the fourth term. 
Therefore we have the final bound
    \begin{equation}\label{tildef_2_fifth}
K^{k+1+\frac{1}{2}}\dfrac{(k-2)!}{[k!]^{\frac{1}{8}}}\, \dfrac{C_{10}}{\sqrt{K}}\ .
    \end{equation}
\end{itemize}
Then, we obtain 
\begin{equation}
    \vert \tilde{f}_{2,k+1}\vert\leq 
K^{k+1+\frac{1}{2}}\dfrac{(k-2)!}{[k!]^{\frac{1}{8}}}
\dfrac{C_{11}}{\sqrt{K}}\leq K^{k+1+\frac{1}{2}}
\dfrac{(k-2)!}{[k!]^{\frac{1}{8}}}\ .
\end{equation}
\end{proof}

\end{appendices}

\section*{Acknowledgement}
\paragraph{} The authors are much indebted to Riccardo Guida
for his remarks and suggestions.

\printbibliography
\end{document}